\DeclarePairedDelimiter\angles\langle\rangle
\DeclareMathOperator\R{\mathbb{R}}
\DeclareMathOperator\C{\mathbb{C}}
\DeclareMathOperator\Z{\mathbb{Z}}
\DeclareMathOperator\N{\mathbb{N}}
\DeclareMathOperator\Hom{Hom}
\DeclareMathOperator\supp{supp}
\DeclareMathOperator\tr{tr}
\DeclareMathOperator\Ad{Ad}
\DeclareMathOperator\ad{ad}
\DeclareMathOperator\diam{diam}
\DeclareMathOperator{\res}{res}
\DeclareMathOperator{\Tot}{Tot}
\DeclareMathOperator{\sgn}{sgn}
\newcommand{\subsetsim}{\mathrel{\mathpalette\subset@sim\relax}}
\newcommand{\subset@sim}[2]{%
  \vtop{\offinterlineskip\m@th
    \ialign{\hfil$#1##$\hfil\cr
      \subset\cr\noalign{\kern1pt}\sim\cr
    }%
  }%
}
\newcommand{\CB}{{\mathcal B}}
\newcommand{\CO}{{\mathcal O}}
\newcommand{\ZZ}{{\mathbb Z}}
\newcommand{\RR}{{\mathbb R}}
\newcommand{\SA}{{\mathscr A}}
\newcommand{\SAal}{{\mathscr A}_{a \ell}}
\newcommand{\mfkDal}{{\mathfrak{D}}_{al}}
\newcommand{\mfkDpal}{{\mathfrak D}^{\psi}_{al}}
\newcommand{\chf}{{\mathsf f}}
\newcommand{\derA}{{\mathsf A}}
\newcommand{\derB}{{\mathsf B}}
\newcommand{\derC}{{\mathsf C}}
\newcommand{\derE}{{\mathsf E}}
\newcommand{\derF}{{\mathsf F}}
\newcommand{\derG}{{\mathsf G}}
\newcommand{\derH}{{\mathsf H}}
\newcommand{\derQ}{{\mathsf Q}}
\newcommand{\derf}{{\mathsf f}}
\newcommand{\derg}{{\mathsf g}}
\newcommand{\derh}{{\mathsf h}}
\newcommand{\dert}{{\mathsf t}}
\newcommand{\derk}{{\mathsf k}}
\newcommand{\derq}{{\mathsf q}}
\newcommand{\bUM}{{\underline{U(1)}_M}}
\newcommand{\ra}{\rightarrow}
\newtheorem{theorem}{Theorem}[section]
\newtheorem{prop}{Proposition}[section]
\newtheorem{lemma}{Lemma}[section]
\newtheorem*{lemma*}{Lemma}
\newtheorem{remark}{Remark}
\newtheorem{definition}{Definition}
\newtheoremstyle{example}{}{}{}{}{\bfseries}{\smallskip}{\newline}{}
\begin{document}
\title{Quantization of the higher Berry curvature and the higher Thouless pump}
\author{Adam Artymowicz, Anton Kapustin, Nikita Sopenko
\smallskip\\
{\it California Institute of Technology, Pasadena, CA 91125, USA}}
\maketitle
\begin{abstract}
    We show that for families of 1d lattice systems in an invertible phase, the cohomology class of the higher Berry curvature can be refined to an integral degree-3 class on the parameter space. Similarly, for families of $U(1)$-invariant 2d lattice systems in an invertible phase, the higher Thouless pump can be refined to an integral degree-2 class on the parameter space. We show that the 2d Thouless pump can be identified with an excess Berry curvature of a flux insertion.
\end{abstract}
\tableofcontents

\section{Introduction}

A smooth family of gapped Hamiltonians on a finite-dimensional Hilbert space defines a smooth vector bundle on the parameter space: the bundle of ground states \cite{BarrySimon}. Its Chern classes are topological invariants of the family taking values in the integral cohomology of the parameter space. Their de Rham representatives are closed differential forms which are  polynomials in the curvature of the celebrated Berry connection \cite{MBerry}. The simplest of them is the trace of the Berry curvature divided by $2\pi$, which is a closed 2-form with integral periods. It is the de Rham representative of the 1st Chern class of the bundle of ground states. 

It is of considerable interest to generalize the Berry connection and the associated topological invariants to families of gapped infinite-volume systems in $d$ spatial dimensions. From the field theory viewpoint, such topological invariants should describe  topological terms in the effective action for $(d+1)$-dimensional $\sigma$-model obtained by integrating out the gapped degrees of freedom. If a continuous symmetry $G$ is present for all values of the parameters, these terms may also depend on the gauge field for $G$. Such topological terms are known as (equivariant) Wess-Zumino-Witten terms (see \cite{WZWreview} for a brief review), and the Berry connection can be viewed as a special case corresponding to $d=0$ and trivial symmetry group. Non-trivial Wess-Zumino-Witten terms signal the presence of gapless loci in the parameter space \cite{diabolical}. They also probe the topology of the space of gapped systems and thus can be used to test the Kitaev conjecture which posits that spaces of ``invertible" gapped systems in all dimensions fit into a loop spectrum in the sense of homotopy theory \cite{Kitaevlecture}.

Recent works \cite{KapSpoBerry,KapSpoThouless,LocalNoether} constructed some topological invariants of smooth families of infinite-volume gapped lattice systems  They showed how to assign a de Rham class $[\omega^{(d+2)}] \in H^{d+2}(M,\RR)$ to a smooth family of lattice systems on $\RR^d$ parameterized by $M$. Since for $d=0$ this class  reduces to the cohomology class of the curvature of the Berry connection, the generalization to $d>0$ is called the higher Berry class. By the usual Chern-Weil theory, the cohomology class of the Berry curvature is an obstruction for the existence of a global trivialization of the bundle of ground states. Similarly, the higher Berry class is an obstruction for the existence of a smooth family of automorphisms which maps the family of ground states to a constant family \cite{LocalNoether}.
For $G$-equivariant families, where $G$ is a compact connected Lie group, there is an equivariant refinement of higher Berry classes taking values in the equivariant cohomology $H^{d+2}_G(M,\RR)$ \cite{LocalNoether}. The higher Berry classes, as well as the equivariant higher Berry classes for $G = U(1)$, are reviewed in Section \ref{sec:berrydef} below. 

By analogy with the $d=0$ case, one may ask if higher Berry classes are ``quantized'', or more precisely, if they can be refined to integral cohomology classes. A simple argument shows that this is not possible for arbitrary families of gapped systems. Let $d=2$, $M=\{pt\}$ and $G=U(1)$. In this case the higher Berry class takes values in $H^4_{U(1)}({pt},\RR)\simeq \RR$ and is proportional to the Hall conductance \cite{LocalNoether}. It is well known that the Hall conductance of 2d gapped systems is not quantized, in general \cite{LaughlinFQHE}. Nevertheless, it can be shown to be quantized for short-range entangled systems, or more generally, for systems in an invertible phase \cite{hastingsmichalakis,bachmannetal,bachmann2019many,ThoulessHall}. One might hope that for such systems all higher Berry classes can be refined to integral cohomology classes. The only other case where this was shown to be true is $d=1$, $G$ a compact topological group, and $M=S^1$ (with $G$ acting trivially on $S^1$), where the equivariant higher Berry class measures the net charge pumped across a section of a 1d system under a periodic variation of parameters \cite{bachmannetal, BachmannEtAl1DThouless, ThoulessHall}. This quantity is known as the Thouless pump \cite{Thouless}.\footnote{Quantization of the Thouless pump holds for arbitrary gapped 1d families. This is not surprising, since all gapped 1d systems are believed to be short-range entangled.}

In this paper, we show how to construct integral refinements of higher Berry classes in two other interesting situations. The first one (Theorem \ref{thm:1dBerryquantization}) is $d=1$, $G$ trivial, $M$ arbitrary. In this case the higher Berry class takes values in $H^3(M,\RR)$ and we show how to refine it to a class in $H^3(M,\ZZ)$ for families of invertible 1d systems. At least for $M=S^3$, the integrality of the higher Berry class is very natural since it measures the flow of ordinary Berry curvature in a cyclic process \cite{HermeleEtAl}. The second case (Theorem \ref{thm:2dThoulessquantization}) is $d=2$, $G=U(1)$, $M$ arbitrary with a trivial $U(1)$ action. In this case the equivariant higher Berry class takes values in $H^4_{U(1)}(M,\RR)=H^4(M,\RR)\oplus H^2(M,\RR)\oplus H^0(M,\RR)$, where the three components correspond to the non-equivariant higher Berry class, the 2d generalization of the Thouless pump, and the Hall conductance, respectively. We show that for invertible 2d systems the $H^2(M,\RR)$ component can be refined to a class in $H^2(M,\ZZ)$.

Our proof of Theorem \ref{thm:2dThoulessquantization} is based on a new physical interpretation of the 2d Thouless pump as the Berry curvature of a fluxon\footnote{This should be compared to Laughlin's interpretation of the Hall conductance as the charge of a fluxon -- see \cite{Laughlin} for the original argument and \cite{ThoulessHall} for a version Laughlin's argument in the formalism used in this work.}. Given any $U(1)$-invariant state, one can obtain a new $U(1)$-invariant state by inserting a $2\pi$ flux. We will always choose the gauge transformation producing the flux insertion to be concentrated on a line in physical space terminating at the flux insertion point, which we will call the Dirac string. Given a family $\psi_M$ of gapped $U(1)$-invariant systems (with a fixed $U(1)$ action) parameterized by $M$, one may form a new family $\psi_{fluxon}$ over $M$ by performing a flux insertion on each state in the family $\psi_M$. Since $\psi_M$ and $\psi_{fluxon}$ are families of 2-dimensional states, their (ordinary) Berry curvatures are divergent, but because the flux insertion is a point-like object, the \textit{excess} Berry curvature of $\psi_{fluxon}$ should be a well-defined 2-form on $M$. We obtain an expression for it as follows. Performing the flux insertion continuously, we have a family $\psi$ of states on $M\times I$, $I=[0,2\pi]$, which restricts to $\psi_M$ and $\psi_{fluxon}$ on $\partial(M\times I) = M\sqcup M$ (see Figure \ref{fig:psigamma} in Section \ref{sec:2dThoulessQuant}). Let $D$ be a large disc in physical space containing the point where the flux insertion occurs. If $\nu \in \Omega^{3}(M\times I)$ is a 3-form which measures the current of ordinary Berry curvature flowing into the disc $D$ then the excess Berry curvature of $\psi_{fluxon}$ is given by the fiber integral $\int_I\nu\in H^2(M,\RR)$. Since we are taking a fiber integral of $\nu$, we are only interested in its vertical component\footnote{The space of vertical forms on $M\times I$ is the quotient of $\Omega^\bullet(M\times I)$ by those $\omega$ for which $\iota_{X}\omega=0$ for $X=\frac{\partial}{\partial\theta}$.}, which we will call $\nu_{vert}$. This component contains contributions only from the point at which the Dirac string intersects the boundary of $D$, and is thus $O(1)$ in the size of $D$. By contrast, the other components of $\nu$ will contain contributions from the whole boundary of $D$.

In this work, we perform the flux insertion ``at infinity''. This involves moving the flux insertion point off to infinity so that the Dirac string goes along the $y$-axis without terminating (see Figure \ref{fig:psirho} in Section \ref{sec:2dThoulessQuant}). $\psi$ is now a family of states on $M\times S^1$ such that for $x\in M$ and $\theta\in S^1$, $\psi_{(x,\theta)}$ equals $(\psi_M)_{x}$ with a $\theta$-domain wall inserted on the $y$-axis, and $\nu$ measures the Berry curvature pumped along the domain wall. As discussed in \cite{HermeleEtAl}, the pumping of Berry curvature is given by the higher Berry form $\omega^{(3)}$, and so $\nu = \omega^{(3)}$. Only the vertical component $\nu_{vert}$ of this form is well-defined, and in Section \ref{sec:thouless-as-berry} we extract this component and show that it equals $\mu\wedge d\theta$, where $\mu\in \Omega^2(M)$ is a representative of the 2d Thouless pump invariant. In Section \ref{sec:2dThoulessQuant} we show that if the states in question are invertible then the proof of ordinary 1d Berry curvature quantization can be adapted to show that $\nu_{vert}$, and thus the 2d Thouless invariant, is quantized.

Note that the relation $\nu_{vert} = \mu\wedge d\theta$ holds regardless of whether the family of systems in question is in an invertible phase or not. In particular, for topologically ordered 2d systems it may happen that the excess Berry curvature associated with a flux insertion has periods which are fractions of $2\pi$. This is similar to how the charge of a flux insertion can be a fraction if the system is topologically ordered \cite{LaughlinFQHE}.

A construction of the integral refinement of the higher Berry class for 1d systems was announced in \cite{LocalNoether}. A similar integral invariant was also introduced in the thesis \cite{SpeigelThesis}, but the connection to the higher Berry curvature was not proven there. We were informed by Y. Ogata about a different approach to the integral higher Berry class for continuous families of 1d and 2d spin systems \cite{OgataKubota}. While this paper was in preparation, there appeared two papers which discuss the integral higher Berry class in the context of Matrix Product States \cite{MPS1,MPS2}. 

\noindent
{\bf Acknowledgements:}
This research was supported in part by the U.S.\ Department of Energy, Office of Science, Office of High Energy Physics, under Award Number DE-SC0011632. A.K. was also supported by the Simons Investigator Award.\\
{\bf Declarations:}
\textit{Data availability statement}. Data sharing not applicable to this article as no datasets were generated or analysed during the current study. \\
\textit{Conflict of interest}. The authors declare that they have no conflicts of interest.
\\

\section{Framework}
A recent work \cite{LocalNoether} by the last two authors introduced a framework for studying smoothly varying families of ground states of infinite-volume quantum statistical mechanical systems, and showed how this framework produces certain cohomological invariants, the (equivariant) higher Berry invariants. Since these are the subject of this paper, we begin by recalling this machinery. The reader is referred to \cite{LocalNoether} for all proofs.

\subsection{Observables and derivations}
We will be working with quantum lattice systems on the lattice $\Z^d$ for $d>0$ (we will only need $d=1,2$ in what follows)\footnote{All  results can be easily generalized to lattice systems whose sites are an arbitrary Delone subset of $\RR^d$.}. We endow the lattice $\Z^d$ with the $L^\infty$ metric\footnote{This is done purely for convenience. In \cite{LocalNoether} Euclidean metric is used, but since each of the two metrics on $\RR^d$ is upper-bounded by a multiple of the other, all the results in \cite{LocalNoether} remain true for the $L^\infty$ metric.} which we denote $d(\cdot, \cdot)$, and for a site $j$ and an integer $r$ we denote by $B_j(r)$ the ball of radius $r$ around $j$. For a subset $\Lambda\subset \Z^d$ we define $\Lambda(r):=\{j\in \Z^d: d(j,\Lambda) \le r\}$.
	
Fix $D>0$ and associate to each site $j\in \Z^d$ the C$^*$-algebra $\mathscr{A}_j := \mathcal{B}(\C^D)$ of linear operators on $\C^D$. To any finite subset $X$ of $\Z^d$ we associate the algebra $\mathscr{A}_X := \bigotimes_{j\in X}\mathscr{A}_j$. For $X\subset X'$ we have the norm-preserving inclusion $\mathscr{A}_X\subset \mathscr{A}_{X'}$ taking $A\mapsto A\otimes \boldsymbol{1}$ and the union of the resulting net $\mathscr{A}_\ell= \cup_{X}\mathscr{A}_X$ is the \textbf{algebra of local observables}. It is a normed $*$-algebra which can be completed to form the $C^*$-algebra $\mathscr{A}:=\overline{\mathscr{A}_\ell}^{\|\cdot \|}$ which we call the \textbf{algebra of quasilocal observables}. For an infinite subset $X\subset \Z^d$ we write $\mathscr{A}_{X}$ for the norm closure of $\cup\mathscr{A}_{Y\subset X}$, where $Y$ ranges over all finite subsets of $X$. 
	
We denote by $\overline{\tr}(A)$ the state which is defined on local observables $A \in\mathscr{A}_X$ by $\frac{1}{D^{|X|}}\overline{\tr}(A)$, where $|X|$ denotes the cardinality of $X$. We also have, for any (possibly infinite) subset $X\subset \Z^d$ the conditional expectation $\overline{\tr}_{X}$ onto $\mathscr{A}_{X^c}$, which is defined by the condition $\overline{\tr}_{X}(A\otimes B) := \overline{\tr}(A)B$ whenever $A\in \mathscr{A}_Y$ and $B\in \mathscr{A}_Z$ with $Y\subset X$ and $Z\subset X^c$. 

Given two algebras of quasilocal observables $\SA$ and $\SA'$ with on-site dimensions $D$ and $D'$ we define $\SA\otimes\SA'$ to be the norm closure of $\otimes_j (\SA_j\otimes\SA'_j)$. It is also a quasilocal algebra, with on-site dimension $DD'$. Physically, it corresponds to the ``stacking'' of two lattice systems described by algebras $\SA$ and $\SA'$.

In what follows we will rarely refer to the quasilocal algebra. Instead, we will mostly work with a subalgebra of $\mathscr{A}$ obtained by imposing a stricter notion of locality. For each $j\in \Z^d$ and $\alpha \in \Z_{\ge 0}$ we may define a seminorm on $\mathscr{A}_{\ell}$ by 
\begin{align}
    \|A\|_{j,\alpha} := \|A\| + \sup_r(1+r)^\alpha \inf_{B\in \mathscr{A}_{B_j(r)}}\|A-B\|.
\end{align}
Fixing any $j\in \Z^d$ and allowing $\alpha\in \Z_{\ge 0}$ to vary we obtain a family of seminorms. The completion $\mathscr{A}_{al}:= \overline{\mathscr{A}}^{\|\cdot\|_{j,\cdot}}$ with respect to this family of seminorms, which we term the \textbf{algebra of almost-local observables}, is a Fr\'{e}chet space. The seminorms $\|\cdot \|_{j,\alpha}$ and $\|\cdot \|_{k,\alpha}$ are equivalent for any $j,k\in \Z^d$ so the resulting space and its topology do not depend on the choice of $j$.
	
For any $f:\R_{\ge 0}\to \R_{\ge 0}$ we say a quasilocal observable $A \in \mathscr{A}$ is $f$-confined  at a site $j\in \Z^d$ if $\inf_{B\in \mathscr{A}_{B_j(r)}}\|A-B\|\le f(r)$ for all $r\in \Z_{\ge 0}$. A function $f:\R_{\ge 0}\to \R_{\ge 0}$ has \textbf{superpolynomial decay} if $f(r)r^\alpha\to 0$ for all $\alpha\in \Z_{\ge 0}$, and $\mathscr{A}_{al}$ can alternatively be characterized as the set of quasilocal observables that are $f$-confined  on a site $j$ for some function $f$ with superpolynomial decay.

In the most common approach to lattice systems in infinite volume, time-evolution is implemented by a strongly-continuous one-parameter family of automorphisms of $\mathscr{A}$. The generator of these automorphisms is a densely-defined derivation of $\mathscr{A}$. The appearance of densely-defined derivations is unavoidable because $\mathscr{A}$ has no nonzero globally defined outer derivations \cite{Sakai}, and the generator of time-evolution is typically outer. From this perspective the subalgebra $\mathscr{A}_{al}$ acts as a minimal domain of definition for physically relevant derivations\footnote{An analogy can be made with $C^\infty(\R)$, which embeds into larger function spaces like $C(\R)$ (the C$^*$ algebra of continuous functions on $\R$) as the minimal domain of definition for all differential operators.}. In contrast to $\mathscr{A}$, the algebra $\mathscr{A}_{al}$ has many interesting globally-defined outer derivations, some of which we describe here. Below, all derivations are taken to satisfy $D(A^*) = D(A)^*$.

We call a \textbf{brick} in $\Z^d$ any subset of the form $X = \Z^d\cap \prod_{k=1}^d[a_i,b_i]$ for some $a_i,b_i\in \Z$ with $-\infty < a_i < b_i < \infty$. Let $\derF$ be a derivation of $\mathscr{A}_{al}$. Given a brick $X$, $\overline{\tr}_{X^c}\circ \derF|_{\mathscr{A}_X}$ is a derivation of $\mathscr{A}_X$ which is necessarily equal to conjugation by a unique traceless skew-adjoint element of $\mathscr{A}_X$ which we call $\derF_X$. Each $\mathscr{A}_X$ has an inner product given by $(A,B):=\overline{\tr}(A^*B)$, and we define $\derF^X$ as the projection of $\derF_X$ onto the orthogonal complement of $\bigoplus_{Y}\mathscr{A}_Y$, where $Y$ ranges over all bricks strictly contained in $X$. This way we have $\derF_X = \sum_{Y\subset X} \derF^Y$, and the \textbf{brick decomposition} of $\derF$ is the formal sum
\begin{align}
    \derF = \sum_X\derF^X \label{eqn:brick-decomp}
\end{align}
with $X$ ranging over all bricks in $\Z^d$. Using brick decompositions one can define a family of seminorms, indexed by $\alpha\in\Z_{\ge 0}$, on the space of derivations:
\begin{align}
    \|\derF\|_{\alpha} := \sup_{X} (1+\diam(X))^\alpha\|\derF^X\|, \label{eqn:ual-seminorm}
\end{align}
and we call a derivation \textbf{uniformly almost-local} (UAL) if $\|\derF\|_\alpha<\infty$ for all $\alpha\in\Z_{\ge 0}$. We denote the space of UAL derivations by $\mfkDal$ -- it is a Fr\'{e}chet space with respect to the locally convex topology generated by these seminorms.  Furthermore, for any $\derF\in \mfkDal$ and any $A\in \mathscr{A}_{al}$, the sum $\derF(A) = \sum_X\derF^X(A)$ is absolutely convergent in the Fr\'{e}chet topology on $\mathscr{A}_{al}$, and in particular in the uniform topology. It follows that $\overline{\tr}(\derF(A))=0$ for any $\derF\in \mfkDal$ and any $A\in \mathscr{A}_{al}$.

The space of UAL derivations admits a certain kind of resolution by (higher) currents which we now describe. For $n>0$ we define an \textbf{$n$-chain} $\chf $ to be a collection $\chf_{j_1,\hdots,j_n}$ of almost-local observables indexed by $(\Z^d)^n$ that
are
\begin{enumerate}[i)]
    \item Traceless:
    \begin{align}
        \overline{\tr}(\chf_{j_1,\hdots,j_n}) = 0 
    \end{align}
    \item Skew-adjoint:
    \begin{align}
        \chf_{j_1,\hdots,j_n}^* = -\chf_{j_1,\hdots,j_n}
    \end{align}
    \item Skew-symmetric:
    \begin{align}
        \chf_{j_1,\hdots,j_n} = (-1)^{|\sigma|}\chf_{j_{\sigma(1)},\hdots,j_{\sigma(n)}}
    \end{align}
    for any permutation $\sigma$ of $\{1,\hdots, n\}$,
    \item Uniformly localized: for any $\alpha \in \Z_{\ge 0}$ we have
    \begin{align}
        \sup_{j_1,\hdots, j_n\in \Z^d}\sup_{1\le i \le n} \|\chf_{j_1,\hdots,j_n}\|_{j_i, \alpha} < \infty. \label{eqn:chain-seminorm}
    \end{align}
\end{enumerate} 
\begin{remark}
Our grading convention is shifted by $1$ compared to Ref. \cite{LocalNoether}. 
\end{remark}
For $n> 0$ we define $C^{-n}$ as the Fr\'echet space of $n$-chains with seminorms given by (\ref{eqn:chain-seminorm}) for all $\alpha\ge 0$.
We extend this $n=0$ by letting $C^0=\mfkDal$. These form a (non-positively graded) cochain complex with the differential $\partial: C^{-n-1}\ra C^{-n }$ is defined for $n>0$ by 
\begin{equation}
    (\partial \derf)_{j_1,\ldots ,j_n}=\sum_{j_0\in\ZZ^d} \derf_{j_0,\ldots, j_n}.
\end{equation}
For $n=0$ it is defined by
\begin{equation}
    \partial\derf(A) =\sum_{j\in\ZZ^d}[\derf_j,A], 
\end{equation}
for any $A\in \mathscr{A}_{al}$. What's more, there is a graded Lie bracket $C^{-m}\times C^{-n}\ra C^{-m-n}$ which is defined for $m,n>0$ as 
\begin{align}
    \{\derf,\derg \}_{j_1,\hdots,j_{m+n}} := \sum_{\sigma}\frac{\sgn(\sigma)}{m!n!}[\derf_{j_{\sigma(1)},\hdots, j_{\sigma(m)}}, \derg_{j_{\sigma(m+1)},\hdots, j_{\sigma(m+n)}}],
\end{align}
where the sum is taken over all permutations on $m+n$ symbols.
For $m=0$ and $n>0$ we let
\begin{equation}
\{\derF,\derf\}_{j_1,\ldots,j_n}=[\derF, \derf_{j_1,\ldots,j_n}],
\end{equation}
while for $m=n=0$ we let $\{\derF,\derG\}=[\derF,\derG]$. The differential $\partial$ and the graded bracket $\{\cdot,\cdot\}$ make $C^\bullet=\oplus_{n=0}^\infty C^{-n}$ into a (non-positively graded) dg-Lie algebra.

We conclude this section with a description of the class of automorphisms of $\mathscr{A}_{al}$ obtained by integrating paths of UAL derivations. Let $\derF:\R\to \mfkDal$ be a smooth path of derivations, denoted $t \mapsto \derF_t$. By Prop. E.1 of \cite{LocalNoether}, for every $A\in\mathscr{A}_{al}$ there is a unique smooth path $t\mapsto A_t \in \mathscr{A}_{al}$ of observables satisfying $A_0=A$ and $\frac{dA_t}{dt}= \derF_t(A_t)$. For each $t\in \R$ this gives a map $A\mapsto A_t$ which is a $*$-automorphism of $\mathscr{A}_{al}$ that extends to a $*$-automorphism of $\mathscr{A}$. We denote it 
\begin{align}
     \tau\exp\left(\int_0^t \derF_tdt\right). \label{eqn:texp}
\end{align}
We call automorphisms obtained in this way \textbf{locally generated automorphisms} or LGAs for short. These have an action on $\mfkDal$ by conjugation which we write as 
\begin{align}
    \alpha(\derF) := \alpha\circ \derF \circ \alpha^{-1}.
\end{align}
This action can be promoted to an action on the entire complex $C^\bullet$ which commutes with the differential by allowing an LGA to act on an $n$-chain elementwise: $\alpha(\derf)_{j_1,\hdots,j_n} :=  \alpha(\derf_{j_1,\hdots,j_n})$. 

\subsection{States}\label{sec:families}

If $\psi$ is a state on $\mathscr{A}$ and $\alpha$ is an LGA, we write $\psi^{\alpha} := \psi\circ \alpha$. The fixed points of this action will play an important role in what follows. We say $\alpha$ preserves $\psi$ if $\psi^\alpha = \psi$. A derivation $\derF\in \mfkDal$ is said to preserve $\psi$ if $\psi(\derF(A))=0$ for all $A\in \mathscr{A}_{al}$. An observable $A\in\mathscr{A}_{al}$ is said to preserve $\psi$ if $\psi(\derF(A))=0$ for all $\derF\in\mfkDal$. We write $\mfkDal^\psi$ and $\SA_{al}^\psi$ for the derivations and observables that preserve $\psi$; these are closed subspaces of $\mfkDal$ and $\SA_{al}$ in their respective Fr\'echet topologies. Unitary elements of $\mathscr{A}_{al}^\psi$ satisfy the following properties:
\begin{lemma}\label{lem:local-unitary-preserving}
    If $V\in U(\mathscr{A}_{al})$ preserves $\psi$ then $\psi(V) \in U(1)$, and $\psi(VA) = \psi(V)\psi(A)$ for any $A\in \mathscr{A}_{al}$.
\end{lemma}
\begin{proof}
    Since $\mathscr{A}_{al}$ is norm-dense in $\mathscr{A}$, we have $\psi(VB) = \psi(BV)$ for any $B\in \mathscr{A}$.
    Let $A\in \mathscr{A}_{al}$ and let ($\mathcal{H}$, $\pi$, $\Omega$) be the GNS triple of $\psi$. Since $\psi$ is pure, $\pi$ is irreducible. In particular $\pi(\mathscr{A})$ is weakly dense in the bounded operators on $\mathcal{H}_\psi$, so there is a sequence $\{P_k\}_{k\in \Z_{\ge 0}}$ of elements of $\mathscr{A}$ such that $\pi(P_k)$ converges weakly to $P = |\Omega\rangle\langle\Omega|$, and we have
    \begin{align}
        \psi(VA) &= \lim_{k\to \infty}\psi(VAP_k) \nonumber \\
        &= \lim_{k\to \infty}\psi(AP_kV)\nonumber \\
        &= \psi(A)\psi(V),
    \end{align}
    which proves the second statement. The first follows from $\overline{\psi(V)}\psi(V) = \psi(V^*V) = 1$.
\end{proof}

The space $\mfkDal^\psi \subset \mfkDal$ is a Lie subalgebra which can be resolved to a dg-Lie subalgebra $C_\psi^\bullet \subset C^\bullet$. We put $C_\psi^0:=\mfkDal^\psi \subset \mfkDal$, while for $k>0$ we define $C_\psi^{-k}$ to be the (closed) subspace of $k$-chains $\derf$ for which $\derf_{j_1,\hdots,j_k}\in \mathscr{A}^\psi_{al}$ for all $j_1,\hdots, j_k \in \Z^d$. It is easy to see that $C_\psi^\bullet$ is preserved by the differential $\partial$ and is closed with respect to the bracket $\{\,,\}$.

We will be interested in several special classes of pure states. For states $\psi$,$\psi'$ on quasilocal  algebras $\SA$ and $\SA'$, we write $\psi\otimes \psi'$ for the resulting state on the quasilocal  algebra $\SA\otimes \SA'$.

\begin{definition}
    Let $\psi$ be a pure state of $\SA$. We say $\psi$ is
    \begin{enumerate}[i)]
        \item \textbf{factorized} if for each $j\in \Z^d$ there is a pure state $\psi_j$ on $\mathscr{A}_j$ such that $\psi|_{\mathscr{A}_{j}} = \psi_j$,
        \item \textbf{short-range entangled} (SRE for short) if there exists an LGA $\alpha$ such that $\psi\circ\alpha$ is a factorized pure state,
        \item \textbf{invertible} if there is another state $\psi'$ such that $\psi\otimes \psi'$ is SRE,
        \item \textbf{gapped} if there is a $\derH \in \mfkDal$ such that $\psi$ is a gapped ground state of $\derH$. That is, there exists $\Delta>0$ such that $-i \psi(A^* \derH(A))\geq \Delta (\psi(A^*A)-|\psi(A)|^2)$ for all $A\in\SAal$. 
    \end{enumerate}
\end{definition}
Factorized states model trivial systems. Short-range entangled states model systems in a trivial topological phase as they can be prepared from factorized states by a local Hamiltonian evolution \cite{QImeetsQM}. Invertible states model systems in invertible topological phases as introduced by A.Kitaev \cite{Kitaevlecture}. These are phases that have inverses, i.e., it is possible to stack the system with another system, such that the composite is in a trivial phase. 
\begin{prop}
Every invertible state is gapped.
\end{prop}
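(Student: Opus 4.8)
The plan is to reduce to the already-established fact that every SRE state is gapped, and then to \emph{descend} a gapped parent derivation from the stacked system $\mathscr{A}\otimes\mathscr{A}'$ to $\mathscr{A}$ by applying the slice map $\id\otimes\psi'$ termwise to its brick decomposition. By the definition of invertibility there is a pure state $\psi'$ of some $\mathscr{A}'_{al}$ such that $\psi\otimes\psi'$ is SRE; since factorized pure states are gapped and gappedness is preserved by composition with an LGA, $\psi\otimes\psi'$ is gapped, so there are a UAL derivation $\derD$ of the quasilocal algebra $\mathscr{A}\otimes\mathscr{A}'$ and a $\Delta>0$ with
\[
-i\,(\psi\otimes\psi')\big(\CB^*\derD(\CB)\big)\;\ge\;\Delta\big((\psi\otimes\psi')(\CB^*\CB)-|(\psi\otimes\psi')(\CB)|^2\big)
\]
for all $\CB\in(\mathscr{A}\otimes\mathscr{A}')_{al}$. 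Write $\derD=\sum_X\derD^X$ for the brick decomposition, with $\derD^X\in(\mathscr{A}\otimes\mathscr{A}')_X=\mathscr{A}_X\otimes\mathscr{A}'_X$ and $\sup_X(1+\diam X)^\alpha\|\derD^X\|<\infty$ for every $\alpha\in\Z_{\ge0}$.

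Let $E:=\id_{\mathscr{A}}\otimes\psi'\colon\mathscr{A}\otimes\mathscr{A}'\to\mathscr{A}$ be the corresponding slice map; it is $*$-linear, norm-nonincreasing, a bimodule map over $\mathscr{A}\otimes\boldsymbol1$ in the sense that $E\big((\CA\otimes\boldsymbol1)\,T\,(\CA'\otimes\boldsymbol1)\big)=\CA\,E(T)\,\CA'$, and it satisfies $\psi\circ E=\psi\otimes\psi'$ directly from the definition of the product state. Set $h_X:=E(\derD^X)\in\mathscr{A}_X$ (skew-adjoint, and traceless after subtracting a scalar, which changes nothing) and $\derH:=\sum_X[h_X,\,\cdot\,]$. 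From $\|h_X\|\le\|\derD^X\|$ one gets $\sup_X(1+\diam X)^\alpha\|h_X\|<\infty$ for all $\alpha$, so $\derH$ is a well-defined UAL derivation of $\mathscr{A}_{al}$; and applying the continuous map $E$ termwise to $\derD(\CA\otimes\boldsymbol1)=\sum_X[\derD^X,\CA\otimes\boldsymbol1]$ and using the bimodule property yields the identity $\derH(\CA)=E\big(\derD(\CA\otimes\boldsymbol1)\big)$ for every $\CA\in\mathscr{A}_{al}$.

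It then remains to check the gap inequality for the pair $(\psi,\derH)$. Given $\CA\in\mathscr{A}_{al}$, we have $\CA\otimes\boldsymbol1\in(\mathscr{A}\otimes\mathscr{A}')_{al}$, so we may substitute $\CB=\CA\otimes\boldsymbol1$ into the displayed inequality. Using the bimodule property once more, $\CA^*\,E\big(\derD(\CA\otimes\boldsymbol1)\big)=E\big((\CA\otimes\boldsymbol1)^*\derD(\CA\otimes\boldsymbol1)\big)$; applying $\psi$ and $\psi\circ E=\psi\otimes\psi'$ turns the left-hand side into $-i\,\psi(\CA^*\derH(\CA))$, while the right-hand side becomes $\Delta\big(\psi(\CA^*\CA)-|\psi(\CA)|^2\big)$ because $(\psi\otimes\psi')\big((\CA\otimes\boldsymbol1)^*(\CA\otimes\boldsymbol1)\big)=\psi(\CA^*\CA)$ and $(\psi\otimes\psi')(\CA\otimes\boldsymbol1)=\psi(\CA)$. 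Hence $\psi$ is a gapped ground state of $\derH\in\mfkDal$, and $\psi$ is gapped.

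The one step that is not purely formal is the assertion that $\derH\in\mfkDal$. Applying $E$ termwise does not reproduce the \emph{canonical} brick decomposition of $\derH$ (the $h_X$ need be neither orthogonal to the smaller bricks nor equal to the true brick components $\derH^X$), so one has to re-expand $\derH$ and show the seminorms $\|\derH\|_\alpha$ remain finite. I expect this to be the main, though routine, obstacle: it follows from $\|h_X\|\le\|\derD^X\|$ together with the polynomial bound on the number of bricks of a given diameter containing a fixed site, which is the usual mechanism by which a family of local terms with superpolynomially decaying norms generates a UAL derivation. Everything else in the argument is bookkeeping with the defining inequality of a gapped state.
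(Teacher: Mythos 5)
Your proposal is correct and follows essentially the same route as the paper's proof: stack $\psi$ with an inverse $\psi'$, use that the SRE state $\psi\otimes\psi'$ is gapped (factorized states are gapped and gappedness is LGA-invariant), partially average the parent derivation over $\psi'$ via the slice map, and verify the gap inequality by testing on $\CA\otimes\boldsymbol{1}$, exactly as in the paper's chain of (in)equalities. The only difference is bookkeeping: the paper establishes that the averaged derivation is UAL by pushing the partial average through $1$-chains and the confinement machinery, whereas you re-sum the termwise-averaged brick decomposition directly; your sketched argument for this (contractivity $\|h_X\|\le\|\derD^X\|$ plus the polynomial count of bricks of given diameter containing a fixed brick, noting that $\overline{\tr}_{Z^c}(h_X)$ lands in a strictly smaller brick unless $X\supseteq Z$) is indeed routine and closes the gap you flagged.
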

\begin{proof}
Given a state $\psi \otimes \psi'$ and an observable $A \in \SA \otimes \SA'$, one can define a partial average $\psi'(A) \in \SA$ that on observables of the form $\CO \otimes \CO' \in \SA \otimes \SA'$ is given by $\psi'(\CO \otimes \CO'):=  \psi'(\CO') \CO$. The value on general observables is obtained by linear extension, and it is a standard fact that the resulting map is a conditional expectation; in particular, we have $\|\psi'(A)\| \leq \|A\|$. For any $A \in \SAal$ and $r \geq 0$, we have
\begin{align}
\inf_{B \in \SA_{B_j(r)}} \|B - \psi'(A)\| \leq \inf_{B \in (\SA\otimes\SA')_{B_j(r)}} \|\psi'(B-A)\| \leq \inf_{B \in (\SA\otimes\SA')_{B_j(r)}} \|B-A\|,
\end{align}
and so partial averaging takes almost-local observables to almost-local observables. We also define partial averaging of a derivation $\derF\in \mfkDal$ by $\psi'(\derF)(A) = \psi'(\derF(A\otimes \boldsymbol{1}))$ for any $A \in \SAal$, which is again a derivation because $\psi'$ is a conditional expectation.

Let $(\psi',\SA')$ be an inverse of $(\psi,\SA)$, and let $\alpha$ be an LGA on the composite system $\SA\otimes\SA'$ such that $\Psi_0 := (\psi \otimes \psi') \circ \alpha$ is factorized. Let us choose a UAL derivation $\derF$ for the composite system such that $\Psi_0$ is a gapped ground state for $\derF$ with a gap greater than $\Delta > 0$ (we can choose $\derF$ to be $\partial \chf$ for an on-site $\chf \in C^1$). Then $(\psi \otimes \psi')$ is a gapped ground state for $\alpha(\derF)$. Let $\derH$ be a UAL derivation of $\SA$ obtained from $\alpha(\derF)$ by partial averaging over $\psi'$. Then for any $A \in \SAal$ we have
\begin{multline}
-i \psi(A^* \derH(A) ) = -i (\psi \otimes \psi')( \alpha( B^*) \alpha(\derF(B)) )  = -i \Psi_0( B^* \derF(B)) \geq \\ \geq \Delta \left(  \Psi_0(B^*B ) -|\Psi_0(B)|^2 \right) = \Delta \left(\psi( A^*A )-|\psi(A)|^2 \right)
\end{multline}
where $B = \alpha^{-1}(A \otimes 1)$. Thus, $\psi$ is a gapped ground state for $\derH$ with a gap greater than $\Delta > 0$.    
\end{proof}

It is believed that every 1d gapped state is invertible. For $d>1$ there are many examples of gapped states which are not invertible (for example, topologically ordered states).  

We now turn to the situation when the state is a smooth function of some parameter space $M$, which we take to be a smooth manifold. The set $C^\infty(M,C^\bullet)$ of smooth functions\footnote{That is, functions $f^k:M\to C^k$ for each $k\le 0$ such that in any set of smooth coordinates on $M$ all partial derivatives of $f^k$ exist and are continuous.} $M\to C^\bullet$ is a cochain complex valued in $C^\infty(M)$-modules. Suppose that $\{\psi_x\}_{x\in M}$ is a family of states parametrized by points of $M$. Then the set
\begin{align}
    C^\infty(M, C_\psi^\bullet) := \{f\in C^\infty(M, C^\bullet) : f(x) \in C_{\psi_x}^\bullet \ \forall x\in M\}
\end{align} 
is another cochain complex valued in $C^\infty(M)$-modules. For any $k>0$, we set $\Omega^{k}(M, C^\bullet) := \Hom_{C^\infty(M)}(\wedge^kTM,C^\infty(M, C^\bullet))$. Similarly, we define $\Omega^{k}(M, C_\psi^\bullet):= \Hom_{C^\infty(M)}(\wedge^kTM$,$C^\infty(M, C_\psi^\bullet))$.
\begin{definition}\label{def:smooth-gapped}
    Let $\psi = \{\psi_x\}_{x\in M}$ be family of states indexed by points of a smooth manifold $M$. We say $\psi$ is \textbf{smooth} if for every $A\in\mathscr{A}_{al}$ the function $x\mapsto \psi_x(A)$ is a smooth function on $M$, and there is a $\derG\in \Omega^1(M,\mfkDal)$ such that
        \begin{align}
            d\psi(A) = \psi(\derG(A)) \hspace{4mm} \forall A\in \mathscr{A}_{al}. \label{eqn:psiGparallel}
        \end{align}
\end{definition}
We say $\psi$ is \textbf{parallel} with respect to $\derG$ if (\ref{eqn:psiGparallel}) holds. We will sometimes write $(\psi,\derG)$ for a smooth family when we want to specify a particular $\derG$ with respect to which it is parallel.

We say a smooth family of states is factorized, SRE, or invertible if it is so pointwise. If $M$ is connected, then a smooth family $\psi$ is SRE or invertible if it is so for any one $x_0\in M$. This happens because two states connected by a smooth path $\gamma:[0,1]\to M$ are related by the LGA obtained by exponentiating (as in (\ref{eqn:texp})) $\derG$ along $\gamma$. 

We say a smooth family $\psi$ is gapped if there is a $\derH\in C^\infty(M,\mfkDal)$ such that $\psi_x$ is a gapped groundstate of $\derH(x)$ for each $x\in M$. If $M$ is connected, then $\psi$ is gapped iff it is gapped for any one $x_0\in M$ (this follows from a partition of unity argument, but we do not prove this here since we will not need this fact). Conversely, if $\psi$ is a family of states such that $\psi_x(A)$ is a smooth function of $x\in M$ for every $A\in \mathscr{A}_{al}$, and if there is a $\derH\in C^\infty(M,\mfkDal)$ such that $\psi_x$ is a gapped groundstate of $\derH(x)$ for each $x\in M$, then under some extra technical assumptions\footnote{See Assumption 1.2 in \cite{moon2020automorphic}.} $\psi$ can be shown to be smooth \cite{LocalNoether, moon2020automorphic}.

We conclude this section with a discussion of smooth families of LGAs. 
\begin{definition}
    A family of LGAs $\alpha = \{\alpha_x\}_{x\in M}$ is \textbf{smooth} if for every $A\in \mathscr{A}_{al}$ the map $M \to \mathscr{A}_{al}$ taking $x\mapsto \alpha_x(A)$ is smooth, and there is a $\derG\in \Omega^1(M,\mfkDal)$ such that
    \begin{align}
        d\alpha(A) = \alpha(\derG(A)) \hspace{4mm} \forall A\in \mathscr{A}_{al}.
    \end{align}
    If such a $\derG$ exists, it is unique, and we denote it by $\alpha^{-1}d\alpha$.
\end{definition}
The most natural way to produce an LGA is to integrate a $\mfkDal$-valued 1-form along a path, as in (\ref{eqn:texp}). As we will see below, this can be extended coherently to the setting of smooth families of LGAs. Let $M$ be a manifold and $I = [0,r]$ an interval. Define the vertical complex $\Omega^\bullet(M\times I,C^\bullet)_{vert}$ as the quotient of $\Omega^\bullet(M\times I,C^\bullet)$ by the set of elements $\mathsf{a}$ for which $\iota_{\frac{\partial}{\partial \theta}}\mathsf{a}= 0$, and we write $\mathsf{a}_{vert}$ for the image of $\mathsf{a}$ under the projection to $\Omega^\bullet(M\times I,C^\bullet)_{vert}$, followed by the obvious inclusion back into $\Omega^\bullet(M\times I,C^\bullet)$. Write $j_s:M\to M\times I$ for the inclusion as $M\times \{s\}$.
\begin{prop}\label{lem: path-ordered-exp}
    Let $M$ be a manifold and let $\derG \in \Omega^1(M\times I, \mfkDal)$. Then there is a unique smooth family of LGAs $\alpha$ on $M\times I$ satisfying $\alpha\circ j_0 = \boldsymbol{1}$ and
    \begin{align}
        \frac{d}{ds}\alpha(A) = \alpha(\iota_{\frac{\partial}{\partial s}}\derG(A)) \hspace{4mm} \forall A\in \mathscr{A}_{al}, \label{eqn:lem:path-ordered-exp}
    \end{align}
    where $s$ is the coordinate on $I$.
\end{prop}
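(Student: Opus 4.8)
The plan is to build $\tilde\alpha$ fiberwise over $M$ and then check that this fiberwise construction assembles into a smooth family of LGAs. For each fixed $x\in M$ the assignment $s\mapsto\iota_{\partial/\partial s}\derG|_{(x,s)}$ is a smooth path $[0,1]\to\mfkDal$, so the well-posedness of the equation $\frac{dA_s}{ds}=\derF_s(A_s)$ recalled around \eqref{eqn:texp} (applied with $\derF_s=\iota_{\partial/\partial s}\derG|_{(x,s)}$) produces a unique automorphism $\tilde\alpha_{(x,s)}$ of $\mathscr{A}_{al}$, extending to $\mathscr{A}$, with $\tilde\alpha_{(x,0)}=\boldsymbol 1$ and $\frac{d}{ds}\tilde\alpha_{(x,s)}(A)=\tilde\alpha_{(x,s)}\big(\iota_{\partial/\partial s}\derG|_{(x,s)}(A)\big)$ for all $A\in\mathscr{A}_{al}$. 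Uniqueness of $\tilde\alpha$ is then immediate: any family satisfying \eqref{eqn:lem:path-ordered-exp} and $\tilde\alpha\circ j_0=\boldsymbol 1$ restricts on each slice $\{x\}\times[0,1]$ to a solution of this equation with the same initial datum, so it coincides with the one just constructed.

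It remains to show $\tilde\alpha$ is \emph{smooth}, i.e. that $(x,s)\mapsto\tilde\alpha_{(x,s)}(A)$ is smooth $M\times[0,1]\to\mathscr{A}_{al}$ and that there is $\derG'\in\Omega^1(M\times[0,1],\mfkDal)$ with $d\tilde\alpha(A)=\tilde\alpha(\derG'(A))$ (this $\derG'$ will then be $\tilde\alpha^{-1}d\tilde\alpha$). Both statements are local in $M$, so one may work in a chart or along a single smooth curve $c$ in $M$. The $ds$-component of $\derG'$ is forced to be $\iota_{\partial/\partial s}\derG$ by \eqref{eqn:lem:path-ordered-exp}, and differentiating \eqref{eqn:lem:path-ordered-exp} in the $M$-directions shows that the horizontal part $\iota_V\derG'$, for a vector field $V$ on $M$, must solve the linear inhomogeneous variational equation
\[
\tfrac{d}{ds}\,\iota_V\derG'\big|_{(x,s)}=\{\iota_V\derG'|_{(x,s)},\ \iota_{\partial/\partial s}\derG|_{(x,s)}\}+\iota_V d_M\big(\iota_{\partial/\partial s}\derG\big)\big|_{(x,s)},\qquad\iota_V\derG'\big|_{(x,0)}=0,
\]
where $d_M$ is the exterior derivative along $M$. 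This has a unique solution in the Fr\'echet Lie algebra $\mfkDal$, given by the Duhamel formula $\iota_V\derG'|_{(x,s)}=\int_0^s\big(\tilde\alpha_{(x,s)}^{-1}\circ\tilde\alpha_{(x,\sigma)}\big)\big(\iota_V d_M(\iota_{\partial/\partial s}\derG)|_{(x,\sigma)}\big)\,d\sigma$, where $\beta(\cdot)$ denotes the conjugation action of the LGA $\beta$ on $\mfkDal$; the right-hand side is manifestly $C^\infty(M)$-linear in $V$, hence defines an element of $\Omega^1(M\times[0,1],\mfkDal)$.

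To close the loop I would set, along a curve $c$, $W_{u,s}(A):=\tilde\alpha_{(c(u),s)}\big((\iota_{c'(u)}\derG'|_{(c(u),s)})(A)\big)$ and verify by direct differentiation that $W$ satisfies the linear $s$-ODE that $\partial_u\tilde\alpha_{(c(u),s)}(A)$ is forced to satisfy (obtained by differentiating \eqref{eqn:lem:path-ordered-exp} in $u$), with matching initial value $0$. Uniqueness of solutions then identifies $\partial_u\tilde\alpha_{(c(u),s)}(A)=W_{u,s}(A)$, which gives existence of the first $M$-derivative and the relation $d\tilde\alpha(A)=\tilde\alpha(\derG'(A))$ in the $M$-directions. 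Iterating the same argument on the (again linear, smoothly parametrized) equations satisfied by $W$ yields all higher mixed partials together with their continuity, i.e. joint smoothness; combined with $\tilde\alpha\circ j_0=\boldsymbol 1$ this shows $\tilde\alpha$ is a smooth family of LGAs.

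The main obstacle is precisely this last analytic point: smooth dependence on parameters for ODEs valued in the Fr\'echet space $\mfkDal$ — justifying differentiation under the equation, convergence of the difference quotients, and continuity of all the derivatives, carried out seminorm by seminorm with Gronwall-type estimates for the norms $\|\cdot\|_\alpha$ and the companion estimates on $\mathscr{A}_{al}$. This is where one invokes (or re-runs) the well-posedness machinery of \cite{LocalNoether}. Everything else — uniqueness, the Duhamel formula, the $C^\infty(M)$-linearity of $\derG'$, and the identity $d\tilde\alpha=\tilde\alpha\circ\derG'$ once joint smoothness is in hand — is routine bookkeeping.
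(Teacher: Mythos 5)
Your proposal is correct and follows essentially the same route as the paper: the paper also constructs $\tilde\alpha$ fiberwise as $\tau\exp(\int_0^s\gamma_x^*\derG)$ and then obtains smoothness from the explicit formula for $\tilde\alpha^{-1}d\tilde\alpha$, which is exactly your Duhamel expression (the paper cites it as Proposition E.2 of \cite{LocalNoether}, equation (\ref{eqn:ada}), rather than re-deriving the variational equation). The analytic point you flag — smooth dependence of Fr\'echet-valued ODE solutions on parameters — is likewise deferred to the machinery of \cite{LocalNoether} in the paper's own proof.
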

\begin{proof}
    Let $\gamma_x:I\to M$ be the function taking $s\mapsto (x,s)$. By \cite{LocalNoether} Prop. E.1, for each $x\in M$ there is a unique solution to (\ref{eqn:lem:path-ordered-exp}) with $\alpha\circ j_0=\boldsymbol{1}$ which we denote by $\alpha_{(x,s)}$
    By \cite{LocalNoether} Proposition E.2 we have
    \begin{align}
        (\alpha^{-1}d\alpha)_{(x,t)} = -\int_0^t\alpha_{(x,t)}^{-1}\circ \alpha_{(x,s)}(\iota_{\frac{\partial}{\partial s}}d(\derG_{vert})_{(x,s)})ds + \derG_{vert} \ \  \in \Omega^1(M\times I,\mfkDal). \label{eqn:ada}
    \end{align}
\end{proof}
\begin{definition}
    Given $\derG\in \Omega^1(M\times I, \mfkDal)$ and $\alpha$ as in Proposition \ref{lem: path-ordered-exp}, we denote  $\tau\exp(\int_0\derG):= \alpha$, while for $s\in I$ we write $\tau\exp(\int_0^s\derG):= \alpha\circ j_s$.
\end{definition}
Notice that $\tau\exp\int_0\derG$ depends only on the vertical component $\derG_{vert}$ of $\derG$. We close this section with a description of a gauge action of smooth families of LGA on on smooth families of states (the proof is straightforward and is omitted):
\begin{prop}\label{prop:gauge-states}
    If $(\psi,\derG)$ is a smooth family of states on $M$ and $\alpha$ is a smooth family of LGAs on $M$, then $\psi^\alpha = \psi\circ\alpha$ is parallel with respect to 
    \begin{align}
        \derG^{\alpha} &:= \alpha^{-1}(\derG) +\alpha^{-1}d\alpha.
    \end{align}
\end{prop}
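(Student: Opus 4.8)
The plan is to verify the parallel equation $d\psi^\alpha(A) = \psi^\alpha(\derG^\alpha(A))$ for each fixed $A \in \mathscr{A}_{al}$ by differentiating the scalar function $x \mapsto \psi_x(\alpha_x(A))$ on $M$. Because $\alpha$ is a smooth family of LGAs the map $x \mapsto \alpha_x(A)$ is a smooth $\mathscr{A}_{al}$-valued function, and the evaluation pairing of a state against an almost-local observable is separately continuous, so the Leibniz rule applies and gives
\[
  d\big(\psi(\alpha(A))\big) = (d\psi)\big(\alpha(A)\big) + \psi\big(d(\alpha(A))\big),
\]
where in the first term $d\psi$ is contracted pointwise against the observable $\alpha_x(A)$, and the second term features the $\mathscr{A}_{al}$-valued $1$-form $d(\alpha(A))$.

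Next I would rewrite the two terms. For the first, the parallel hypothesis on $\psi$ says that at every point $x$ and against every tangent vector $d\psi$ agrees, as a functional on $\mathscr{A}_{al}$, with $\psi_x$ precomposed with the corresponding component of $\derG_x$; evaluating this identity on $\alpha_x(A)$ yields $(d\psi)(\alpha(A)) = \psi\big(\derG(\alpha(A))\big)$. For the second, the definition of a smooth family of LGAs gives $d(\alpha(A)) = \alpha\big((\alpha^{-1}d\alpha)(A)\big)$. It then remains to recognize $\derG^\alpha$: in the first term insert $\alpha_x\circ\alpha_x^{-1} = \id$ just before $\derG_x$ and absorb the outer $\alpha_x$ into $\psi_x$, producing $\psi^\alpha\big((\alpha^{-1}(\derG))(A)\big)$ with $\alpha^{-1}(\derG) = \alpha^{-1}\circ\derG\circ\alpha$; the second term is already $\psi^\alpha\big((\alpha^{-1}d\alpha)(A)\big)$. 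Adding and using linearity of $\psi^\alpha$ and of the derivation action gives $d\psi^\alpha(A) = \psi^\alpha\big((\alpha^{-1}(\derG) + \alpha^{-1}d\alpha)(A)\big) = \psi^\alpha(\derG^\alpha(A))$, which is the claim.

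The mathematical content is thus a single use of the product rule, but a few points deserve attention. The Leibniz step should be justified from the framework's smoothness definitions for families of states and of LGAs together with continuity of the evaluation pairing; and feeding the $x$-dependent observable $\alpha(A)$ into $\derG$ is legitimate since each $\derG_x \in \mfkDal$ acts continuously on $\mathscr{A}_{al}$, so $\derG(\alpha(A)) \in \Omega^1(M,\mathscr{A}_{al})$. The main obstacle --- though a minor one --- is confirming that $\derG^\alpha$ is a bona fide element of $\Omega^1(M,\mfkDal)$, so that ``parallel with respect to $\derG^\alpha$'' even makes sense: each $\alpha_x^{-1}\derG_x\alpha_x$ lies in $\mfkDal$ because LGAs act on $\mfkDal$ by conjugation, and $\alpha^{-1}d\alpha \in \Omega^1(M,\mfkDal)$ by definition, but one should check that $x \mapsto \alpha_x^{-1}$ is itself a smooth family of LGAs so that the conjugate $1$-form $\alpha^{-1}(\derG)$ is smooth in $x$; this follows from smoothness of $\alpha$ together with the continuity of inversion and of conjugation on the relevant spaces. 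Once that is in hand, the computation above closes the proof.
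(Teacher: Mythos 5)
Your proposal is correct and takes the route one would expect: the paper itself states this proposition without proof (deferring to the reference \cite{LocalNoether}), and the argument there is the same computation you give — differentiate $x\mapsto\psi_x(\alpha_x(A))$, use the parallel condition $d\psi(B)=\psi(\derG(B))$ on the frozen observable $B=\alpha_x(A)$ together with $d\alpha(A)=\alpha((\alpha^{-1}d\alpha)(A))$, and reassemble into $\psi^\alpha(\derG^\alpha(A))$. The only point worth tightening is the Leibniz step: separate continuity of the state--observable pairing is not by itself enough, and the clean justification is the two-variable argument (take partial derivatives in the state slot via the parallel condition and in the observable slot via norm-smoothness of $x\mapsto\alpha_x(A)$, then check joint continuity of these partials using continuity of the action of $\derG\in\Omega^1(M,\mfkDal)$ on $\SAal$), which is exactly the check you gesture at.
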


\subsection{Higher Berry curvatures and classes}\label{sec:berrydef}
We can now state the main result of \cite{LocalNoether}, which allows for the construction of the invariants which are the subject of this paper. Recall that a cochain complex $(K^\bullet,\partial)$ is nullhomotopic if there is a map $h:K^\bullet\to K^{\bullet-1}$ (which we call a contracting homotopy) satisfying $h\circ \partial + \partial \circ h = \boldsymbol{1}$.
\begin{theorem}\label{thm:acyclic}
    Let $M$ be a smooth manifold.
    \begin{enumerate}[i)]
        \item The cochain complex $C^\bullet$ is nullhomotopic via a contracting homotopy $h:C^\bullet\to C^{\bullet-1}$. For any $k\ge 0$ the unique $C^\infty(M)$-linear extension $h:\Omega^k(M,C^\bullet)\to \Omega^k(M,C^{\bullet-1})$ is also a contracting homotopy.
        \item Suppose $\psi$ is a smooth gapped family of states. Then for every $k\ge0$ the complex $\Omega^k(M,C_\psi^\bullet)$ is nullhomotopic via a $C^\infty(M)$-linear contracting homotopy $h^\psi: \Omega^k(M,C_\psi^\bullet) \to \Omega^k(M,C_\psi^{\bullet-1})$.
    \end{enumerate}
\end{theorem}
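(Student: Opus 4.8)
The plan is to write down both contracting homotopies explicitly from the brick decomposition, and to bring in the spectral gap only in part (ii), as a device for staying inside the $\psi$-preserving subcomplex.

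\emph{Part (i).} I would first dispose of $n=0$. Given $\derF\in C^0=\mfkDal$ with brick decomposition $\derF=\sum_X\derF^X$, assign to each brick $X=\Z^d\cap\prod_i[a_i,b_i]$ its minimal corner $\mathrm c(X):=(\lceil a_1\rceil,\dots,\lceil a_d\rceil)$ and set $(h\derF)_j:=-\sum_{X:\,\mathrm c(X)=j}\derF^X$. Since $X$ with corner $j$ and diameter $k$ lies in $\mathscr{A}_{B_j(k)}$, there are only polynomially many of them, and $\|\derF^X\|$ decays superpolynomially in $\diam X$, the observable $(h\derF)_j$ is a valid $1$-chain and $j\mapsto h\derF$ is continuous; and from $\partial(h\derF)(A)=\sum_j[(h\derF)_j,A]=-\sum_X[\derF^X,A]$ we get $\partial h=-\boldsymbol 1$ on $C^0$. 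For $n\ge 1$ I would extend this by brick-decomposing each observable $\chf_{j_1,\dots,j_n}$, recording the corner of the brick as a new index, and antisymmetrizing over which of the $n+1$ slots the new index occupies, with an $n$-dependent normalization constant to be pinned down by the computation. Expanding $h\partial-\partial h$ on $C^{-n}$ and using only the two facts $\sum_X Q^X=Q$ (norm-convergence of the brick decomposition of an almost-local observable) and $\sum_{j_0}\chf_{j_0,\dots}=\partial\chf$, the sums should telescope to $h\partial-\partial h=\boldsymbol 1$; assembling this telescoping identity is the one genuinely computational step.

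For the seminorm bounds: the chain seminorm (\ref{eqn:chain-seminorm}) of $h\chf$ at the inserted index is controlled because (a) the brick component $Q^X$ of an almost-local observable decays superpolynomially in $\diam X$ with constants bounded by the seminorms $\|Q\|_{\cdot,\cdot}$, and (b) when $j_1,\dots,j_n$ are not mutually close, $\chf_{j_1,\dots,j_n}$ is itself superpolynomially small — being simultaneously localized near two far-apart sites and traceless forces this — so the inserted corner site sits automatically near the remaining indices. Thus $h:C^{-n}\to C^{-n-1}$ is continuous, hence admits a unique $C^\infty(M)$-linear extension to $\Omega^k(M,C^\bullet)$ for every $k$; since $\partial$, $h$ and $\boldsymbol 1$ all act fibrewise over $M$ (the de Rham differential of $M$ never enters), the identity $h\partial-\partial h=\boldsymbol 1$ passes verbatim to $\Omega^k(M,C^\bullet)$, which is part (i).

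\emph{Part (ii).} The homotopy of part (i) does not preserve $C^\bullet_\psi$, because the brick decomposition of a $\psi$-preserving observable need not be $\psi$-preserving. The fix I would pursue is to replace the partial trace $\overline{\tr}_{X^c}$ that builds the brick decomposition by a $\psi$-compatible local conditional expectation $\mathbb E^\psi_{X^c}$ onto (a bounded thickening of) $\mathscr{A}_X$: a completely positive unital projection with $\psi\circ\mathbb E^\psi_{X^c}=\psi$, preserving almost-locality and the subspace $\mathscr{A}^\psi_{al}$, and smooth in $x\in M$. For a factorized family $\psi=\bigotimes_j\psi_j$ one takes $\mathbb E^\psi_{X^c}=(\bigotimes_{j\in X^c}\psi_j)\otimes\id_{X}$, which visibly has all these properties, so the part-(i) construction run with $\overline{\tr}$ replaced by $\mathbb E^\psi$ lands inside $C^\infty(M,C^\bullet_\psi)$ and defines $h^\psi$. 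An SRE family $\psi=\psi_0\circ\alpha$ reduces to the factorized case because the gauge action of Proposition \ref{prop:gauge-states} intertwines $C^\bullet_\psi$ with $C^\bullet_{\psi_0}$; an invertible family reduces to the SRE case for the stack $\psi\otimes\psi'$ by the partial-averaging construction used above in the proof that invertible states are gapped. For a family that is only gapped, I would instead build $\mathbb E^\psi_{X^c}$ from the spectral gap — exponential clustering makes $\psi$ nearly a product across $\partial X$, and a Hastings-type quasi-adiabatic construction yields a local $\psi$-compatible projection up to superpolynomially small, controllable corrections.

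\emph{Main obstacle.} The hard point is precisely the construction of $\mathbb E^\psi$ — equivalently, of a $C^\infty(M)$-linear, locality-preserving chain-map retraction of $C^\infty(M,C^\bullet)$ onto $C^\infty(M,C^\bullet_\psi)$ that restricts to the identity on the target — for states that are merely gapped rather than SRE. The naive candidate supplied by the gap, $\boldsymbol 1-\mathcal I\circ\{\derH,\cdot\}$ with $\mathcal I(\cdot)=\int_\R W(t)\,\tau\exp(\!\int_0^t\derH)(\cdot)\,dt$ for a suitable weight $W$, does map into $C^\bullet_\psi$, but it is not the identity there: in the GNS picture it rescales the excited-to-excited matrix elements, which the $\psi$-preserving condition leaves unconstrained, so it cannot be used as is. Reconciling this — either through the reduction to factorized states, or by a homological-perturbation-type correction of the retraction — together with the bookkeeping needed to make every estimate uniform over the lattice and smooth in $M$, is the technical heart of the argument.
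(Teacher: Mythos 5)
First, note that this paper does not actually prove Theorem \ref{thm:acyclic}: it is quoted from \cite{LocalNoether} ("the reader is referred to \cite{LocalNoether} for all proofs"), so your proposal has to be measured against the proof given there.

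For part (i) your plan is essentially the right one and matches the mechanism used in the reference (and in this paper's own Appendix B, cf.\ Lemma \ref{lem:locder}, where a UAL derivation is turned into a 1-chain by distributing its brick components over sites): redistribute brick components to create a new chain index, antisymmetrize, and control the new seminorms via superpolynomial decay of brick components together with the two-point decay estimate (Lemma \ref{lem:decaying-norm}). You leave the actual telescoping identity, the normalization, and the signs unverified (your own $n=0$ computation already produces $\partial h=-\boldsymbol 1$), so part (i) is a correct outline rather than a proof, but there is no conceptual problem here.

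The genuine gap is in part (ii), and you have located it yourself: the theorem is asserted, and is used in this paper, for \emph{arbitrary} gapped smooth families --- Proposition \ref{thm:MC}, the definition $\dert^{(1)}=h^{\psi_M}(\derQ)$, and Lemma \ref{lem:connection-interpolation} are all invoked for gapped families that need not be invertible (e.g.\ topologically ordered 2d families, which Section \ref{sec:thouless-as-berry} explicitly covers). Hence the factorized $\to$ SRE $\to$ invertible reduction cannot establish the statement, and your fallback for the merely gapped case is not viable as described: a locality-preserving conditional expectation compatible with a generic gapped state does not exist, and "up to superpolynomially small corrections" is incompatible with the requirements that $\partial h^\psi+h^\psi\partial=\boldsymbol 1$ hold exactly and that $C^\bullet_\psi$ be preserved exactly. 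Moreover you have set yourself a stronger goal than necessary: no retraction that is the identity on $C^\bullet_\psi$ is needed. The route of the cited proof uses only the quasi-adiabatic map $\mathcal{I}(A)=\int W(t)\,e^{t\derH}(A)\,dt$ (with $\hat W(\omega)\propto 1/\omega$ for $|\omega|\ge\Delta$ and superpolynomially decaying $W$, so that $\mathcal{I}$ preserves almost-locality): the operator $p:=\boldsymbol 1-\{\derH,\mathcal{I}(\cdot)\}$, applied entrywise, maps all chains into $\psi$-preserving ones and commutes with $\partial$; it is not the identity on $C^\bullet_\psi$, but this is repaired homologically rather than by improving $p$ to a projection. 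Indeed $\derh:=p\,h(\derH)$ is a $\psi$-preserving 1-chain with $\partial\derh=\derH$ (here $\{\derH,\derH\}=0$ is what makes $p$ act as the identity on $\derH$ itself), and the graded Leibniz rule for $\partial$ together with $[\mathcal{I},\partial]=0$ gives $\{\derH,\mathcal{I}(\cdot)\}=\partial\circ\{\derh,\mathcal{I}(\cdot)\}+\{\derh,\mathcal{I}(\cdot)\}\circ\partial$, so that $h^\psi:=p\circ h+\{\derh,\mathcal{I}(\cdot)\}$ satisfies $\partial h^\psi+h^\psi\partial=p+(\boldsymbol 1-p)=\boldsymbol 1$ while mapping $C^\bullet_\psi$ (indeed, its first summand maps everything) into $C^\bullet_\psi$; since the construction is linear, continuous, and built from the smooth $\derH$ of Definition \ref{def:smooth-gapped}, it is $C^\infty(M)$-linear and extends to $\Omega^k$. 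This resolution of what you call the "main obstacle" is precisely the content your proposal is missing, so as it stands part (ii) is not proved.
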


We can extend the graded bracket on $C^\bullet$ to $\Omega^\bullet(M,C^\bullet)$ by defining the bracket between $\mathsf{a} \in \Omega^p(M,C^k)$ and $\mathsf{b} \in \Omega^q(M,C^\ell)$ as
\begin{align}
    \{\mathsf{a},\mathsf{b}\}(X_1,\hdots, X_{p+q}) = (-1)^{kq}\sum_{\sigma} \frac{\sgn{\sigma}}{p!q!}\{\mathsf{a}(X_{\sigma(1)}, \hdots X_{\sigma(p)}),
    \mathsf{b}(X_{\sigma(p+1)}, \hdots X_{\sigma(p+q)})\}
\end{align}
for any vector fields $X_1,\hdots, X_{p+q}$. The differentials $d$ and $\partial$ on $\Omega^\bullet(M)$ and $C^\bullet$ extend naturally to $\Omega^\bullet(M,C^\bullet)$, and we get a total differential which acts on $\mathsf{a} \in \Omega^p(M,C^k)$ by 
\begin{align}
    \boldsymbol{d}(\mathsf{a}) := d(\mathsf{a}) + (-1)^p\partial(\mathsf{a}).\label{eqn:totaldiff}
\end{align} This graded bracket and total differential make $\Tot(\Omega^\bullet(M,C^\bullet))$ into a dg-Lie algebra over $C^\infty(M)$, with $\Tot(\Omega^\bullet(M,C_\psi^\bullet))$ as a dg-Lie subalgebra.

Note that our sign conventions differ from those in \cite{LocalNoether} -- in particular our $\partial$ and $d$ commute instead of anticommuting\footnote{Explicitly, what we call $C^q$ is called $\mathcal{N}^q$ in \cite{LocalNoether}, and what we call $\partial:C^q \to C^{q+1}$ is $(-1)^{q}\partial : \mathcal{N}^q \to \mathcal{N}^{q+1}$ in the notation of \cite{LocalNoether}.}. This choice will make the explicit calculations in Section \ref{sec:thoulessquant} easier. The price we pay is that although $d$ is a graded derivation of $\Tot(\Omega^\bullet(M,C^\bullet))$, $\partial$ is not. Instead, $(-1)^{p}\partial: \Omega^p(C^q)\to \Omega^p(C^{q+1})$ is. 

The element $\derG\in\Omega^1(M,\mfkDal)$ can be interpreted as a connection 1-form on the trivial graded bundle $M\times C^\bullet$. Its covariant derivative is the graded derivation $D$ of $\Tot(\Omega^\bullet(M,C^\bullet))$ given by $d+\{\derG, \cdot \}$.
Its curvature $\derF\in \Omega^2(M,\mfkDpal)$  satisfies 
\begin{align}
    D\circ D(A) = \{\derF, A\} \hspace{3mm} \forall A \in \Omega^\bullet(M,C^\bullet)
\end{align}
and is given by the usual formula $\derF = d\derG + \frac{1}{2}\{\derG,\derG\}$.

The higher Berry invariants are constructed by solving the following Maurer-Cartan equation. Recall that $\boldsymbol{d}$ is the total differential on $\Tot(\Omega^\bullet(M,C^\bullet))$, given by (\ref{eqn:totaldiff}).
\begin{prop}\label{thm:MC}
    Suppose that $(\psi,\derG)$ is a gapped smooth family. Then there exists an element $\derg^\bullet\in \Tot^1(\Omega^\bullet(M,C^\bullet))$, whose component in $\Omega^{n+1}(M,C^{-n})$ we denote\footnote{What we call $\derg^{(n)}$ was called $g^{(n-1)}$ in \cite{LocalNoether}.} $\derg^{(n)}$, that satisfies $\derg^{(0)}=\derG$ and
        \begin{align}
        \boldsymbol{d}\derg^\bullet + \frac{1}{2}\{\derg^\bullet,\derg^\bullet\} = 0. \label{eqn:descent}
    \end{align}
    Furthermore we have $\derg^{(n)}\in \Omega^{n+1}(M,C_\psi^{-n})$ for all $n>0$.
\end{prop}
We call (\ref{eqn:descent}) the Maurer-Cartan equation or alternatively the descent equation. The proof of Prop. \ref{thm:MC} in \cite{LocalNoether} involves writing out eq. (\ref{eqn:descent}) as a system of equations for $\derg^{(n)}$
\begin{align}
    d\mathsf{g}^{(n-1)} + (-1)^n\partial \mathsf{g}^{(n)} + \frac{1}{2}\sum_{k=0}^{n-1}\{\mathsf{g}^{(k)},\mathsf{g}^{(n-k-1)}\} = 0. \label{eqn:descent-explicit}
\end{align}
and solving (\ref{eqn:descent-explicit}) successively for $n=1,2,\ldots,$ using the exactness of the bi-complex $\Omega^{\bullet}(M,C^{-\bullet}_\psi)$ with respect to $\partial$ in positive degrees. For this reason, $\derg^{(n+1)}$ will be called the descendant of $\derg^{(n)}$. Notice that (\ref{eqn:descent-explicit}), together with the fact that $\derg^{(n)}\in \Omega^{n+1}(M,C_\psi^{-n})$ for $n>0$, imply that $d\psi(\derg^{(n)}) = \psi(\partial \derg^{(n+1)})$. 

In Section \ref{sec:restictions} we introduce an operation we call ``evaluating against the origin'' (equation (\ref{eval-fundamental})). The evaluation of $\derg^{(d+1)}$ at the origin is an element of $\Omega^{d+2}(M,\mathscr{A}_{al})$ denoted by $\angles{\derg^{(d+1)},[*]}$. Evaluating $\derg^{(d+1)}$ against the origin and applying the family of states $\psi$ to this observable-valued form we obtain a $(d+2)$-form on $M$:
\begin{align}
    \omega^{(d+2)} := \psi(\angles{\derg^{(d+1)},[*]}) \in \Omega^{d+2}(M,\C)
\end{align}
which is closed because
\begin{align*}
    -d\omega^{(d+2)} &= (-1)^{d+2}\psi(\angles{\partial \derg^{(d+2)}, [*]})+\frac12 \sum_{k=1}^{d}\psi\big(\angles{\{\derg^{(k)},\derg^{(d-k+1)}\}, [*]}\big).
\end{align*}
The first term above vanishes because $\angles{\derh,[*]}=0$ wheneverif $\derh$ is $\partial$-exact and the second term vanishes because $\psi(\angles{\derh, [*]})=0$ if $\psi(\derh_{j_1,\hdots, j_{d+1}})=0$ for all $j_1,\hdots,j_{d+1}\in\Z^d$.

\begin{definition}\label{def:higher-berry}
Suppose $(\psi,\derG)$ is a smooth family of states such that a solution $\derG^\bullet$ of the MC equation (\ref{eqn:descent}) exists. Then the cohomology class $[\omega^{(d+2)}]\in H_{dR}^{d+2}(M,i\RR)$ is independent of the choice of $\derG$ and $\derG^\bullet$. It is called the \textbf{higher Berry class} of the smooth family $\psi$. 
\end{definition}
By Proposition \ref{thm:MC} all gapped smooth families (and thus all SRE and invertible smooth families) have a solution to (\ref{eqn:descent}) and thus a higher Berry class. 

For $d=0$, this is just the usual Berry curvature (Chern number) of a line bundle. For $d=1$ the higher Berry curvature is a closed 3-form $\omega^{(3)}$ whose cohomology class measures the flow of Berry curvature from the right half of the spin chain to the left half \cite{HermeleEtAl}. 


When the system under consideration is equipped with an on-site action of a compact Lie group $G$ we can consider equivariant smooth families $(\psi,\derG)$ parametrized by a $G$-manifold\footnote{The 1-form $\derG$ is assumed to be $G$-equivariant too.}  and there is an equivariant version of the above descent procedure. We describe it here in the case that $G=U(1)$ and the $U(1)$-action on the parameter space $M$ is trivial -- in other words, $(\psi,\derG)$ is a smooth family of states on $M$ that is $U(1)$-invariant for each $x\in M$, and $\derG$ is a $U(1)$-invariant element of $\Omega^1(M,\mfkDal)$.

The generators of the on-site $U(1)$ actions form a 1-chain $\derq^{(1)}$ such that the derivation $\derQ:= \partial \derq^{(1)}$ preserves $\psi_x$ for every $x\in M$. 
Consider the Cartan complex $\Omega^{\bullet, \bullet}(M,C^\bullet)_{U(1)}:= \Omega^\bullet(M)\otimes S^\bullet\R \hat\otimes (C^\bullet)_{U(1)}$, where $S^\bullet \R$ is the algebra of polynomials $\R[t]$ on one generator $t$, which we assign degree 2, and $(C^\bullet)_{U(1)}$ are the $U(1)$-invariant chains. If $\psi$ is a $U(1)$-invariant smooth family of states, we define $\Omega^{\bullet, \bullet}(M,C_\psi^\bullet)_{U(1)}$ similarly.

The equivariant descent equation reads
\begin{align}
    \boldsymbol{d}\derg^\bullet + \frac{1}{2}\{\derg^\bullet ,\derg^\bullet \} + \derQ\otimes t = 0 \label{eqn:equivariant-descent},
\end{align}
where $\derg^\bullet\in \Tot^1(\Omega^{\bullet, \bullet}(M,C^\bullet)_{U(1)})$ and the component of $\derg$ in $\Omega^{1,0}(M,C^0)$ is $\derG$.
As before, the components of $\derg^\bullet$ in $\Omega^{\bullet, \bullet}(M,C^{d+1})_{U(1)}$ can be evaluated against the origin (as in (\ref{eval-fundamental})) to produce closed forms on $M$ which altogether form a class in the equivariant cohomology $H_{U(1)}^{d+2}(M)$, and it can be shown that this cohomology class is independent of the choice of solution $\derg^\bullet$ of the equivariant Maurer-Cartan equation.

The components of $\derg$ in various degrees encode various cohomology invariants one can assign to a $U(1)$-invariant smooth family of states. Letting $\mathsf{g}^{(n,k)}$ be the component of $\Omega^{n+1-2k,2k}(M,C^{-n})_{U(1)}$, the first few components of $\derg^\bullet$ can be arranged in the following table
\begin{align}
    \begin{array}{ccc} \label{eqn:invariants-table}
        \mathsf{g}^{(0,0)} &  \ & \ \\
        \mathsf{g}^{(1,0)} &  \mathsf{g}^{(1,1)} & \ \\
        \mathsf{g}^{(2,0)} &  \mathsf{g}^{(2,1)} & \ \\
        \mathsf{g}^{(3,0)} &  \mathsf{g}^{(3,1)} & \mathsf{g}^{(3,2)}
    \end{array}
\end{align}
Each component $\mathsf{g}^{(n,k)}$ associates a closed $n+1-2k$-form to a smooth family of  $U(1)$-invariant states in $n-1$ dimensions. 
For instance, for a family of $1$-dimensional $U(1)$-invariant states $\psi(\angles{\mathsf{g}^{(2,0)}, [*]}) = \omega^{(3)}\in\Omega^3(M)$ is the higher Berry curvature discussed above. For a family of $0$-dimensional $U(1)$-invariant states $\psi(\angles{\mathsf{g}^{(1,0)}, [*]})\in\Omega^2(M)$ is the ordinary Berry curvature. In general, the invariant corresponding to $\mathsf{g}^{(n+1,k)}$ is the descendant of the one corresponding to $\mathsf{g}^{(n,k)}$. 

The rest of the terms in (\ref{eqn:invariants-table}) represent the following invariants. As we just described, the first column contains the Berry curvature $\derg^{(1,0)}$ and its descendants\footnote{One might wonder if $\derg^{(0,0)}$ corresponds to some invariant. It should associate a 1-form to a family of (-1)-dimensional states. If one interprets a (-1)-dimensional state as a phase, then $\derg^{(0,0)}$ is nothing but the pullback of the form $d\theta$ on $U(1)$.}. At the top of the second column is the $U(1)$ charge $\mathsf{g}^{(1,1)}$ (which can be thought of as giving a locally constant function on a family of 0d systems). The descendant $\mathsf{g}^{(2,1)}$ of charge gives the usual Thouless pump for 1d systems, while $\mathsf{g}^{(n,1)}$ for $n>2$ give the higher Thouless pump invariants. Finally $\mathsf{g}^{(3,2)}$ is the Hall conductance (whose descendants $\derg^{(n,2)}$ for $n\ge 4$ are not pictured).

The main results of this work deal with the entries $\mathsf{g}^{(2,0)}$ (1d Berry curvature) and $\mathsf{g}^{(3,1)}$ (2d Thouless pump) in the above table. The proof of quantization of the 2d Thouless invariant will hinge on showing that these two are related by the process of inserting a flux at infinity. We remark that this pattern holds more generally. We will not treat these rigorously in this work, but let us simply state a few other instances of this ``diagonal" relationship. Beginning with a single 2-dimensional $U(1)$-invariant state we obtain an $S^1$-family by inserting a $\theta$-domain wall at the $x$-axis for every $\theta\in U(1)$, and the charge pumped along the domain wall as one cycles $\theta\in S^1$ from $0$ to $2\pi$, which can be interpreted as the charge of a fluxon, can be shown to equal the Hall conductance. This is the original Laughlin argument and relates $\derg^{(3,2)}$ (Hall conductance) to $\derg^{(2,1)}$ (1d Thouless pump). By inserting another domain wall, along the $y$-axis this time, we obtain a $S^1\times S^1$-family of states whose ordinary Berry curvature (given by $\derg^{(1,0)}$) is again the Hall conductance: this is the basis of the proof of the Hall conductance quantization \cite{hastingsmichalakis}.

This paper is only concerned with the first two columns of (\ref{eqn:invariants-table}), so we will use a simplified notation for their entries. Instead of $\derg^{(n,0)}$ we will simply write $\derg^{(n)}$, while $\derg^{(n,1)}$ will be denoted $\dert^{(n)}$. Then $\derg^{(n)}$ satisfy the ordinary descent equations (\ref{eqn:descent-explicit}), of which we will need only the first two:
\begin{align}
    \partial \derg^{(1)} &= \derF \nonumber \\
    \partial \derg^{(2)} &= -D\derg^{(1)}.
\end{align}
Meanwhile the first three descent equations for $\dert^{(n)}$ are:
\begin{align}
    \partial \dert^{(1)} &= \derQ \nonumber \\ 
    \partial \dert^{(2)} &= -D\dert^{(1)}  \nonumber \\
    \partial \dert^{(3)} &= D\dert^{(2)} + \{\dert^{(1)}, \derg^{(1)}\}. \label{eqn:descent-thouless}
\end{align}

\section{Localization properties}
In this section we introduce a few tools to deal with localization properties of chains, derivations, and automorphisms. 
We begin in Section \ref{sec:localized} by defining the notion of a derivation, chain, or automorphism that is confined on a given region in $\Z^d$, then in Section \ref{sec:restictions} we discuss some ways to produce derivations confined on a given region.
\subsection{Confined maps}\label{sec:localized}
\begin{definition}
    A linear map $F: \mathscr{A}_{al}\to \mathscr{A}_{al}$ is $h$-\textbf{confined} on a region $X\subset \Z^d$ if for every finite $Y\subset \Z^d$ it satisfies
    \begin{align}
        \|F(B)\| \le \sum_{z\in X}h(d(z,Y))\|B\|
    \end{align}
    for all $B$ localized on $Y$. If we omit $h$ and say $F$ is confined on $X$, we mean that it is $h$-confined on $X$ for some function $h$ with superpolynomial decay.
\end{definition}

\begin{definition}
    For $n>0$, an $n$-chain $\derf$ is $h$-\textbf{confined} on a region $X\subset \Lambda$ if 
    \begin{align}
        \|\derf_{j_1,\hdots,j_n}\|\le \min_{k=1,\hdots,n} h(d(X,j_k)).
    \end{align} As before we say $\derf$ is confined on $X$ if it is h-confined on $X$ for some superpolynomially decreasing $h$.
\end{definition}
For $\derf \in \Omega^\bullet(M,C^\bullet)$, being pointwise confined on a region $X\subset \Z^d$ is rarely a sufficiently strong condition -- one must impose some kind of uniformity on the decay function $h$: we say $\derf\in C^\infty(\R^n,C^\bullet)$ is \textbf{smoothly confined} on $X$ if for any multi-index $\mu\ge 0$ and any $x\in \R^n$, there is a neighbourhood $V\ni x$ and a superpolynomially decaying function $h$ such that $\partial^\mu\derf(x)$ is $h$-confined on $X$ for all  $x\in V$. Since this is a local property we extend this definition to $\derf\in \Omega^\bullet(M,C^\bullet)$ for a manifold $M$ by requiring $\derf$ to be smoothly confined on $X$ in any chart. Finally, a chain-valued form $\derf\in \Omega^k(M,C^\bullet)$ is defined to be smoothly confined on $X$ if $\derf(\sigma)$ is smoothly confined on $X$ for any multivector field $\sigma$. Lastly, if $\derf,\derg,\derh \in \Omega^\bullet(M,C^\bullet)$ we say $\derf$ \textbf{smoothly interpolates} between $\derg$ on $X$ and $\derh$ on $X^c$ if $\derf-\derg$ is smoothly confined on $X^c$ and $\derf-\derh$ is smoothly confined on $X$.

The notion of confinement is compatible with many operations typically applied to chains, as summarized in Proposition \ref{prop:localization-properties} below, whose proof appears in Appendix \ref{sec:appendix-confined}.

For $X,X'\subset \Lambda$, we say $Y$ is a \textbf{stable intersection} of $X$ and $X'$ if there exist a $c>0$ such that $X(r)\cap X'(r) \subset Y(cr)$ for all $r\ge0$ [note: this is not the only definition, and maybe there is a better one]. Note in particular that the origin is a stable intersection of the set $\{x\ge 0\} \subset \Z$ and its complement, and the positive $y$-axis is a stable intersection of the upper half-space in $\Z^2$ with the $y$-axis.

\begin{prop}\label{prop:localization-properties}
    Let $\mathsf{a},\mathsf{b} \in \Omega^\bullet(M,C^\bullet)$. Let $X,X'$ be any subsets of $\Z^d$ and let $Y$ be a stable intersection of $X$ and $X'$.
    \begin{enumerate}[i)]
        \item If $\mathsf{a}$ is smoothly confined on $X$ then $\partial \mathsf{a}$ is smoothly confined on $X$.
        \item If $\mathsf{a}$ is smoothly confined on both $X$ and $X'$ then it is smoothly confined on $Y$.
        \item If $\mathsf{a}$ and $\mathsf{b}$ are smoothly confined on $X$ and $X'$, respectively, then $\{\mathsf{a},\mathsf{b}\}$ is smoothly confined on $Y$.
    \end{enumerate}
\end{prop}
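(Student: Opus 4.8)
The statement to be proven is Proposition \ref{prop:localization-properties}, which collects four stability properties of the notion of confinement for chains and LGAs. All four parts are ``soft'' estimates: they follow by unwinding the definitions and chasing superpolynomial-decay bounds through the algebraic operations ($\partial$, $\{\cdot,\cdot\}$, conjugation by an LGA), together with the combinatorial input encoded in the definition of a stable intersection. Since the paper defers the actual proof to Appendix \ref{sec:appendix-confined}, my plan mirrors the structure one would use there. Throughout, the basic trick is: a function with superpolynomial decay times a polynomial still has superpolynomial decay, and sums of the form $\sum_{z\in\Z^d} h(d(z,Y))$ converge and are themselves controlled (as a function of additional distance data) by another superpolynomially decaying function, because $\Z^d$ has polynomial growth.

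\emph{Part (i).} If $\mathsf{a}=\mathsf{a}^{(n+1)}$ is $h$-confined on $X$, I would estimate $\|(\partial\mathsf{a})_{j_1,\ldots,j_n}\|\le\sum_{j_0\in\Z^d}\|\mathsf{a}_{j_0,j_1,\ldots,j_n}\|$. Each term is bounded by $h(d(X,j_0))$ and also by $\min_k h(d(X,j_k))$. Splitting the $j_0$-sum according to whether $d(X,j_0)$ exceeds $\tfrac12 d(X,j_k)$ or not, and using polynomial volume growth of $\Z^d$ to bound the number of lattice points within a given radius, one gets a bound of the form $h'(\min_k d(X,j_k))$ for a new superpolynomially decaying $h'$; for the $n=0$ case the same argument applies to $\partial\mathsf{a}(A)=\sum_j[\mathsf{a}_j,A]$ using $\|[\mathsf{a}_j,A]\|\le 2\|\mathsf{a}_j\|\,\|A\|$ together with the confinement of $\mathsf{a}_j$ away from $\supp A$. \emph{Part (ii)} is the most elementary: if $d(X(r)\cap X'(r))\subset Y(cr)$, then for a multi-index $(j_1,\ldots,j_n)$ set $r=\max_k d(Y,j_k)$; if $\min_k d(X,j_k)$ and $\min_k d(X',j_k)$ are both large compared to $cr$ one derives a contradiction with the stable-intersection inclusion, so at least one of $h(d(X,j_k))$, $h(d(X',j_k))$ is small; combining the two $h$-bounds gives an $h''$-bound in terms of $d(Y,j_k)$.

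\emph{Parts (iii) and (iv).} For (iii), the graded bracket $\{\mathsf{a},\mathsf{b}\}_{j_1,\ldots,j_{m+n}}$ is (up to combinatorial factors and signed permutations) a sum of commutators $\{\mathsf{a}_{j_{I}},\mathsf{b}_{j_{J}}\}$ over ordered partitions of the index set; each such term is bounded in norm by $2\|\mathsf{a}_{j_I}\|\,\|\mathsf{b}_{j_J}\|\le 2\,h_{\mathsf a}(d(X,j_{i}))\,h_{\mathsf b}(d(X',j_{i'}))$ for any $i\in I$, $i'\in J$, and in particular this is small whenever some index is far from $X$ \emph{and} some index is far from $X'$. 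Since $X$ and $X'$ have stable intersection $Y$, one argues as in part (ii) that being far from $Y$ forces an index to be far from $X$ or far from $X'$; combined with the fact that the summand is simultaneously bounded by $\|\mathsf a\|\,h_{\mathsf b}$ and by $h_{\mathsf a}\,\|\mathsf b\|$, a case split plus a volume-growth sum over the (finitely many relevant) partitions yields confinement on $Y$. For (iv), write $\alpha=\tau\exp(\int_0^1\derF_t\,dt)$ so that $\alpha-\boldsymbol1$ is expressed via a Duhamel/Dyson series in $\derF$; since $\derF=\partial\derf$ with $\derf$ a $1$-chain confined on $X'$ (using the fact quoted just before the proposition, that a UAL derivation is confined on a region iff it is $\partial$ of a $1$-chain confined there), each term in the series that moves an observable localized far from $X'$ contributes a superpolynomially small correction; applying $\alpha$ to a chain $\mathsf a$ confined on $X$, one gets that $\alpha(\mathsf a)_{j_1,\ldots,j_n}=\alpha(\mathsf a_{j_1,\ldots,j_n})$ is close to $\mathsf a_{j_1,\ldots,j_n}$ (hence confined on $X$) and is also confined on $X'$ (because $\alpha$ essentially acts trivially far from $X'$), so by part (ii) it is confined on $Y$.

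\emph{Main obstacle.} The genuinely delicate point — and the one deserving the most care in the appendix — is part (iv), specifically making precise the claim that conjugation by an LGA whose generator is confined on $X'$ approximately preserves confinement on a disjoint-in-scale region $X$. This requires Lieb--Robinson-type control: the locally generated automorphism spreads supports only at finite ``speed'' with superpolynomially small tails, so one must verify that the relevant Duhamel expansion converges in the Fr\'echet topology on $\mathscr A_{al}$ with tail estimates uniform in the multi-index, and that combining the ``$\alpha$ does almost nothing far from $X'$'' estimate with the ``$\mathsf a$ is small far from $X$'' estimate via the stable-intersection hypothesis indeed produces a single superpolynomially decaying bound. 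Parts (i)--(iii) are bookkeeping with convergent lattice sums; part (iv) is where the analytic substance lives, and it is presumably why the full proof is relegated to Appendix \ref{sec:appendix-confined}, likely leaning on the locality lemmas (Proposition \ref{prop:localization-properties}'s companions and Lemma \ref{lem:locder}) established there.
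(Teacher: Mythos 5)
There is a genuine gap, and it is not where you locate it. In parts (i) and (iii) you use only the confinement bounds together with volume counting, but these do not suffice when $X$, $X'$ are infinite regions (half-lines, lines, half-planes -- exactly the cases the paper needs). In (i), your split of the $j_0$-sum according to whether $d(X,j_0)$ exceeds $\tfrac12 d(X,j_k)$ leaves infinitely many terms in each regime whenever $X$ is infinite: every neighbourhood $X(r)$ of an infinite $X$ contains infinitely many sites, so neither the bound $h(d(X,j_0))$ nor $\min_k h(d(X,j_k))$ produces a convergent, small sum. What actually controls the sum is the uniform localization of chain entries at their indices (condition iv) of the chain definition): $\|\mathsf{a}_{j_0,j_1,\dots,j_n}\|$ decays superpolynomially in $d(j_0,j_1)$ because the entry is almost-local at both sites -- this is precisely Lemma \ref{lem:decaying-norm} -- and it is this decay, combined with the confinement bound via a split at an intermediate scale, that gives part (i). The same input is missing from your part (iii): if $j_i$ is far from $Y$ and hence, say, far from $X$, then the splitting terms in which $j_i$ is attached to $\mathsf{b}$ while all indices of $\mathsf{a}$ lie close to $X$ are not small in any of your bounds $2\|\mathsf{a}\|\,\|\mathsf{b}\|$, $\|\mathsf{a}\|h_{\mathsf b}$, $h_{\mathsf a}\|\mathsf{b}\|$; their smallness comes from the almost-local commutator estimate ($\mathsf{a}_{j_I}$ is localized near $X$, $\mathsf{b}_{j_J}$ near $j_i$, and these are far apart), i.e.\ from Lemma \ref{lem:alobserv} and Prop.~D.4 of \cite{LocalNoether}. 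So the analytic substance of the appendix lies in these almost-locality lemmas (together with Lemmas \ref{lem:confined-1-chain} and \ref{lem:locder}, which the paper uses to reduce all the derivation, $p=0$, cases to chain cases via brick decompositions -- a reduction absent from your plan), not in part (iv).

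Concerning part (iv): your route assumes $\alpha$ is generated by a derivation-valued path confined on $X'$ and runs a Duhamel/Lieb--Robinson expansion, but the hypothesis is only that $\alpha-\boldsymbol{1}$ is a confined map on $X'$; the paper uses that hypothesis directly, applying Lemma \ref{lem:alobserv} to $\alpha-\boldsymbol{1}$ acting on the almost-local entries $\mathsf{a}_{j_1,\dots,j_n}$ (together with Prop.~E.1 of \cite{LocalNoether} to preserve almost-locality), and then reduces to part (ii); no Lieb--Robinson analysis is needed because locality is already built into the hypotheses and definitions. Moreover, your concluding inference that $\alpha(\mathsf{a})$ ``is also confined on $X'$ because $\alpha$ essentially acts trivially far from $X'$'' does not follow: acting trivially far from $X'$ shows that $\alpha(\mathsf{a})-\mathsf{a}$ is small away from $X'$, not that $\alpha(\mathsf{a})$ is, and indeed for a chain $\|\alpha(\mathsf{a})_{j_1,\dots,j_n}\|=\|\mathsf{a}_{j_1,\dots,j_n}\|$, so these norms carry no information about $X'$ at all. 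The usable output of the locality of $\alpha$ is control of $(\alpha-\boldsymbol{1})(\mathsf{a}_{j_1,\dots,j_n})$ away from $X'$, and the argument must be assembled from that together with the confinement of $\mathsf{a}$ on $X$, as in the appendix; as written, your last step for (iv) is a non sequitur.
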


LGAs also have some desirable localization properties if they are produced by integrating a confined derivation:
\begin{prop}\label{prop:localization-LGAs}
    Let $\derG\in \Omega^1(M\times [0,1], \mfkDal)$ and suppose $\derG_{vert}$ is smoothly confined on $X\subset \Z^d$. Let $\alpha := \tau\exp\int_0\derG$. Then $\alpha^{-1}d\alpha$ is smoothly confined on $X$, and $\alpha(\derF)-\derF$ is smoothly confined on $X$ for any $\derF \in \Omega^\bullet(M,\mfkDal)$.
\end{prop}

We call an element of $\derG\in \mfkDal$ \textbf{inner} if it is inner as a derivation of $\mathscr{A}_{al}$, in other words there is some $A\in \mathscr{A}_{al}$ such that $\derG(B)=[A,B]$ for all $B\in\mathscr{A}_{al}$ (we say $A$ is \textbf{associated} to $\derG$). Since $\mathscr{A}_{al}$ has trivial center, any two elements of $\mathscr{A}_{al}$ associated to the same $\derG$ are related by a multiple of $\boldsymbol{1}$. 
The notion of confined derivations allows an explicit description of the set of inner UAL derivations: a derivation $\derG \in \mfkDal$ is inner iff it is confined on a bounded $X\subset \Z^d$. This is a consequence of the more general statement:

\begin{prop}\label{prop:smooth-lift}
    Let $M$ be a smooth manifold and suppose $\derF \in \Omega^\bullet(M,\mfkDal)$. Then $\derF$ is of the form $\derF = \ad_{A}$ for some antiselfadjoint $A\in \Omega^\bullet(M,\mathscr{A}_{al})$ iff it is confined on a bounded region of $\Z^d$.
\end{prop}

Let $\psi$ be a state on $\mathscr{A}$. If $\derF \in \mfkDal$ is inner we may unambiguously define $\psi(\derF)$ as $\psi(A)-\overline{\tr}(A)$ for any $A\in \mathscr{A}_{al}$ associated to $\derF$. This procedure is covariant with respect to automorphisms of $\mathscr{A}$. Indeed, since $\overline{\tr}$ is the unique tracial state on $\mathscr{A}$\footnote{Indeed any finite-dimensional matrix algebra has a unique tracial state, and $\mathscr{A}$ is a norm-limit of these.}, we have $\overline{\tr}^\alpha = \overline{\tr}$ for any automorphism of $\mathscr{A}$. Thus
\begin{equation}
    \psi^\alpha(\derF) = \psi^\alpha(A) - \overline{\tr}(A)
    = \psi(\alpha(A)) - \overline{\tr}(\alpha(A))
    = \psi(\alpha(\derF))
\end{equation}
for any $A\in \mathscr{A}_{al}$ associated to $\derF$.

\subsection{Restricting and evaluating chains}\label{sec:restictions}
Given a subset $X\subset\Z^d$ and an $n$-chain $f$ define the \textbf{restriction} of $f$ to $X$ as the $n$-chain $\res_{X}(\derf)$ given by
\begin{align}
    \res_{X}(\derf)_{j_1,\hdots,j_n}= \left\{\begin{array}{cc}
        \derf_{j_1,\hdots,j_n} & \text{ if } j_i\in X \text{ for every } i=1,\hdots ,n  \\
        0 & \text{ otherwise} 
    \end{array} \right.
\end{align} It is easy to see that $\res_X$ commutes with $\derF$ for any $\derF\in\mfkDal$, and that for $k>0$ and a smooth manifold $M$ the obvious extension of $\res_X$ to $\Omega^\bullet(M,C^k)$ commutes with the exterior derivative $d$. Notice also that $\res_X(\derf)$ is confined on $X$, and if $\derf$ is confined on $Y\subset \Z^d$ then $\res_X(\derf)$ is also.

Suppose $\derh$ is an $n$-chain and let $X \subset \Z^d$. We may form the $(n-1)$-chain $\partial\res_X \derh - \res_X\partial \derh$, or $[\partial,\res_X]\derh$ for short. This chain measures the current of the quantity $\derh$ across the boundary of the region $X$. Since $[\partial, \res_X] = -[\partial, \res_{X^c}]$, it is clear that this $(n-1)$-chain is confined on both $X$ and $X^c$ (and thus, by Proposition \ref{prop:localization-properties}, on any stable intersection of $X$ and $X^c$). In what follows, we will most often set $X$ to be one of the half-spaces $\mathbb{H}_i := \{ (x_1,\hdots, x_n): x_i\le 0 \}\subset \Z^d$. 

We end this section by establishing some notation which will be useful throughout the rest of this work. If $\derh^{(2)}$ is a 2-chain that is confined on a region which has bounded stable intersection with $\partial\mathbb{H}_i = \{ (x_1,\hdots, x_n): x_i = 0 \}\subset \Z^d$, then we define
\begin{align}
    \angles{\derh^{(2)}, [\partial \mathbb{H}_i]} := \sum_{j\in \Z^d}([\partial, \res_{\mathbb{H}_i}]\derh^{(2)})_j\in \mathscr{A}_{al} \label{eval-hyperplane}
\end{align}
the sum on the right-hand side being absolutely convergent in $\mathscr{A}_{al}$.

Now suppose $\derh^{(d+1)}$ is any $d+1$-chain. Then $[\partial, \res_{\mathbb{H}_d}]...[\partial, \res_{\mathbb{H}_1}]\derh^{(d+1)}$ is a 1-chain that measures the $d$-dimensional circulation of $\derh$ around the origin\footnote{In the language of \cite{LocalNoether} this is the same as contracting with the conical partition $\{X_k\}_{k=1}^{d+1}$ with $X_k = \mathbb{H}_k\backslash (\mathbb{H}_1\cup\hdots \cup \mathbb{H}_{k-1})$ for $1\le k\le d$ and $X_{d+1} = \mathbb{H}_1^c\cap...\cap \mathbb{H}_{d}^c$.}, and we define
\begin{align}
    \angles{\derh^{(d+1)}, [*]} := \sum_{j\in \Z^d} ([\partial, \res_{\mathbb{H}_d}]...[\partial, \res_{\mathbb{H}_1}]\derh^{(d+1)})_j \in \mathscr{A}_{al}, \label{eval-fundamental}
\end{align}
where the sum is again absolutely convergent. Notice that the observable $\angles{\derh, [*]}$ is traceless and associated to the inner derivation $\partial[\partial, \res_{\mathbb{H}_d}]...[\partial, \res_{\mathbb{H}_1}]\derh$. Since $\partial$ commutes with $[\partial, \res_{X}]$ for any $X$, it follows that $\angles{\derh, [*]}=0$ if $\derh$ is $\partial$-closed.

\section{1d higher Berry quantization} \label{sec:berryquant}
We are now ready to prove our first main result: that for invertible families the higher Berry class has an integral refinement. Recall that the exponential exact sequence $0\to 2\pi i \Z \to i \RR \to U(1) \to 0$ gives rise to an isomorphism $H^2(M,\bUM) \cong H^3(M,2\pi i \Z)$, where $\bUM$ is the sheaf of continuous $U(1)$-valued functions on $M$ \cite{WZWreview,BrylLoop}. Let $\iota: H^2(M,\bUM) \hookrightarrow H^3(M,i\RR)$ denote the composition of this isomorphism with the usual \v{C}ech-de Rham map. For a gapped smooth family $\psi$ of 1d states, let $\omega^{(3)}=\psi(\langle\derg^{(2)},[*]\rangle)\in \Omega^3(M,i\R)$ denote its higher Berry curvature.

\begin{theorem}\label{thm:1dBerryquantization}
    To any smooth family $(\psi,\derG)$ of invertible 1d states on $M$ we can associate a class $[h] \in H^2(M,\bUM)$ such that $\iota([h]) = [\omega^{(3)}]$. In particular, the class $[\omega^{(3)}/2\pi i]\in H^3(M,\RR)$ is integral.
\end{theorem}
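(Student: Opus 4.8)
The plan is to produce, for a good cover $\{U_i\}$ of $M$, an explicit cocycle for the \v{C}ech--de Rham (Deligne) complex computing $H^2(M,\bUM)\cong H^3(M,2\pi\sqrt{-1}\,\Z)$: a triple $(B_i,A_{ij},g_{ijk})$ with $B_i\in\Omega^2(U_i,\sqrt{-1}\,\RR)$, $A_{ij}\in\Omega^1(U_{ij},\sqrt{-1}\,\RR)$ and $g_{ijk}\in C^\infty(U_{ijk},U(1))$ satisfying $dB_i=\omega^{(3)}|_{U_i}$, $B_j-B_i=dA_{ij}$, $A_{jk}-A_{ik}+A_{ij}=d\log g_{ijk}$ and $\delta g=1$. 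Its class is the desired $[h]\in H^2(M,\bUM)$, and $\iota([h])=[\omega^{(3)}]$ is then immediate from the way $\iota$ is defined. The first step is a local trivialization: since the family is gapped and, treating connected components separately, $M$ may be assumed connected, exponentiating $\derG$ along a smooth contraction of $U_i$ and then along a fixed path to a global basepoint yields --- by Proposition \ref{lem: path-ordered-exp} and the fact recalled in Section \ref{sec:families} that states joined by a path are LGA-equivalent --- a smooth family of LGAs $\gamma_i$ on $U_i$ with $\psi^{\gamma_i}$ equal to one fixed state $\psi_0$. Invertibility enters here: since $\psi_0$ is invertible, after stacking the \emph{whole} family with a constant inverse of $\psi_0$ --- which alters neither the higher Berry class (additive under stacking) nor $\omega^{(3)}$ for a constant factor --- we may assume $\psi_0$ is a factorized pure state. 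On overlaps the families $\alpha_{ij}:=\gamma_i\circ\gamma_j^{-1}$ are then smooth families of LGAs preserving $\psi_0$, with $\alpha_{ij}\alpha_{jk}\alpha_{ki}=\boldsymbol 1$ on $U_{ijk}$.

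The heart of the argument is a ``flow across the cut at the origin'' construction. To a smooth family of LGAs $\alpha$ over an open $V\subseteq M$ preserving the factorized state $\psi_0$ I attach a $1$-form $A(\alpha)$ on $V$ valued in $\sqrt{-1}\,\RR$ by writing $\alpha^{-1}d\alpha=\partial\derf$ for a $1$-chain $\derf\in\Omega^1(V,C_{\psi_0}^{-1})$ (chosen coherently by means of the contracting homotopies of Theorem \ref{thm:acyclic}), noting that $[\partial,\res_{\mathbb H}]\derf$ (with $\mathbb H=\{j\le 0\}\subset\Z$) is confined near the origin by Proposition \ref{prop:localization-properties} and hence inner, and setting $A(\alpha):=\psi_0\bigl([\partial,\res_{\mathbb H}]\derf\bigr)$ via the $\psi_0$-pairing of Section \ref{sec:localized}. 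Using the flatness relation satisfied by $\alpha^{-1}d\alpha$, the dg-Lie identities for $(\partial,\{\cdot,\cdot\})$, and that $\derf$ is $C^{-1}_{\psi_0}$-valued, one checks that $dA(\alpha)$ is independent of the remaining freedom and that $A(\alpha\beta)-A(\alpha)-A(\beta)$ is exact, while $A(\alpha)$ itself is canonical only up to addition of $d\log u$ with $u\in C^\infty(V,U(1))$; these data therefore package into a well-defined $U(1)$-valued function $g(\alpha,\beta)$ on $V$ with $d\log g(\alpha,\beta)=A(\alpha\beta)-A(\alpha)-A(\beta)$. A parallel half-chain computation --- comparing the descent datum $(\derg^{(1)},\derg^{(2)})$ on $U_i$ with its $\gamma_i$-gauge transform, which is a valid descent for the constant family $\psi^{\gamma_i}=\psi_0$ and hence has vanishing higher Berry curvature --- produces $B_i\in\Omega^2(U_i,\sqrt{-1}\,\RR)$ with $dB_i=\omega^{(3)}|_{U_i}$ and $B_j-B_i=dA(\alpha_{ij})$ on $U_{ij}$.

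It then remains to assemble the cocycle. Put $A_{ij}:=A(\alpha_{ij})$ and, using $\alpha_{ij}\alpha_{jk}=\alpha_{ik}$ on $U_{ijk}$, $g_{ijk}:=g(\alpha_{ij},\alpha_{jk})\cdot g(\alpha_{ik},\alpha_{ki})^{-1}\in C^\infty(U_{ijk},U(1))$. The defining property of $g$ gives $d\log g_{ijk}=A_{jk}-A_{ik}+A_{ij}$, and $\delta g=1$ on quadruple overlaps follows from the group-$2$-cocycle-type identity for $g$ (the relevant fourfold product of the $\alpha$'s being $\boldsymbol 1$). Hence $(B_i,A_{ij},g_{ijk})$ is a Deligne $2$-cocycle with curvature $\omega^{(3)}$; its class $[h]\in H^2(M,\bUM)$ satisfies $\iota([h])=[\omega^{(3)}]$, and in particular $[\omega^{(3)}/2\pi\sqrt{-1}]\in H^3(M,\RR)$ is integral. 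Independence of $[h]$ from the cover, the trivializations, and the potentials follows from the naturality of $\alpha\mapsto A(\alpha)$ and $(\alpha,\beta)\mapsto g(\alpha,\beta)$ under refinement, together with the acyclicity of Theorem \ref{thm:acyclic}.

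The one genuinely non-formal ingredient --- everything else being bookkeeping with confined chains (Section \ref{sec:restictions}, Proposition \ref{prop:localization-properties}) and the contracting homotopies of Theorem \ref{thm:acyclic} --- is the claim that the ambiguity in $A(\alpha)$ is \emph{exactly} the lattice $d\log C^\infty(V,U(1))$, equivalently that for a loop of $\psi_0$-preserving LGAs over $S^1$ the flow $\oint A(\alpha)$ is well-defined modulo $2\pi\sqrt{-1}\,\Z$, so that $g$ really is $U(1)$-valued rather than merely $\sqrt{-1}\,\RR$-valued. I expect to establish this by an index argument of the kind used for $1$d quantum cellular automata: restricting such a loop to the half-chain $\mathscr{A}_{al}$ on $\{j>0\}$ produces a loop of locally-generated automorphisms whose monodromy is a boundary-localized --- hence inner --- automorphism $\Ad_W$ with $W$ unitary, and $\exp(\oint A(\alpha))$ computes the phase by which $W$ acts on the (pure, factorized) half-chain ground state, whose winding around the loop a Gross--Nesme--Vogtmann--Werner--type index theorem forces to be an integer; this is in effect the statement that $\pi_1$ of the group of $\psi_0$-preserving LGAs of a $1$d chain is $\Z$, detected by an integral ``charge pumped at an edge'' even in the absence of a $U(1)$ symmetry. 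Pinning down this quantization --- and thereby the $U(1)$ refinement $g$ --- is the technical crux on which the theorem rests.
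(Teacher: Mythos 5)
Your overall architecture (good cover, half-line--restricted local trivializations, overlap automorphisms preserving a reference state, assembly of a Deligne 2-cocycle whose curvature is $\omega^{(3)}$, reduction from invertible to SRE/factorized by stacking with a constant inverse) parallels the paper's proof of Theorem \ref{thm:1dBerryquantization}. But there is a genuine gap at exactly the point you flag yourself: the $U(1)$-valuedness of the transition data. You construct real-valued potentials $A(\alpha)$ on overlaps and then \emph{posit} a function $g(\alpha,\beta)\in C^\infty(V,U(1))$ with $d\log g(\alpha,\beta)=A(\alpha\beta)-A(\alpha)-A(\beta)$, deferring its existence to a conjectured quantization of $\oint A(\alpha)$ over loops of $\psi_0$-preserving LGAs, to be proved by a GNVW-type index theorem (``$\pi_1$ of the group of $\psi_0$-preserving LGAs is $\ZZ$''). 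That statement is not established anywhere in your argument, is not a bookkeeping step, and is essentially equivalent in strength to the integrality you are trying to prove; as written, the proposal assumes its own crux. Moreover, the sketched route (restrict the loop to a half-chain, implement the monodromy by an inner automorphism $\Ad_W$ with $W$ almost-local, read off a winding of the phase of $W$ on the half-chain ground state) already presupposes that the relevant boundary-confined, state-preserving automorphisms are implemented by almost-local unitaries depending \emph{smoothly} on parameters --- and that implementability is precisely where the SRE hypothesis must do real work; it does not come for free from confinement near the origin.

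The paper closes this gap without any index theorem. The key input is Lemma \ref{lem:alunitary} (resting on Proposition \ref{prop:alunitary}, proved via Uhlmann's theorem, the Fuchs--van de Graaf inequality, and a telescoping product of local unitaries, using the SRE assumption): on each $U_{ab}$ there is a \emph{smooth family of almost-local unitaries} $V_{ab}$ with $\psi\circ\alpha_{ab}^{-1}=\psi\circ\Ad_{V_{ab}^{-1}}$. The \v{C}ech 2-cochain is then defined directly as $h_{abc}:=\psi(W_{abc})$ with $W_{abc}=V_{ac}^{-1}\alpha_{ab}(V_{bc})V_{ab}$; since $\Ad_{W_{abc}}$ preserves the pure state $\psi$, this expectation value is automatically a phase, so $h_{abc}$ is $U(1)$-valued and its cocycle property and independence of the choice of $V_{ab}$ are short computations. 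Quantization is thus an immediate consequence of purity plus unitary implementability, and the real-valued forms $a_{ab}$, $b_a$ (your $A_{ij}$, $B_i$) only serve to verify $\iota([h])=[\omega^{(3)}]$ via $h_{abc}^{-1}dh_{abc}=a_{ab}-a_{ac}+a_{bc}$, $da_{ab}=b_a-b_b$, $db_a=-\omega^{(3)}$. To repair your proposal, replace the conjectured holonomy quantization by this implementability statement (or prove an equivalent smooth-implementer result yourself); without it, the construction of $g$, and hence of the class $[h]\in H^2(M,\bUM)$, is not justified.
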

We will first prove the result when $\psi$ is SRE, then extend the result to the case when $\psi$ is invertible. Let $\{U_a\}_{a\in J}$ be an open cover of $M$ such that for any $a_1,\hdots, a_n\in J$ the intersection $U_{a_1}\cap \hdots \cap U_{a_n}$ is either empty or contractible. Write $U_{ab}:=U_a\cap U_b$ and $U_{abc} := U_a\cap U_b \cap U_c$. 

The proof will proceed by constructing a Deligne-Beilinson cocycle whose curvature is $\omega^{(3)}$. Recall \cite{WZWreview,BrylLoop} that a Deligne-Beilinson 2-cocycle is a triple $(h_{abc} \in C^{\infty}(U_{abc},U(1)), a_{ab} \in \Omega^1(U_{ab}, i \RR), b_a \in \Omega^2(U_a,i \RR))$ such that
\begin{align}
h_{abc}h_{acd} &= h_{abd}h_{bcd}\\
h_{abc}^{-1} d h_{abc} &= a_{ab} - a_{ac} + a_{bc}\\ 
d a_{ab} &= b_a - b_b.
\end{align}
In the physics literature, such 2-cocycles are called 2-form gauge fields and define a connection on a line bundle gerbe over $M$ \cite{murray2007introduction}. The existence of such a cocycle with $d b_a= \omega^{(3)}|_{U_a}$ implies quantization of $[\omega^{(3)}/2 \pi i]$ \cite{WZWreview,BrylLoop}.


Pick a basepoint $x_0\in M$. For each $a\in J$ let $H_a: U_a\times [0,1] \to M$ be a smooth homotopy from the constant map $U_a\to \{x_0\}$ to the identity map $U_a \to U_a$. Let $L= \Z_{\le 0} \subset \Z$ and $R = L^c$.

For $a\in J$, define $\tilde\alpha_a^{-1} := \tau \exp (\int_0^1H_a^*\derG)$\footnote{Here and below, for a map $f:M \to N$ between smooth manifolds $M,N$ and a differential form $\derA \in \Omega^{\bullet}(N)$ we write $f^* \derA \in \Omega^{\bullet}(M)$ for the pullback of differential forms.}. This is a smooth family of automorphisms on $U_a$ that provides a local trivialization of $\psi$ in the sense that for every $x\in U_a$ we have $\psi_x = \psi_{x_0}\circ (\tilde{\alpha}_a)_x^{-1}$. Next, define $\alpha_a^{-1} := \tau \exp (\int_0^1 H_a^*\partial\res_Lh\derG)$ (where $h$ is the contracting homotopy from Theorem \ref{thm:acyclic} $i)$), which can be thought of as a restriction of $\tilde\alpha_a^{-1}$ to the left half-line. Define $\alpha_{ab}:=\alpha_a\circ\alpha_b^{-1}$. 

The family of states $\psi\circ \alpha_{ab}^{-1}$ differs from $\psi$ appreciably only near the origin. In fact, there exists a smooth family of unitaries $V_{ab}\in C^\infty(U_{ab}, \mathscr{A}_{al})$ satisfying $\psi\circ \alpha_{ab}^{-1} = \psi \circ \Ad_{V_{ab}^{-1}}$ and $\overline{\tr}(V_{ab}^{-1}dV_{ab}) = 0$. To see this, note  first that we have $\psi\circ \alpha_{ab}^{-1} = \psi \circ \tilde\alpha_a\circ\tilde{\alpha}_b^{-1}\circ\alpha_{ab}^{-1}$. It is clear that $\alpha_{ab}^{-1}$ is of the form $\tau\exp(\int_0^1\derH)$ for some $\derH$ confined on $L$. On the other hand, we have $\tilde\alpha^{-1}_a\circ\alpha_a = \tau\exp(\alpha_a^{-1}(\partial\res_Rh(\derG)))$ and $\tilde\alpha_a\circ \tilde\alpha_b^{-1}\circ \alpha_b\circ \tilde\alpha_a^{-1} = \tau\exp(\tilde\alpha_a\circ\alpha_b^{-1}(\partial\res_Rh(\derG)))$. Since both of these are of the form $\tau\exp(\int_0^1\derH)$ for a $\derH$ confined on $R$ their product is also of this form. Thus we may use Lemma \ref{lem:alunitary} to guarantee that such a $V_{ab}$ exists.

Define $W_{abc} := V_{ac}^{-1}\alpha_{ab}(V_{bc})V_{ab}$, which is a smooth $U(\mathscr{A}_{al})$-valued function satisfying $\overline{\tr}(W_{abc}^{-1}dW_{abc}) = 0$. We also define LGAs $\beta_{ab}:=\Ad_{V_{ab}^{-1}}\circ \alpha_{ab}$ and the derivation-valued forms $\derB_{ab}:=\beta^{-1}_{ab} d\beta_{ab}$. It is easy to check that $\beta_{ab}$ and $\derB_{ab}$ preserve $\psi$, and that $\beta_{ab}$ and $W_{abc}$ satisfy the following two relations:
\begin{align}
    \beta_{ab}\circ\beta_{bc}\circ\beta_{ac}^{-1} &= \Ad_{W_{abc}}^{-1},\label{eqn:Wabc-beta}\\
    W_{abd}^{-1}W_{acd}W_{abc} &= \beta_{ab}(W_{bcd}). \label{eqn:Wabc-cocycle}
\end{align}
From the first equation above it follows that $W_{abc}$ preserves $\psi$, so $h_{abc}:= \psi(W_{abc})$ is a smooth $U(1)$-valued function on $U_{abc}$. From the second it follows that $h_{abc}h_{acd}=h_{abd}h_{bcd}$, i.e., $h_{abc}$ is a cocycle (both this and the previous statements use Lemma \ref{lem:local-unitary-preserving}, which we will continue to use throughout).

Suppose $V_{ab}'$ is another smooth choice of unitaries satisfying $\psi\circ\alpha_{ab}^{-1}:=\psi\circ \Ad_{{V_{ab}'}^{-1}}$. Then $Y_{ab} := V_{ab}^{-1}V_{ab}'$ preserves $\psi$, so $g_{ab}:= \psi(Y_{ab})$ is a smooth $U(1)$-valued function on $U_{ab}$. Writing $W_{abc}':= {V'_{ac}}^{-1}\alpha_{ab}(V_{bc}')V_{ab}'$, we have
\begin{align}
    W_{abc}' = Y_{ac}^{-1}W_{abc}\beta_{ab}(Y_{bc})Y_{ab},
\end{align}
and so $\psi(W_{abc}') = g_{ab}g_{bc}g_{ac}^{-1}\psi(W_{abc})$. Thus the 2-cocycle constructed from $Y_{ab}$ differs from the 2-cocycle $h_{abc}$ constructed from $V_{ab}$ by a 2-coboundary, and so $h_{abc}$ defines an element $[h_{abc}]\in \check{H}^2(M,\bUM)\cong H^3(M,\ZZ)$ which is independent of the choice of $V_{ab}$'s. 

Differentiating (\ref{eqn:Wabc-beta}) gives
\begin{align}
    \ad_{W_{abc}^{-1}dW_{abc}} = \Ad_{W_{abc}}^{-1}d\Ad_{W_{abc}} = \derB_{ac} - \derB_{bc} - \beta_{bc}^{-1}(\derB_{ab}).
\end{align}
Since $W_{abc}^{-1}dW_{abc}$ is traceless, this implies that
\begin{align}
    h_{abc}^{-1}dh_{abc}=\psi\left(\derB_{ac}-\derB_{bc}-\beta_{bc}^{-1}(\derB_{ab})\right), \label{eqn:dh_abc}
\end{align}
where on the right-hand side we are evaluating a state on an inner derivation as in Section \ref{sec:localized}.

Below we write $\derF_{\derH}:=d\derH + \frac{1}{2}\{\derH,\derH\}$ for any $\derH\in \Omega^1(M,\mfkDal)$.
Since $\psi=\psi_0\circ \tilde\alpha_a^{-1}$, the family $\psi$ is parallel with respect to $\tilde\alpha_a d\tilde\alpha_a^{-1}$. By Lemma \ref{lem:connection-interpolation} there is a $\derC_a\in\Omega^1(U_a,\mfkDal)$ such that $\psi$ is parallel with respect to $\derC_a$, $\derC_a$ smoothly interpolates between $\tilde\alpha_a d\tilde\alpha_a^{-1}$ on $L$ and $\derG$ on $R$, and $\derF_{\derC_a}$ smoothly interpolates between $0$ and $\derF$.
Now define
\begin{align}
    a_{ab} &:= \psi(\derB_{ab} - \derC_b + {\beta_{ab}}^{-1}(\derC_a)), \label{eqn:ab} \\
    b_{a} &:= \psi(d\derC_a + \frac{1}{2}\{\derC_a,\derC_a\} - \partial \res_R\derg^{(1)}) \label{eqn:ab2}
\end{align}
Using Proposition \ref{prop:localization-LGAs} one can check that $\derB_{ab} - \derC_b + {\beta_{ab}}^{-1}(\derC_a)$ is smoothly confined on both $L$ and $R$, ensuring that $a_{ab}$ is well-defined and smooth. A similar argument shows this for $b_a$ as well.
\begin{lemma}
    We have
\begin{align}
    h_{abc}^{-1}dh_{abc} =  a_{ab} -  a_{ac} +  a_{bc}.
\end{align}
\begin{proof}
    Using the fact that $\psi\circ\beta_{bc}^{-1}= \psi$ we have
    \begin{align}
        a_{ab} -  a_{ac} +  a_{bc} &= \psi(\beta_{bc}^{-1}(\derB_{ab}) - \derB_{ac}+ \derB_{bc})) + \psi(\beta_{bc}^{-1}(\beta_{ab}^{-1}(\derC_a))-\beta_{ac}^{-1}(\derC_a))
        \label{eqn:dh2}
    \end{align}
    The second term equals $\psi(\Ad_{W_{abc}}(\derC_a)-\derC_a) = 0$, and so (\ref{eqn:dh2}) agrees with the expression (\ref{eqn:dh_abc}).
\end{proof}
\end{lemma}
\begin{lemma}
    We have
    \begin{align}
        da_{ab} = b_a-b_b \label{eqn:da_ab}
    \end{align}
\end{lemma}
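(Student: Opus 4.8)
The plan is to differentiate the definition (\ref{eqn:ab}) of $a_{ab}$ via the parallelism of $\psi$, recognize the answer as a difference of curvatures, and then check two cancellations, both of which come down to $\psi\circ\beta_{ab}=\psi$. First I would set $\derD_{ab}:=\derB_{ab}-\derC_b+\beta_{ab}^{-1}(\derC_a)$; as noted just before the lemma this is smoothly confined at the origin, hence inner, so it has a smooth lift $D_{ab}\in C^\infty(U_{ab},\SAal)$ and $a_{ab}=\psi(D_{ab})-\overline{\tr}(D_{ab})$. Differentiating, using that $\psi$ is parallel with respect to $\derC_b$ and that UAL derivations annihilate $\overline{\tr}$, one finds $da_{ab}=\psi\bigl(d\derD_{ab}+\{\derC_b,\derD_{ab}\}\bigr)$, i.e.\ $da_{ab}=\psi(D_{\derC_b}\derD_{ab})$ for the exterior covariant derivative $D_{\derC_b}=d+\{\derC_b,\cdot\}$.

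The key structural point is that $\derD_{ab}$ is the difference of two connections for which $\psi$ is parallel. Since $\psi\circ\beta_{ab}=\psi$ and $\psi$ is parallel with respect to $\derC_a$, Proposition~\ref{prop:gauge-states} gives that $\psi$ is parallel with respect to $\derC'_a:=\beta_{ab}^{-1}(\derC_a)+\beta_{ab}^{-1}d\beta_{ab}=\beta_{ab}^{-1}(\derC_a)+\derB_{ab}$, whose curvature is $\derF_{\derC'_a}=\beta_{ab}^{-1}(\derF_{\derC_a})$ by the gauge-transformation law; moreover $\derD_{ab}=\derC'_a-\derC_b$. The standard identity comparing the curvatures of two connections differing by $\derD_{ab}$ then gives
\begin{align*}
D_{\derC_b}\derD_{ab}=\derF_{\derC'_a}-\derF_{\derC_b}-\tfrac12\{\derD_{ab},\derD_{ab}\}=\beta_{ab}^{-1}(\derF_{\derC_a})-\derF_{\derC_b}-\tfrac12\{\derD_{ab},\derD_{ab}\}.
\end{align*}
Granting that (i) $\psi(\{\derD_{ab},\derD_{ab}\})=0$ and (ii) $\psi\bigl((\beta_{ab}^{-1}-\boldsymbol{1})(\derF_{\derC_a})\bigr)=0$, it follows that $da_{ab}=\psi(\derF_{\derC_a}-\derF_{\derC_b})$; and $b_a-b_b$ equals the same thing because $\psi$ is linear on inner derivations and the two $\partial\res_R\derg^{(1)}$ contributions cancel, while $\derF_{\derC_a}-\derF_{\derC_b}$ is indeed inner since by Lemma~\ref{lem:connection-interpolation} both curvatures interpolate between $0$ and $\derF$. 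This would finish the proof.

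It then remains to verify (i) and (ii). For (i): parallelism of $\psi$ with respect to both $\derC'_a$ and $\derC_b$ gives, for every $A\in\SAal$, $\psi(\derD_{ab}(A))=\psi(\derC'_a(A))-\psi(\derC_b(A))=d\psi(A)-d\psi(A)=0$, so each component of the lift $D_{ab}$ lies in $\SAal^\psi$; hence $\psi([D_{ab,i},D_{ab,j}])=0$ for all $i,j$, which is $\psi(\{\derD_{ab},\derD_{ab}\})=0$. For (ii): write $\derF_{\derC_a}=(\derF_{\derC_a}-\partial\res_R\derg^{(1)})+\partial\res_R\derg^{(1)}$; on the first (inner) term, $(\beta_{ab}^{-1}-\boldsymbol{1})(\ad_E)=\ad_{\beta_{ab}^{-1}(E)-E}$ has $\psi$-value $\psi(\beta_{ab}^{-1}(E))-\psi(E)=0$ by $\psi\circ\beta_{ab}^{-1}=\psi$; and since the LGA action commutes with $\partial$, $(\beta_{ab}^{-1}-\boldsymbol{1})(\partial\res_R\derg^{(1)})=\partial\bigl((\beta_{ab}^{-1}-\boldsymbol{1})(\res_R\derg^{(1)})\bigr)$, where the $1$-chain $(\beta_{ab}^{-1}-\boldsymbol{1})(\res_R\derg^{(1)})$ is confined at the origin (a stable intersection of $R$ and $L$), so its associated observable is the convergent sum $\sum_j(\beta_{ab}^{-1}(\derg^{(1)}_j)-\derg^{(1)}_j)$, whose $\psi$-value $\sum_j(\psi(\derg^{(1)}_j)-\psi(\derg^{(1)}_j))=0$ again by $\psi\circ\beta_{ab}^{-1}=\psi$.

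The hardest part is not any one identity but the localization bookkeeping: one must use the confinement calculus of Section~\ref{sec:localized} and Appendix~\ref{sec:appendix-confined} (e.g.\ Propositions~\ref{prop:localization-properties} and~\ref{prop:localization-LGAs}) at several points to guarantee that every derivation fed into $\psi(\,\cdot\,)$ is genuinely inner --- specifically that $\derF_{\derC_a}-\partial\res_R\derg^{(1)}$, $\derF_{\derC_a}-\derF_{\derC_b}$, $(\beta_{ab}^{-1}-\boldsymbol{1})(\derF_{\derC_a})$ and $\{\derD_{ab},\derD_{ab}\}$ are confined on a bounded set, so that $\psi$ is defined and linear on them. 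I would also want to reconcile the sign of the bracket term in $da_{ab}=\psi(d\derD_{ab}+\{\derC_b,\derD_{ab}\})$ with the paper's conventions for $d$, $\partial$ and the Koszul rule from Section~\ref{sec:families}; with those conventions this is exactly the covariant exterior derivative $D_{\derC_b}$ with a $+$ sign.
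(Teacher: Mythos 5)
Your proof is correct and follows essentially the same route as the paper: differentiate $a_{ab}$ using that $\psi$ is parallel with respect to the gauge-transformed connection $(\derC_a)^{\beta_{ab}}$ as well as $\derC_b$, arrive at $\psi(\beta_{ab}^{-1}(\derF_{\derC_a})-\derF_{\derC_b})$, and then use $\psi\circ\beta_{ab}=\psi$ together with the $\partial\res_R\derg^{(1)}$ regularization to identify this with $b_a-b_b$. The only (cosmetic) difference is that the paper differentiates with respect to the averaged connection $\tfrac12\bigl((\derC_a)^{\beta_{ab}}+\derC_b\bigr)$, which makes the quadratic term drop out identically, whereas you use $\derC_b$ and separately verify $\psi(\{\derD_{ab},\derD_{ab}\})=0$ via $\SAal^{\psi}$ --- both steps are fine.
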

\begin{proof}
On an overlap $U_{ab}$, $\psi$ is parallel with respect to both $\derC_a$ and $\derC_b$. by Proposition \ref{prop:gauge-states} it is parallel with respect to $(\derC_b)^{\beta_{ab}}$. Thus it is parallel with respect to $\frac{1}{2}((\derC_a)^{\beta_{ab}} + \derC_b) = \frac{1}{2}(\beta_{ab}^{-1}(\derC_a) + \derB^{ab} + \derC_b)$. Using this we get
\begin{align}
    da_{ab} &= \psi(\derB_{ab} + \frac{1}{2}\{\derB_{ab},\derB_{ab}\} - d\derC_b - \frac{1}{2}\{\derC_b,\derC_b\} + {\beta_{ab}}^{-1}(d\derC_a + \frac{1}{2}\{\derC_a,\derC_a\}))\nonumber\\
    &= \psi(-\derF_{\derC_b} + {\beta_{ab}}^{-1}(\derF_{\derC_a})),
\end{align}
where as before we write $\derF_{\derC_a}:=d\derC_a+\frac{1}{2}\{\derC_a,\derC_a\}$, and we used the fact that $\derF_{\derB_{ab}}=0$.
To split the last line into two terms, we regularize $-\derF_{\derC_b} + {\beta_{ab}}^{-1}(\derF_{\derC_a})$ by adding $\partial \res_R\derg^{(1)} - {\beta_{ab}}^{-1}(\partial \res_R \derg^{(1)})$. Since $\res_R\derg^{(1)} - {\beta_{ab}}^{-1}(\res_R\derg^{(1)})$ is a 1-chain confined at the origin, each of whose terms is traceless and has zero expectation under $\psi$, we have $\psi(\partial \res_R\derg^{(1)} - {\beta_{ab}}^{-1}(\partial \res_R \derg^{(1)})) = 0$. Thus:
\begin{align}
     da_{ab} &= \psi(-\derF_{\derC_b} + {\beta_{ab}}^{-1}(\derF_{\derC_a}) - \partial \res_R\derg^{(1)} + {\beta_{ab}}^{-1}(\partial \res_R\derg^{(1)}))\nonumber\\
     &= b_a - b_b.
\end{align}
\end{proof}
\begin{lemma}
    $db_a = -\psi(\langle\derg^{(2)},[*]\rangle)$
\end{lemma}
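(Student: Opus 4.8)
The plan is to differentiate $b_a=\psi(\derE_a)$ directly, where $\derE_a:=\derF_{\derC_a}-\partial\res_R\derg^{(1)}$ and $\derF_{\derC_a}=d\derC_a+\tfrac12\{\derC_a,\derC_a\}$. As already observed, both $\derF_{\derC_a}$ and $\partial\res_R\derg^{(1)}$ interpolate between $\derF$ on $R$ and $0$ on $L$, so $\derE_a$ is smoothly confined on a bounded set and hence is an inner derivation $\ad_{A_a}$ for some smooth lift $A_a\in C^\infty(U_a,\mathscr{A}_{al})$, with $b_a=\psi(A_a)-\overline{\tr}(A_a)$. The first step is the identity $db_a=\psi(D_a\derE_a)$, where $D_a:=d+\{\derC_a,\cdot\}$ is the covariant derivative for $\derC_a$: extending the parallel condition from functions to forms gives $d\psi(A_a)=\psi(D_a A_a)$, while $\overline{\tr}(\derC_a(A_a))=0$ gives $d\,\overline{\tr}(A_a)=\overline{\tr}(dA_a)=\overline{\tr}(D_a A_a)$, and $D_a\ad_{A_a}=\ad_{D_a A_a}$.

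Next I would apply the Bianchi identity $D_a\derF_{\derC_a}=0$ (which follows from $D_a^2=\{\derF_{\derC_a},\cdot\}$ and triviality of the center of $\mathscr{A}_{al}$), reducing the claim to $db_a=-\psi(D_a\partial\res_R\derg^{(1)})$. Although $\partial\res_R\derg^{(1)}$ is only confined on the half-line $R$, the derivation $D_a\partial\res_R\derg^{(1)}=-D_a\derE_a$ is inner, and I want to express it via $\derg^{(2)}$. Splitting off the global connection, $\{\derC_a,\partial\res_R\derg^{(1)}\}=\{\derG,\partial\res_R\derg^{(1)}\}+\{\derC_a-\derG,\partial\res_R\derg^{(1)}\}$, gives $D_a\partial\res_R\derg^{(1)}=D\,\partial\res_R\derg^{(1)}+\{\derC_a-\derG,\partial\res_R\derg^{(1)}\}$ with $D:=d+\{\derG,\cdot\}$. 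The crucial point is that the correction term has vanishing $\psi$-expectation: since $\psi$ is parallel with respect to both $\derG$ and $\derC_a$, the derivation $\derC_a-\derG$ lies in $\Omega^1(U_a,\mfkDal^\psi)$; since $\derC_a$ interpolates between $\tilde\alpha_a d\tilde\alpha_a^{-1}$ on $L$ and $\derG$ on $R$ it is smoothly confined on $L$; and since $\partial\res_R\derg^{(1)}$ is confined on $R$, Proposition \ref{prop:localization-properties} makes $\{\derC_a-\derG,\partial\res_R\derg^{(1)}\}$ confined on a bounded stable intersection of $L$ and $R$, hence inner. Expanding it as the absolutely convergent sum $\sum_{j\in R}\ad_{(\derC_a-\derG)(\derg^{(1)}_j)}$ then shows that each summand, and hence the whole derivation, is annihilated by $\psi$ (using $\derC_a-\derG\in\mfkDal^\psi$ and $\overline{\tr}\circ(\derC_a-\derG)=0$). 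Thus $\psi(D_a\partial\res_R\derg^{(1)})=\psi(D\,\partial\res_R\derg^{(1)})$.

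It remains to compute $\psi(D\,\partial\res_R\derg^{(1)})$. Since $d$ commutes with $\partial$ and $\res_R$, and $\{\derG,\cdot\}$ commutes with $\partial$ and $\res_R$ (it acts componentwise on chains), we have $D\,\partial\res_R\derg^{(1)}=\partial\res_R\,D\derg^{(1)}$; the $n=2$ case of the descent equation (\ref{eqn:descent-explicit}) gives $D\derg^{(1)}=-\partial\derg^{(2)}$, so $D\,\partial\res_R\derg^{(1)}=-\partial\res_R\partial\derg^{(2)}$. A short manipulation using $\partial^2=0$ identifies $\partial\res_R\partial\derg^{(2)}$, up to the sign conventions of Section \ref{sec:restictions}, with the inner derivation $\partial[\partial,\res_L]\derg^{(2)}$ to which $\langle\derg^{(2)},[*]\rangle$ is associated, and chasing the signs gives $db_a=-\psi(\langle\derg^{(2)},[*]\rangle)$.

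I expect the main obstacle to be the localization bookkeeping in the second paragraph: keeping straight which derivations are genuinely inner and which are merely confined on a half-line, checking that $D_a\partial\res_R\derg^{(1)}$ — a difference of two non-inner derivations — is nevertheless inner, and, above all, verifying that the correction $\{\derC_a-\derG,\partial\res_R\derg^{(1)}\}$ is both inner and $\psi$-null. The remaining ingredients — the parallel-transport identity, the Bianchi identity, commuting $D$ past $\partial$ and $\res$, the descent equation, and the closing sign check — are all routine given the machinery already developed.
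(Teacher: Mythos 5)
Your proof is correct and follows essentially the same route as the paper: differentiate $b_a$ covariantly using parallelness of $\psi$, kill $D_{\derC_a}\derF_{\derC_a}$ by the Bianchi identity, dispose of the cross term between the two connections by writing it as an absolutely convergent sum of inner terms each annihilated by $\psi$, and finish with the descent equation $D\derg^{(1)}=-\partial\derg^{(2)}$ and the identification of $\partial[\partial,\res_R]\derg^{(2)}$ (up to sign) with the derivation associated to $\angles{\derg^{(2)},[*]}$. The only cosmetic difference is that you place the cross term as $\{\derC_a-\derG,\partial\res_R\derg^{(1)}\}$ and kill it using $\derC_a-\derG\in\mfkDal^{\psi}$, whereas the paper places it as $\{\derG-\derC_a,\derF_{\derC_a}\}$ and uses that the curvature $\derF_{\derC_a}$ preserves $\psi$ applied to the chain lift $h(\derG-\derC_a)$; both variants are equally valid.
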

\begin{proof}
\begin{align}
    db_a &= \psi(D_{\derC_a}\derF_{\derC_a} + \{\derG-\derC_a,\derF_{\derC_a}\} + D_{\derG}\partial \derg^{(1)}_R) \nonumber \\
    &=\psi(\{\derG-\derC_a,\derF_{\derC_a}\}) + \psi(\partial [\partial, \res_{R}]\derg^{(2)}).\label{eqn:lastone}
\end{align}
The first term in (\ref{eqn:lastone}) is zero since $\{\derG-\derC_a,\derF_{\derC_a}\}$ is associated to the absolutely convergent $\sum_{j\in \Z^d}\derF_{C_a}(h(\derG-\derC_a)_j))$ and $\derF_{\derC_a}$ preserves $\psi$. The result then follows from the fact that $\angles{\derg^{(2)},[*]}$ is associated to $\partial[\partial,\res_L]\derg^{(2)} = -\partial[\partial,\res_R]\derg^{(2)}$.
\end{proof}
This concludes the proof of Theorem \ref{thm:1dBerryquantization} in the case that the family $\psi$ is SRE.

Although so far we defined the refined higher Berry class $[h_{abc}]\in H^3(M,2\pi i\ZZ)$ only for SRE families, it is easy to extend the definition as well as the proof of Theorem \ref{thm:1dBerryquantization}  to arbitrary smooth invertible families. Let $\psi$ be such a family. Pick $x_0\in M$, let $\psi'_{x_0}$ be an inverse for $\psi_{x_0}$, and consider the family $\psi\otimes \psi'_{x_0}=\{\psi_x\otimes \psi'_{x_0} \}_{x\in M}$. This family is SRE at the point $x_0\in M$ and thus it is SRE on the whole $M$. Define the refined higher Berry class $[h_{abc}]$ of $\psi$ as that of $\psi\otimes \psi'_{x_0}$. It is independent of the choice of $\psi'_{x_0}$. Indeed, suppose  $\psi''_{x_0}$ is another inverse for $\psi_{x_0}$. Since the SRE families $\psi_{x_0}\otimes\psi'_{x_0}$ and $\psi_{x_0}\otimes\psi''_{x_0}$ are constant, their refined higher Berry class vanish. Further, it is easy to see that for any two smooth SRE families $\psi,\psi'$ the refined higher Berry class of $\psi\otimes\psi'$ is the sum of the refined higher Berry classes of $\psi$ and $\psi'$. Therefore the refined higher Berry class of the SRE family $\psi\otimes\psi'_{x_0}\otimes \psi''_{x_0}\otimes\psi_{x_0}$ is equal to both the refined higher Berry class of $\psi\otimes\psi'_{x_0}$ and the refined higher Berry class of $\psi\otimes \psi''_{x_0}$.

\textbf{Example:} Let us describe an example of the family of SRE states  for which the class $[\omega^{(3)}/2 \pi i] \in H^3(M,\RR)$ is non-trivial. It is essentially the example from \cite{HermeleEtAl} adapted for our setting. Let $(\chi \in [0,\pi], \theta \in [0,\pi], \phi \in [0,2 \pi))$ be spherical coordinates on $S^3$ with $(\theta,\phi)$ being spherical coordinates on $S^2$ at fixed $0<\chi<\pi$. The equatorial $S^2$ is located at $\chi = \pi/2$. The regions $\chi\leq \pi/2$ and $\chi \geq \pi/2$ correspond to the upper $S^3_+$ and the lower hemisphere $S^3_-$, respectively. Let $B:S^2 \to \CB(\mathbb{C}^2)$ defined by $B(\theta,\phi) = \vec{n}(\theta,\phi) \cdot \vec{\sigma}$ with $\vec{n}$ being a unit vector in $\RR^3$ pointed in the direction $(\theta,\phi) \in S^2$ and $\vec{\sigma} = (\sigma^x, \sigma^y, \sigma^z)$ being Pauli matrices, and let $p_{\pm} = (1\pm B)/2$. Let us choose $\epsilon \in (0,\pi/4)$. Let us choose a smooth family of rank one projections $P^{(+)}:S_+^3 \to \CB(\mathbb{C}^2) \otimes \CB(\mathbb{C}^2)$ such that in the neighbourhood $0 \leq \chi \leq \epsilon$ it is some constant rank one projection, while in the neighbourhood $(\pi/2-\epsilon) \leq \chi \leq \pi/2$ it is given by 
$$
P^{(+)}(\theta,\phi,\chi) = p_+(\theta,\phi) \otimes p_-(\theta,\phi).
$$
Similarly, we define a smooth family of rank one projections $P^{(-)}:S_-^3 \to \CB(\mathbb{C}^2) \otimes \CB(\mathbb{C}^2)$ such in the neighbourhood $(\pi-\epsilon) \leq \chi \leq \pi$ it is some constant rank one projection, while in the neighbourhood $\pi/2 \leq \chi \leq (\pi/2+\epsilon)$ it is given by 
$$
P^{(-)}(\theta,\phi,\chi) = p_-(\theta,\phi) \otimes p_+(\theta,\phi).
$$

Let us consider a one-dimensional lattice system with $\SA_j \cong \CB(\mathbb{C}^2)$. For $k \in \ZZ$, we let $P^{(+)}_k \in \SAal$ be a smooth family of local observables on $S^3_+$ corresponding to $P^{(+)}$ under the isomorphism $\SA_{2k} \otimes \SA_{2k+1} \cong \CB(\mathbb{C}^2) \otimes \CB(\mathbb{C}^2)$. Similarly, let  $P^{(-)}_k \in \SAal$ be a smooth family of observables on $S^3_-$ corresponding to $P^{(-)}$ via the isomorphism $\SA_{2k-1} \otimes \SA_{2k} \cong \CB(\mathbb{C}^2) \otimes \CB(\mathbb{C}^2)$. We let $\{\psi_x\}_{x \in S_+^3}$ be a family of pure states of $\SA$ uniquely defined by the requirement that when restricted to $\SA_{2k}\otimes\SA_{2k+1}$ it is given by $A\mapsto {\rm Tr} P^{(+)}_k A$. This is a smooth family of states on $S_+^3$ which is parallel with respect to $\derG'^{(+)} \in \Omega^1(S_+^3,\mfkDal)$
given by 
\begin{equation}
    \derG'^{(+)} = \sum_{k \in \ZZ} [P^{(+)}_k,d P^{(+)}_k].
\end{equation}
Similarly, we consider a family of pure states $\{\psi_x\}_{x \in S_-^3}$ of $\SA$ defined by the requirement that when restricted to $\SA_{2k-1}\otimes\SA_{2k}$ it is given by $A\mapsto {\rm Tr} P^{(-)}A$. This is a smooth family of states on $S_-^3$ which is parallel with respect to $\derG'^{(-)} \in \Omega^1(S_+^3,\mfkDal)$
given by 
\begin{equation}
    \derG'^{(-)} = \sum_{k \in \ZZ} [P^{(-)}_k,d P^{(-)}_k].
\end{equation}
It is easy to see that the two families of states agree on the equatorial $S^2$. Moreover, the resulting family of states on the whole $S^3$ is smooth.  To see this, consider an open neighbourhood $E$ of the equatorial $S^2$ given by $\pi/2-\epsilon/2<\chi<\pi/2+\epsilon/2.$ $\psi\vert_E$ is a family of  product states whose restriction to $\SA_{2k}$ (resp. $\SA_{2k+1}$) is given by $A\mapsto {\rm Tr} p_+ A$ (resp.  $A\mapsto {\rm Tr} p_- A$ ). Therefore $\psi\vert_E$ is parallel with respect to a derivation-valued 1-form $\derA \in \Omega^2(E,\mfkDal)$ given by the sum of on-site terms $[p_+,d p_+]$ on $\SA_{2k}$ and $[p_-,d p_-]$ on $\SA_{2k+1}$ for $k \in \ZZ$. Since $S_+^3,S_-^3,$ and $E$ form an open cover of $S^3$, there exists a partition of unity $1=\rho_++\rho_-+\rho_E$, where $\rho_+,\rho_-,\rho_E$ are smooth functions supported on $S_+^3,S_-^3,E$, respectively. We get a globally-defined 1-form $\derG\in \Omega^1(S^3,\mfkDal)$ such that $\psi$ is parallel with respect to $D=d+\derG$ by letting $\derG=\rho_+ \derG'^{(+)}+\rho_-\derG'^{(-)} +\rho_E \derA$.


For such $\derG$, we can choose $\derg^{(1)} \in \Omega^2(S^3,C^{-1}_{\psi})$ such that 1) $\derg^{(1)}_{2k}, \derg^{(1)}_{2k+1} \in \SA_{2k} \otimes \SA_{2k+1}$ on $S^3_+$; 2) $\derg^{(1)}_{2k}, \derg^{(1)}_{2k-1} \in \SA_{2k-1} \otimes \SA_{2k}$ on $S^3_-$; 3) on $E$, $\derg^{(1)}_{2k}$ and $\derg^{(1)}_{2k+1}$ are given by $\frac12 [d p_+, d p_+] \in \SA_{2k}$ and $\frac12 [d p_-, d p_-] \in \SA_{2k+1}$, respectively. Note that for any $k$, $D(\derg^{(1)}_{2k}+ \derg^{(1)}_{2k+1})=0$ on $S^3_+$ and $D(\derg^{(1)}_{2k-1}+ \derg^{(1)}_{2k})=0$ on $S^3_-$. Therefore $\angles{\mathsf{g}^{(2)}, [*]}$ vanishes on $S^3_-$ and coincides with $D \derg^{(1)}_0$ on $S^3_+$. The integral of the higher Berry curvature over $S^3$ is
\begin{multline}
\int_{S^3} \psi(\angles{\mathsf{g}^{(2)}, [*]}) = \int_{S^3_+} \psi(D \derg^{(1)}_0) = \int_{S^2 = \partial S^3_+} \psi(\derg^{(1)}_0) = \int_{S^2} \frac{1}{2} \text{Tr} \left( p_+ [d p_+, d p_+] \right) = 2 \pi i.
\end{multline}


\section{2d Thouless pump}\label{sec:thoulessquant}
Let $\psi_M$ be a smooth gapped family of 2d $U(1)$-invariant states over a compact manifold $M$. The 2d Thouless pump invariant $\angles{\dert^{(3)}, [*]}$ associates to it a class in de Rham cohomology $H^2(M,{i}\RR)$. 
In this section we show, using a proof analogous to Laughlin's flux insertion argument, that this class can be refined to a class in integral cohomology. Here is a roadmap of the argument. 

We extend $\psi_M$ to a family of states $\psi$ on $M\times S^1$ by performing a $U(1)$ gauge transformation on the right half-plane, which we interpret as implementing a $2\pi$-flux insertion at a point at infinity on the $y$-axis $\partial \mathbb{H}_1$. Although the 3-form $\psi(\angles{\derg^{(1)},[\partial \mathbb{H}_2]})$ which measures the Berry curvature flux across the $x$-axis is divergent, its vertical component $\psi(\angles{\derg^{(1)}_{vert},[\partial \mathbb{H}_2]})$ is finite because varying the $S^1$-parameter $\theta \in [0, 2\pi)$ changes $\psi$ appreciably only near the $y$-axis in $\ZZ^2$. In Section \ref{sec:thouless-as-berry} below we compute this form and show that it equals $d\theta\wedge \angles{\dert^{(3)}, [*]}$. Thus in particular the total Berry curvature pumped across the $x$-axis during the flux insertion is equal to $2\pi$ times the 2d Thouless invariant. This result is true for any gapped U(1)-invariant smooth family $\psi_M$.

Then, in Section \ref{sec:2dThoulessQuant} we use an argument analogous to the proof of Berry flux quantization in Section \ref{sec:berryquant} to show that if $\psi_M$ is invertible then the total Berry curvature transported across the $x$-axis along the flux insertion, which is given by the fiber integral $\int_{S^1}\psi(\angles{\derg^{(1)}_{vert},[\partial \mathbb{H}_2]})$ and thus equals $2\pi\angles{\dert^{(3)}, [*]}$, is integral.

\subsection{2d Thouless pump as  a higher Berry curvature}\label{sec:thouless-as-berry}
Throughout this section we will use the following action of the de Rham complex $\Omega^\bullet(M)$ on $\Omega^\bullet(M,C^\bullet)$: for $\eta\in \Omega^p(M)$ and $\mathsf{a} \in \Omega^q(M,C^k)$ we put
\begin{align}
    \eta\wedge \mathsf{a}(X_1,\hdots, X_{p+q}) := \sum_{\sigma}\frac{\sgn(\sigma)}{p!q!}\eta(X_{\sigma(1)}, \hdots X_{\sigma(p)})
    \mathsf{a}(X_{\sigma(p+1)}, \hdots X_{\sigma(p+q)}).
\end{align}
for any vector fields $X_1, \hdots, X_{p+q}$.
It is easy to check that $d(\eta \wedge \mathsf{a}) = d\eta \wedge \mathsf{a} + (-1)^p\eta\wedge d\mathsf{a}$ and if $\mathsf{b} \in \Omega^r(M,C^\ell)$ then $\{\mathsf{b}, \eta\wedge \mathsf{a}\} = (-1)^{p(r+\ell)}\eta\wedge \{\mathsf{b},\mathsf{a}\}$.

Let $(\psi_M,\derG_M)$ be a gapped $U(1)$-invariant smooth family of 2d states on $M$. View it as a family of states on $M\times S^1$ that is constant in the $S^1$ direction, which we will also call $(\psi_M,\derG_M)$. We will often refer to a (chain-valued) differential form on $M$ and its pullback by the projection $M\times S^1\to M$ by the same symbol; it should be clear by context which is meant. 

Define $\rho = \tau\exp(\int_0\partial \res_{\mathbb{H}_1} \derq^{(1)}d\theta)$, where $\derq^{(1)}$ is the 1-chain consisting of the generators of the onsite $U(1)$ action, and $\theta$ is the coordinate on $S^1$. This is a smooth family of automorphisms on $M\times S^1$. Define
$\psi := \psi_M\circ \rho^{-1}$. This is a $U(1)$-invariant family of gapped states which represents the threading of a flux ``at infinity'' into the original family $\psi_M$ (see Figure \ref{fig:psirho} below).
\begin{figure}[ht]
\centering
\includegraphics[width=0.5\textwidth]{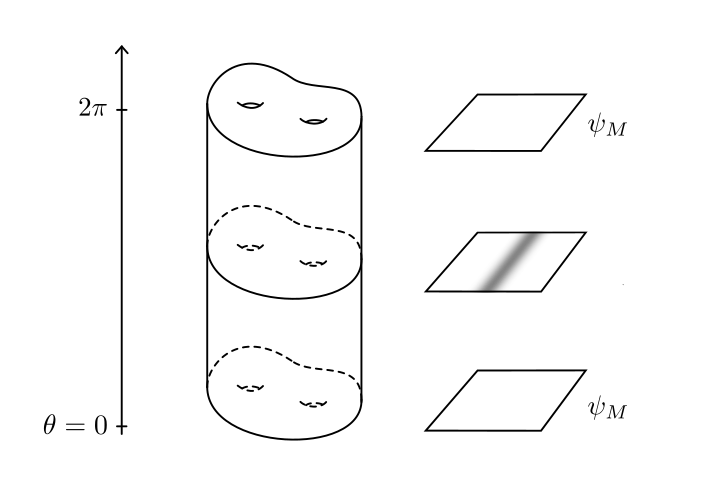}
\caption{A schematic depiction of $\psi := \psi_M\circ{\rho^{-1}}$. The shaded areas on the right side indicate regions in $\Z^2$ where $\psi$ differs from $\psi_M$. For $x\in M$ and $0\le \theta\le 2\pi$, the state $\psi_{(x,\theta)}$ is the state $(\psi_M)_x$ with a $\theta$-domain wall on the y-axis. Cycling $\theta$ from $0$ to $2\pi$ performs a flux insertion at infinity. }
\label{fig:psirho}
\end{figure}
By Proposition \ref{prop:gauge-states} $\psi$ is parallel with respect to the $U(1)$-invariant connection $\derG_M^{\rho^{-1}} = \rho(\derG_M - \partial\res_{\mathbb{H}_1} \derq^{(1)}d\theta)$. However, we will use a slightly different connection. Define $\derG \in \Omega^1(M\times S^1,\mfkDal)$ by
\begin{align}
    \derG :=  \rho(\derG_M - \partial\res_{\mathbb{H}_1}(\derq^{(1)} - \dert^{(1)})d\theta ). \label{eqn:G-fluxinsert}
\end{align}
where $\dert^{(1)}:= h^{\psi_M}(\derQ)$, with $h^{\psi_M}$ the contracting homotopy from Theorem \ref{thm:acyclic} $ii)$.
This differs from $\derG_M^{\rho^{-1}}$ by the term $\rho(\partial\res_{\mathbb{H}_{1}}\dert^{(1)})d\theta$ which preserves $\psi$ and is $U(1)$-invariant, so $\psi$ is still parallel with respect to $\derG$, and $\derG$ is still $U(1)$-invariant. The reason we choose $\derG$ instead of $\derG_M^{\rho^{-1}}$ is that its vertical component is confined on the $y$-axis $\partial\mathbb{H}_1$. In what follows, we will need to choose all our derivation-valued forms to satisfy this constraint. 

\begin{theorem}\label{thm:thouless-as-berry}
    The smooth gapped family $(\psi,\derG)$ admits a solution to the MC equation (\ref{eqn:descent}) such that $\derg^{\bullet}_{vert}$ is smoothly confined on the $y$-axis $\partial\mathbb{H}$, and 
    \begin{align}
        \psi(\angles{\derg^{(2)}_{vert},[\partial \mathbb{H}_2]}) = d\theta \wedge \psi_M(\angles{\dert^{(3)}}, [*])
    \end{align}
    where $\dert^{(3)}$ is a solution to the equivariant Maurer-Cartan equation (\ref{eqn:equivariant-descent}) for $(\psi_M,\derG_M)$.
\end{theorem}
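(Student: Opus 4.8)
The plan is to build a solution $\derg^\bullet$ of the Maurer--Cartan equation (\ref{eqn:descent}) for $(\psi,\derG)$ on $M\times S^1$ that is ``adapted'' to the flux insertion. Fix a solution $\derg^\bullet_M$ of (\ref{eqn:descent}) for $(\psi_M,\derG_M)$ and a solution of the equivariant MC equation (\ref{eqn:equivariant-descent}) for $(\psi_M,\derG_M)$; the latter supplies, in addition to $\derg^\bullet_M$, the Thouless tower $\dert^{(n)}$ obeying (\ref{eqn:descent-thouless}). One takes all chains involved to have traceless terms of zero $\psi_M$-expectation, as produced by the $\psi_M$-adapted homotopy $h^{\psi_M}$ of Theorem \ref{thm:acyclic}. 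I will look for $\derg^\bullet$ of the form $\derg^{(n)} = \rho(\derg^{(n)}_M) + d\theta\wedge\derg^{(n)}_{\perp}$, where the $M$-valued coefficient $\derg^{(n)}_{\perp}$ is $\rho\big([\partial,\res_{\mathbb{H}_1}]\dert^{(n+1)} + (\text{correction})\big)$ and is confined on the vertical line $\partial\mathbb{H}_1$ (since $[\partial,\res_{\mathbb{H}_1}]$ applied to any chain is, and $\rho$ preserves confinement). For $n=0$ this is forced by (\ref{eqn:G-fluxinsert}): since $\partial(\dert^{(1)}-q^{(1)})=\derQ-\derQ=0$, the derivation $\mathsf s:=\partial\res_{\mathbb{H}_1}(q^{(1)}-\dert^{(1)})=[\partial,\res_{\mathbb{H}_1}](q^{(1)}-\dert^{(1)})$ is confined on $\partial\mathbb{H}_1$, and the $d\theta$-component of $\derG$ is precisely $-d\theta\otimes\rho(\mathsf s)$, while its $d\theta$-free component is $\rho(\derG_M)$.

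The next step is to verify the MC equation by splitting (\ref{eqn:descent-explicit}) on $M\times S^1$ into its $d\theta$-free and $d\theta$ parts, using $d=d_M+d\theta\wedge\partial_\theta$. Because the generator $\partial\res_{\mathbb{H}_1}q^{(1)}\,d\theta$ of $\rho$ is vertical and $M$-independent, $\rho$ commutes with $d_M$, with $\partial$, and with $\{\cdot,\cdot\}$, and $\partial_\theta\rho(\mathsf a)=\rho(\{\mathsf q',\mathsf a\})$ with $\mathsf q':=\partial\res_{\mathbb{H}_1}q^{(1)}$. Hence the $d\theta$-free component of (\ref{eqn:descent-explicit}) for $\derg^{(n)}_{\parallel}:=\rho(\derg^{(n)}_M)$ is just $\rho$ applied to (\ref{eqn:descent-explicit}) for $\derg^{(n)}_M$, which holds by assumption. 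The $d\theta$ component, after stripping $\rho$, becomes a system of equations for $\derg^{(n)}_{\perp}$ that one matches term by term against the Thouless descent equations (\ref{eqn:descent-thouless}) for $\dert^{(n+1)}$, using that $[\partial,\res_{\mathbb{H}_1}]$ commutes with $d_M$ and anticommutes with $\partial$, that $\res_{\mathbb{H}_1}$ is a homomorphism for the bracket (so $[\partial,\res_{\mathbb{H}_1}]\{\mathsf a,\mathsf b\}=\pm\{[\partial,\res_{\mathbb{H}_1}]\mathsf a,\res_{\mathbb{H}_1}\mathsf b\}\pm\{\res_{\mathbb{H}_1}\mathsf a,[\partial,\res_{\mathbb{H}_1}]\mathsf b\}$), and that the $\derg^{(k)}_M$ are $U(1)$-invariant, which kills $\{\derQ,\derg^{(k)}_M\}$ and lets the $\mathsf q'$-term combine with the $\mathsf s$-term coming from $\derG$. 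The only residual discrepancies are the failures of the exact Leibniz rule, i.e.\ terms of the shape $\{[\partial,\res_{\mathbb{H}_1}]\mathsf a,\res_{\mathbb{H}_1^c}\mathsf b\}$; these are $\partial$-closed (because the MC system on $M\times S^1$ and the Thouless system on $M$ are each consistent) and confined on $\partial\mathbb{H}_1$, so by the acyclicity of Theorem \ref{thm:acyclic}, applied inside the ideal of chains confined on $\partial\mathbb{H}_1$ (the homotopy $h^{\psi}$ preserving this confinement by the localization results of Section \ref{sec:localized}), they can be absorbed order by order into the corrections. This produces a genuine solution with $\derg^{(n)}-\rho(\derg^{(n)}_M)=d\theta\wedge\derg^{(n)}_{\perp}$ confined on $\partial\mathbb{H}_1$ for every $n$; in particular the asserted property of $\derg^{(2)}$ holds.

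It then remains to compute the curvature flux. Since $\derg^{(2)}-\rho(\derg^{(2)}_M)=d\theta\wedge\derg^{(2)}_{\perp}$ and $\angles{\cdot,[\partial\mathbb{H}_2]}$ is $C^\infty(M\times S^1)$-linear and commutes with $\rho$, we have
\[
\psi\!\left(\angles{\derg^{(2)}-\rho(\derg^{(2)}_M),[\partial\mathbb{H}_2]}\right)=d\theta\wedge\psi_M\!\left(\angles{[\partial,\res_{\mathbb{H}_1}]\dert^{(3)}+(\text{correction}),[\partial\mathbb{H}_2]}\right),
\]
where we used $\psi\circ\rho=\psi_M$. By the very definitions (\ref{eval-hyperplane}) and (\ref{eval-fundamental}) with $d=2$ one has $\angles{[\partial,\res_{\mathbb{H}_1}]\dert^{(3)},[\partial\mathbb{H}_2]}=\sum_{j\in\Z^2}\big([\partial,\res_{\mathbb{H}_2}][\partial,\res_{\mathbb{H}_1}]\dert^{(3)}\big)_j=\angles{\dert^{(3)},[*]}$, while the correction — a chain confined on $\partial\mathbb{H}_1$, so that $[\partial,\res_{\mathbb{H}_2}]$ of it is supported near the origin, and each of whose terms has zero $\psi_M$-expectation — contributes nothing to $\psi_M(\angles{\cdot,[\partial\mathbb{H}_2]})$, by exactly the argument used for the regularization terms in the proof of Theorem \ref{thm:1dBerryquantization}. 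This gives $\psi(\angles{\derg^{(2)}-\rho(\derg^{(2)}_M),[\partial\mathbb{H}_2]})=d\theta\wedge\psi_M(\angles{\dert^{(3)},[*]})$, the overall sign being fixed by the $n=0$ identity $\derg^{(0)}=\derG$ via (\ref{eqn:G-fluxinsert}).

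The main obstacle is the bookkeeping in the second step: one must show that at every order the $\res_{\mathbb{H}_1}$-Leibniz discrepancies are $\partial$-closed and can be absorbed by corrections that remain simultaneously (i) confined on $\partial\mathbb{H}_1$ — so that $\derg^{(2)}-\rho(\derg^{(2)}_M)$ has the stated confinement and $\angles{\cdot,[\partial\mathbb{H}_2]}$ converges absolutely — and (ii) built from chains with traceless terms of zero $\psi_M$-expectation — so that they drop out of the final identity of $2$-forms rather than merely of cohomology classes. Both points require running the inductive solution of the descent equations inside the ideal of chains confined on $\partial\mathbb{H}_1$ using a homotopy $h^\psi$ that respects this confinement and the $\psi$-adapted structure, which is where the localization machinery of Section \ref{sec:localized} and the smoothly-confined refinements are used. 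A secondary, routine, subtlety is the treatment of the index-$0$ and $1$-chain boundary cases, where $\res_{\mathbb{H}_1}$ and $\partial$ interact with $\derQ\in\mfkDal$ rather than with an honest chain.
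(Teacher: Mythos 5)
Your overall strategy is the same as the paper's: make the ansatz $\derg^{(n)}=\rho(\derg^{(n)}_M)+d\theta\wedge\derg^{(n)}_\perp$ with $\derg^{(n)}_\perp$ built from the Thouless descendants via $[\partial,\res_{\mathbb{H}_1}]\dert^{(n+1)}$ plus corrections, and then use $\angles{[\partial,\res_{\mathbb{H}_1}]\dert^{(3)},[\partial\mathbb{H}_2]}=\angles{\dert^{(3)},[*]}$ together with $\psi\circ\rho=\psi_M$; your final evaluation step is correct and is exactly what the paper does. The gap is in the middle step, where you defer the existence of confined $\partial$-preimages to ``acyclicity of Theorem \ref{thm:acyclic} applied inside the ideal of chains confined on $\partial\mathbb{H}_1$,'' asserting that $h^{\psi}$ preserves this confinement ``by the localization results of Section \ref{sec:localized}.'' No such result exists in the paper: Proposition \ref{prop:localization-properties} and Appendix \ref{sec:appendix-confined} control $\partial$, the bracket, restrictions and LGAs, but say nothing about the contracting homotopy, and Theorem \ref{thm:acyclic} is not claimed to respect confinement. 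Likewise your claim that the residual Leibniz discrepancies are individually $\partial$-closed is not justified (and is not what is needed); what is true is that the order-$n$ obstruction is $\partial$-exact with an explicit but non-confined preimage, and the whole difficulty is to exhibit a \emph{confined} preimage. The paper sidesteps the homotopy issue entirely by writing these preimages in closed form: $\derf^{(1)}=[\partial,\res_{\mathbb{H}_1}]\dert^{(2)}$ and $\derf^{(2)}=\{\res_{\mathbb{H}_1}\dert^{(1)},\res_{\mathbb{H}_1^c}\derg^{(1)}_M\}+[\partial,\res_{\mathbb{H}_1}]\dert^{(3)}$, with confinement on $\partial\mathbb{H}_1$ coming from the $[\partial,\res]$ structure and from the stable intersection of $\mathbb{H}_1$ and $\mathbb{H}_1^c$ via Proposition \ref{prop:localization-properties}; since only the orders $n=1,2$ are needed, there is no induction and no confined homotopy to construct.

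A related point: the cross term $\{\res_{\mathbb{H}_1}\dert^{(1)},\res_{\mathbb{H}_1^c}\derg^{(1)}_M\}$ is not a discrepancy that can be ``absorbed'' away — it is a necessary summand of $\derg^{(2)}_\perp$, required for $\partial\derg^{(2)}$ to reproduce $-D\derg^{(1)}$. What must be shown is that its contribution to the pairing vanishes, i.e. $\psi_M(\angles{\{\res_{\mathbb{H}_1}\dert^{(1)},\res_{\mathbb{H}_1^c}\derg^{(1)}_M\},[\partial\mathbb{H}_2]})=0$, which is the statement the paper isolates at the end of its computation. Your blanket assertion that all correction terms have entry-wise zero $\psi_M$-expectation because they come from $h^{\psi_M}$ is an additional property of the homotopy that this paper never states; you would either have to prove it or replace it by a direct argument of the above form (expectation of a bracket of $\psi_M$-preserving, suitably confined quantities). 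So the proposal is the right idea, but as written it rests on two unestablished lemmas (confinement-preservation and zero-expectation output of $h^{\psi}$) precisely where the paper instead supplies explicit formulas.
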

\begin{proof}
We proceed as though we were computing the 1d Berry invariant for the family $\psi$ by solving the MC equation for $\derG$. At each step this will require adding a counterterm to ensure that the vertical component of $\derg^{(k)}$ is confined on $\partial\mathbb{H}_1$. As it turns out, choosing these counterterms will precisely involve solving the equivariant descent equations (\ref{eqn:descent-thouless}). Indeed, the first counterterm $d\theta \wedge \rho(\partial\res_{\mathbb{H}_{1}}\dert^{(1)})$, which was required to regularize the vertical component of $\derG$, already involved solving $\partial \dert^{(1)} = \derQ$.

Let $\derg^{\bullet}_M$ be a solution of the Maurer-Cartan equation (\ref{eqn:descent}) for $(\psi_M,\derG_M)$. Below we write $D_M = d + \{\derG_M, \cdot \}$ and $\derF_{M} = d\derG_M + \frac{1}{2}\{\derG_M,\derG_M\}$. We begin by computing the curvature of $\derG$:
\begin{align}
    \derF = \rho(\derF_M + D_M(d\theta\wedge \partial\res_{\mathbb{H}_1} \dert^{(1)})).
\end{align} The first step in the descent equation is to find a $\derg^{(1)}\in \Omega^2(M\times S^1,C_\psi^{-1})$ with $\partial \derg^{(1)} = \derF$. We will choose a $\derg^{(1)}$ of the form
\begin{align}
    \derg^{(1)} = \rho(\derg^{(1)}_M + d\theta\wedge \derf^{(1)}),\label{eqn:berry-curv-reg}
\end{align} where $\partial \derf^{(1)} = -D_M(\partial\res_{\mathbb{H}_1} \dert^{(1)})$.
Since $-D_M(\partial\res_{\mathbb{H}_1}\dert^{(1)}) = -\partial D_M\res_{\mathbb{H}_1}\dert^{(1)}$, we could use the naive expression
\begin{align}
    \derf_{naive}^{(1)} :=-D_M(\res_{\mathbb{H}_1} \dert^{(1)}).
\end{align}
However, since $\derf_{naive}^{(1)}$ isn't confined on $\partial\mathbb{H}_1$, it needs to be regularized. To do this, let $\dert^{(2)} \in \Omega^{1}(M,C_{\psi_M}^{-2})$ be a $U(1)$-invariant chain with $\partial \dert^{(2)} = -D_M\dert^{(1)}$, and set
\begin{align}
    \derf^{(1)} &:= -[\partial, \res_{\mathbb{H}_1}]\dert^{(2)}.
\end{align}
For the next step in the descent procedure, we seek $\derg^{(2)} \in \Omega^3(M\times S^1,C_\psi^{-2})$ which satisfies $\partial \derg^{(2)} = -D\derg^{(1)}$. Calculating $D\derg^{(1)}$ gives
\begin{align}
    D\derg^{(1)} &= \rho\bigg(D_M\derg_M^{(1)} + d\theta\wedge \big(\{\partial\res_{\mathbb{H}_1} \dert^{(1)}, \derg_M^{(1)}\} - D_M(\derf^{(1)})\big)\bigg),
\end{align}
so we choose the following ansatz for $\derg^{(2)}$:
\begin{align}
    \derg^{(2)} = \rho(\derg_M^{(2)} + d\theta\wedge \derf^{(2)}), \label{eqn:g2def}
\end{align}
where $\derf^{(2)}\in \Omega^2(M\times S^1, C_\psi^{-2})$ must satisfy
\begin{align}
    \partial \derf^{(2)} &=  -\{\partial\res_{\mathbb{H}_1} \dert^{(1)}, \derg_M^{(1)}\} + D_M(\derf^{(1)})\nonumber\\
    &= -\partial\bigg(\{\res_{\mathbb{H}_1}\dert^{(1)}, \derg^{(1)}_M\} + \res_{\mathbb{H}_1}D_{M}\dert^{(2)}\bigg).\label{eqn:x1}
\end{align}
This gives an unregularized expression for $\derf^{(2)}$:
\begin{align}
    \derf_{naive}^{(2)} &= -\{\res_{\mathbb{H}_1}\dert^{(1)}, \derg^{(1)}_M\} - \res_{\mathbb{H}_1}D_M\dert^{(2)}\nonumber\\
    &= -\{\res_{\mathbb{H}_1}\dert^{(1)}, \res_{\mathbb{H}_1^c} \derg^{(1)}_M\} - \res_{\mathbb{H}_1}\big(\{\dert^{(1)}, \derg^{(1)}_M\}+ D_M\dert^{(2)}\big). \label{eqn:x2}
\end{align}
Only the second term in (\ref{eqn:x2}) needs regularization, giving
\begin{align}
    \derf^{(2)} = -\{\res_{\mathbb{H}_1}\dert^{(1)},\res_{\mathbb{H}_1^c} \derg^{(1)}_M\} + [\partial, \res_{\mathbb{H}_1}]\dert^{(3)} 
\end{align}
where $\dert^{(3)}\in \Omega^2(M,C_{\psi_M}^{-3})$ is chosen such that
\begin{align}
    \partial\dert^{(3)} = \{\dert^{(1)}, \derg^{(1)}_M\}+ D_M\dert^{(2)}.
\end{align}
Clearly $d\theta\wedge \rho(\derf^{(2)})$ is the vertical component of $\derg^{(2)}$ and we have 
\begin{align}
    \psi(\angles{d\theta\wedge \rho(\derf^{(2)}),[\partial\mathbb{H}_2]}) &= d\theta \wedge \psi_M(\partial[\partial,\res_{\mathbb{H}_2}]\derf^{(2)}) \nonumber\\
    &= d\theta \wedge \psi_M(\partial[\partial,\res_{\mathbb{H}_2}][\partial, \res_{\mathbb{H}_1}]\dert^{(3)}))\nonumber\\
    &= d\theta \wedge  \psi_M(\angles{\dert^{(3)},[*]}).
\end{align}
    The first equality holds because $\psi_M = \psi\circ\rho$ and the second because $\psi_M(\angles{\{\res_{\mathbb{H}_1}\dert^{(1)},\res_{\mathbb{H}_1^c}\mathsf{g}_M^{(1)}\},[\partial\mathbb{H}_2]}) = 0$, since both $\dert^{(1)}$ and $\derg^{(1)}_M$ preserve $\psi_M$.
\end{proof}
\subsection{2d Thouless pump quantization}\label{sec:2dThoulessQuant}
    Having expressed the 2d Thouless invariant as the Berry curvature transport during flux insertion, we proceed to proving that this invariant is quantized if $\psi_M$ is SRE. The proof is along the lines of the proof of ordinary Berry curvature quantization in Section \ref{sec:berryquant}, but some modifications must be made because $\psi$ is not a family of 1d states -- instead it is only 1-dimensional in the $\theta$ direction. Let $\eta^{(2)}:=\psi_M(\angles{\dert^{(3)},[*]})$ be the 2d Thouless invariant of a family of states $\psi_M$ on $M$, and recall that we write $\iota: H^2(M,\bUM)\cong H^2(M,2\pi i \Z) \hookrightarrow H^3(M,i\RR)$ for the \v{C}ech-de Rham map.
    
    \begin{theorem}\label{thm:2dThoulessquantization}
    To any smooth family $(\psi_M,\derG_M)$ of invertible $U(1)$-invariant 2d states on $M$ we can associate a class $[h_{ab}] \in H^1(M,\bUM)$ such that $\iota([h_{ab}]) = 2\pi[\eta^{(2)}]$. In particular, the class $-i[\eta^{(2)}]\in H^2(M,\RR)$ is integral.
    \end{theorem}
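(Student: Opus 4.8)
The plan is to mimic the \v{C}ech-de Rham construction of Section \ref{sec:berryquant}, but one cohomological degree lower, using the "flux-insertion" family $(\psi,\derG)$ on $M \times S^1$ built in Section \ref{sec:thouless-as-berry} as the essential input. By Theorem \ref{thm:thouless-as-berry} we already know that $\psi_M(\angles{\dert^{(3)},[*]})$ equals, up to a $d\theta$ factor, the $d\theta$-component of the 1d higher Berry curvature of $(\psi,\derG)$ integrated over $S^1$. So the natural strategy is: run the 1d Berry quantization argument for the family $(\psi,\derG)$ on the manifold $N := M \times S^1$, obtaining a Deligne-Beilinson $2$-cocycle $(h_{abc}, a_{ab}, b_a)$ on $N$ representing $[\omega^{(3)}_N] \in H^3(N,2\pi i\Z)$; then fiber-integrate over $S^1$ to get a Deligne-Beilinson $1$-cocycle, i.e. a \v{C}ech $1$-cocycle $h_{ab} \in C^\infty(U_{ab}, U(1))$ together with a $1$-form $b_a$, representing a class in $H^1(M,\bUM) \cong H^2(M, 2\pi i\Z)$, whose image under the \v{C}ech-de Rham map is $[\eta^{(2)}]$.

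Concretely I would proceed as follows. First reduce to the SRE case exactly as in the proof of Theorem \ref{thm:1dBerryquantization}: tensor $\psi_M$ with a constant inverse family $\psi'_{x_0}$, note that this changes neither $[\eta^{(2)}]$ nor (by the additivity argument at the end of Section \ref{sec:berryquant}) the refined class we are about to construct, and invoke \cite{KSY} to assume $\psi_M$ is SRE. Next, choose a good cover $\{U_a\}$ of $M$; pulling back along the projection $N = M\times S^1 \to M$ gives a cover of $N$ (I can keep it independent of the $S^1$ coordinate, which is what makes fiber integration clean). Build the local trivializations $\tilde\alpha_a$, their left-restrictions $\alpha_a$, the transition LGAs $\alpha_{ab}$, the unitaries $V_{ab}$, $\beta_{ab}$, the functions $h_{abc} := \psi(W_{abc})$, and the forms $a_{ab}$, $b_a$ — all for the family $(\psi,\derG)$ on $N$ — exactly as in Section \ref{sec:berryquant}. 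The key structural point is that, because $\derG$ differs from $\rho(\derG_M)$ only in its $d\theta$-component and that component is confined on the vertical line $\partial\mathbb{H}_1$ (as arranged in \eqref{eqn:G-fluxinsert}), every object above can be chosen so that its "excess relative to the $\theta$-constant family $\psi_M$" is confined on $\partial\mathbb{H}_1$; in particular $b_a$ decomposes as $b_a = b_a^{(M)} + d\theta\wedge \tilde b_a + (\text{$S^1$-exact})$ where $b_a^{(M)}$ has no $d\theta$ and $\tilde b_a \in \Omega^0(U_a, i\R)$, and similarly $h_{abc}$, $a_{ab}$ are pulled back from $M$ up to terms that integrate cleanly. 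Then define $h_{ab} \in C^\infty(U_{ab},U(1))$ by $h_{ab} := \exp\!\big(\oint_{S^1} a_{ab}\big)$ (the holonomy of the gerbe connection around the circle fiber), and $\tilde b_a := \oint_{S^1} b_a$ — more precisely, extract the $d\theta$-component and integrate. The two Deligne-Beilinson equations $h_{abc}^{-1}dh_{abc} = a_{ab} - a_{ac} + a_{bc}$ and $da_{ab} = b_a - b_b$ from the lemmas of Section \ref{sec:berryquant}, upon fiber integration over $S^1$, yield respectively the \v{C}ech cocycle condition $h_{ab}h_{bc}h_{ac}^{-1} = 1$ (the $dh_{abc}$ term integrates to zero around the closed loop) and $h_{ab}^{-1}dh_{ab} = \tilde b_a - \tilde b_b$. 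Finally $d\tilde b_a = \oint_{S^1} db_a = -\oint_{S^1}\psi(\angles{\derg^{(2)}_N,[*]}) = -\eta^{(2)}$ on the nose by the last lemma of Section \ref{sec:berryquant} together with Theorem \ref{thm:thouless-as-berry} (the regularization counterterm $\angles{\rho(\derg^{(2)}_M),[\partial\mathbb{H}_2]}$ has no $d\theta$-component, so it drops out of the fiber integral). This exhibits $(h_{ab}, \tilde b_a)$ as a Deligne cocycle on $M$ with curvature $\eta^{(2)}$, giving $[h_{ab}] \in H^1(M,\bUM)$ with $\iota([h_{ab}]) = [\eta^{(2)}]$, hence integrality of $[\eta^{(2)}/2\pi i]$.

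The main obstacle I anticipate is \emph{not} the formal fiber-integration bookkeeping but the control of localization: one must choose every ingredient ($\derC_a$, the homotopies $H_a$, the unitaries $V_{ab}$, the chains $\derf^{(1)},\derf^{(2)},\dert^{(2)},\dert^{(3)}$ appearing in Theorem \ref{thm:thouless-as-berry}) so that the $d\theta$-components are uniformly confined on $\partial\mathbb{H}_1$ and the "$M$-only" components genuinely descend from $M$, all smoothly in the parameters. This is the $U(1)$-equivariant analogue of the interpolation lemmas (Lemma \ref{lem:connection-interpolation}, Proposition \ref{prop:localization-LGAs}) used in Section \ref{sec:berryquant}, and verifying that these choices can be made compatibly with the $S^1$-direction structure — so that, e.g., $b_a$ really has the clean decomposition claimed above and $\oint_{S^1} b_a$ is a smooth form on $U_a$ with the right exterior derivative — is where the real work lies. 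A secondary point to check carefully is independence of the resulting class $[h_{ab}]$ from the auxiliary choices (cover, homotopies, $V_{ab}$, inverse family $\psi'_{x_0}$); this follows the same template as in Section \ref{sec:berryquant}, using that any two choices differ by a \v{C}ech $1$-coboundary after fiber integration, but it should be stated explicitly.
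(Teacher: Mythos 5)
There is a genuine gap at the very first step of your plan. You propose to ``run the 1d Berry quantization argument for the family $(\psi,\derG)$ on $N=M\times S^1$'' and obtain a Deligne--Beilinson 2-cocycle $(h_{abc},a_{ab},b_a)$ on $N$ representing a class $[\omega^{(3)}_N]\in H^3(N,2\pi i\ZZ)$. But $\psi$ is a family of \emph{2d} states, and Theorem \ref{thm:1dBerryquantization} and its proof are specifically one-dimensional: the construction of the unitaries $V_{ab}$ rests on the fact that restricting a trivializing LGA to the half-line $L\subset\ZZ$ creates a defect only near the single boundary point, so that $\psi\circ\alpha_{ab}^{-1}$ and $\psi$ differ near a bounded region and Lemma \ref{lem:alunitary} applies. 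For the 2d family on $N$, the trivializations $\tilde\alpha_a$ over a patch $U_a\subset N$ must also undo the variation of the state in the $M$-directions, and that variation is \emph{not} confined near the flux line $\partial\mathbb{H}_1$; restricting such a trivialization to the half-plane $\mathbb{H}_2$ therefore produces a defect along the entire infinite line $\partial\mathbb{H}_2$. No almost-local unitaries $V_{ab}$ implementing $\psi\circ\alpha_{ab}^{-1}$ exist, the expressions defining $a_{ab}$ and $b_a$ in (\ref{eqn:ab})--(\ref{eqn:ab2}) involve derivations confined only on an infinite line (hence not inner, so $\psi(\cdot)$ of them is undefined), and indeed the would-be curvature $\psi(\angles{\derg^{(1)},[\partial\mathbb{H}_2]})$ is divergent, as the paper notes -- only its $d\theta$-component survives after the regularization of Theorem \ref{thm:thouless-as-berry}. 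So the gerbe on $N$ that you intend to transgress does not exist; this is not a matter of careful bookkeeping of localization, it is the reason the argument has to be reorganized.

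The paper's proof performs the ``integration over $S^1$'' at the level of automorphisms \emph{before} any \v{C}ech construction: it sets $\tilde\alpha:=\tau\exp(\int_0^{2\pi}\derG)$ and $\alpha:=\tau\exp(\int_0^{2\pi}\res_{\mathbb{H}_2}\derG)$, which are families of LGAs on $M$ (not $N$). Because $\tilde\alpha$ preserves $\psi_M$ and $\tilde\alpha-\boldsymbol{1}$ is confined on the vertical line $\partial\mathbb{H}_1$, the restriction to $\mathbb{H}_2$ cuts that line to a half-line and the defect of $\alpha$ sits near the origin; hence per-patch almost-local unitaries $V_a$ with $\psi_M\circ\alpha^{-1}=\psi_M\circ\Ad_{V_a^{-1}}$ do exist, and the transition functions are taken directly as $h_{ab}:=\psi_M(V_a^{-1}V_b)$, an exact \v{C}ech 1-cocycle (the Berry-phase line bundle of the flux-insertion defect), rather than a holonomy $\exp(\oint_{S^1}a_{ab})$ of a 2-cocycle. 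A fiber integration does appear in the paper, but only at the last stage, to compare $da_a=-\omega|_{U_a}$ with $\eta^{(2)}$ via a 2-form $b$ on $M\times[0,2\pi]$ -- i.e.\ at the level of de Rham representatives of objects that are already known to exist. Your reduction from invertible to SRE and the remarks on independence of choices are fine, but the core construction needs to be replaced by this ``integrate over the flux circle first'' argument (or something equivalent that keeps all defects confined near a point) for the proof to go through.
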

    Before we begin, we will need the following Lemma:
    \begin{lemma}\label{lem:dergvert-confined}
        Let $\derG$ and $\derG_M$ be as in Section \ref{sec:thouless-as-berry}. Then $\derG-\derG_M$ is smoothly confined on $\partial \mathbb{H}_1$. In particular, $\derG_{vert}$ is smoothly confined on $\partial \mathbb{H}_1$.
    \end{lemma}
    \begin{proof}
        We have
        \begin{align}
            \derG - \derG_M = \rho(\derG) - \derG_M + d\theta\wedge\rho(\partial\res_{\mathbb{H}_1}(\derq^{(1)}) - \dert^{(1)}).
        \end{align}
        By $U(1)$-invariance of $\derG_M$ we have
        \begin{align}
            \derG - \derG_M = \tau\exp(\int_0^{2\pi}\partial\res_{\mathbb{H}_1}\derq^{(1)})(\derG_M) - \derG_M = \tau\exp(\int_0^{2\pi}-\partial\res_{\mathbb{H}_1^c}\derq^{(1)})(\derG_M) - \derG_M,
        \end{align}
        so by Proposition \ref{prop:localization-LGAs}, $\derG-\derG_M$ is smoothly confined on both $\mathbb{H}_1$ and $\mathbb{H}_1^c$, and thus it's confined on $\partial\mathbb{H}_1$.
        Next, since $\partial\derq^{(1)} = \derQ = \partial\dert^{(1)}$, there is a $\derk^{(2)}$ satisfying $\partial \derk^{(2)} = \derq^{(1)} - \dert^{(1)}$, and we have
        \begin{align}
            d\theta\wedge\rho(\partial\res_{\mathbb{H}_1}(\derq^{(1)} - \dert^{(1)})) = -d\theta\wedge\rho(\partial[\partial,\res_{\mathbb{H}_1}]\derk^{(2)})
        \end{align}
        which is confined on $\partial\mathbb{H}_1$ by the results of Section \ref{sec:restictions}.
    \end{proof}
    Let us introduce a key ingredient of the proof of Theorem \ref{thm:2dThoulessquantization}. Define the following smooth families of LGAs on $M\times [0,2\pi]$:
    \begin{align}
        \tilde\gamma &:= \tau \exp(\int_0 \derG),\\
        \gamma &:= \tau \exp(\int_0\partial\res_{\mathbb{H}_2}h\derG)
    \end{align}
    where $h$ is the contracting homotopy from Theorem \ref{thm:acyclic} $i)$. Notice that we have
    \begin{align}
        \psi_M \circ \tilde\gamma\circ \rho = \psi_M \label{eqn:gamma-rho-preserves}
    \end{align}
    since $\tilde\gamma \circ \rho = \tau \exp( \int_0\partial\res_{\mathbb{H}_1}\dert^{(1)})$ and $\dert^{(1)}$ preserves $\psi_M$. This means that $\psi_M\circ \tilde\gamma$ is nothing but the family of states $\psi := \psi_M\circ \rho^{-1}$ on $M\times [0,2\pi]$ describing a flux-insertion at infinity which was used in Section \ref{sec:thouless-as-berry}. On the other hand, $\psi_M\circ \gamma$ inserts a flux at the origin in $\Z^2$ (see Figure \ref{fig:psigamma}). 
\begin{figure}[ht]
\centering
\includegraphics[width=0.5\textwidth]{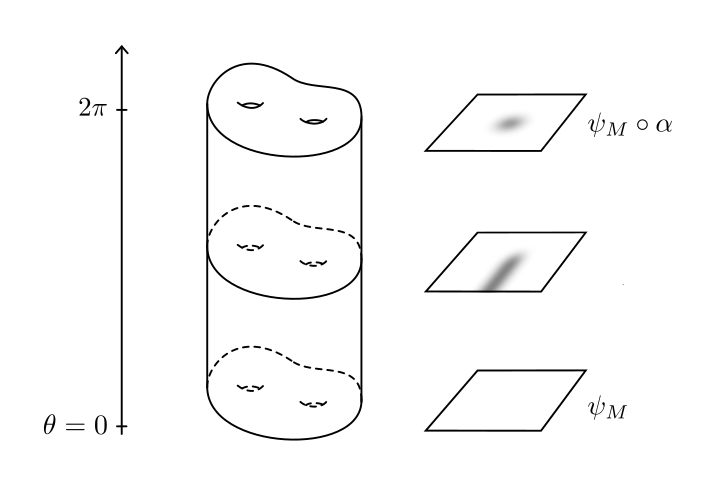}
\caption{$\psi_M\circ\gamma$ performs a flux insertion at the origin.}
\label{fig:psigamma}
\end{figure}
    We are now ready to begin the proof of Theorem \ref{thm:2dThoulessquantization}:
    \begin{proof}
    As before let $\{U_a\}_{a\in J}$ be an open cover of $M$ such all $U_a$ and all nonempty intersections $U_{ab}:=U_{a}\cap U_{b}$ are contractible. Let $\tilde\alpha$ and $\alpha$ be the families of LGAs on $M$ given by $\tilde\alpha:= \tilde\gamma \circ j_{2\pi}$ and $\alpha := \gamma\circ j_{2\pi}$, where $j_\theta: M\to M\times [0,2\pi]$ is the embedding $x\mapsto (x,\theta)$. 
    
    The family of states $\psi_M\circ\alpha$ differs from $\psi_M$ only near the origin. In fact, on each neighbourhood in $M$ the family $\psi_M\circ \alpha$ can be produced from $\psi_M$ by the action of an almost-local unitary, as we now show. Notice that we have $\psi_M \circ \alpha^{-1} = \psi_M\circ \tilde\alpha\circ \alpha^{-1}$, and that $\tilde\alpha \circ\alpha^{-1} = \tau\exp(\int_0^{2\pi} \alpha^{-1}(\partial \res_{\mathbb{H}_2^c}h\derG))$. Since $\derG_{vert}$ is confined on $\partial\mathbb{H}_1$ (Lemma \ref{lem:dergvert-confined}), by Lemma \ref{lem:alunitary} on each $U_a$ we can find a smooth map $V_a:U_a\to U(\mathscr{A}_{al})$ such that $\psi_M\circ \alpha^{-1} = \psi_M\circ \Ad_{V_a^{-1}}$ and $\overline{\tr}(V_a^{-1}dV_a)= 0$.
    
    On an overlap $U_{ab}$ we have the smooth unitary $V_a^{-1}V_b$ which preserves $\psi_M$, and we define
    \begin{align}
        h_{ab}:=\psi_M(V_a^{-1}V_b) \in C^\infty(U_{ab},U(1))
    \end{align}
    which satisfies the cocycle condition $h_{ab}h_{bc}= h_{ac}$. As in Section \ref{sec:berryquant} define $\beta_a:= \Ad_{V_a^{-1}}\circ \alpha$ and $\derB_a := \beta_a^{-1}d\beta_a$, both of which preserve $\psi_M$. Since $\beta_a\beta_b^{-1} = \Ad_{V_a^{-1}V_b}$, a straightforward calculation shows
    \begin{align}
	h_{ab}^{-1}dh_{ab} &= \psi_M(\derB_a - \derB_b). \label{eqn:hdh}
    \end{align}
    By Lemma \ref{lem:connection-interpolation}, $\psi_M$ is parallel with respect to a $\derC\in \Omega^1(M,\mfkDal)$ that smoothly interpolates between $\derG_M$ on $\mathbb{H}_2^c$ and $\derG_M^{\tilde\alpha}$ on $\mathbb{H}_2$. Defining
	\begin{align}
		a_a := -\psi_M(\derC - \derG_M^{\beta_a}),
	\end{align}
    we obtain, from (\ref{eqn:hdh}),
    \begin{align}
        h_{ab}^{-1}dh_{ab} &= \psi_M(\derB_a - \derB_b)  + \psi_M((\beta_a^{-1}(\derG_M) -\beta_b^{-1}(\derG_M))\nonumber\\
        &= a_a-a_b.
    \end{align}
    where the second line is because $\psi_M((\beta_a^{-1}(\derG_M) -\beta_b^{-1}(\derG_M)) = \psi_M((\derG_M - \Ad_{V_a^{-1}V_b}(\derG_M))=0$.
    Since $\psi_M$ is parallel with respect to $\frac12 (\derG_M^{\beta_a} + \derC)$ we have
    \begin{align}\label{eq:da}
        da_a &= -\psi_M(d(\derC - \derG_M^{\beta_a})+\frac12  \{\derG_M^{\beta_a} + \derC,\derC - \derG_M^{\beta_a}\})\nonumber\\
            &= -\psi_M(\derF_\derC - \beta_a^{-1}(\derF_{M})).
    \end{align}
    This collection of closed 2-forms is a \v{C}ech 0-cocycle in $\Omega^2(M,{i}\RR)$ and thus defines a closed 2-form on $M$. This can be seen in two different ways. First, one can compute the \v{C}ech coboundary using the expressions on the r.h.s.:
    \begin{equation}
\psi_M\left(\beta_a^{-1}\left(\derF_{M}\right)-\beta_b^{-1}\left(\derF_{M}\right)\right)=\psi(\derF_{M}-\beta_a(\beta_b^{-1}(\derF_{M})))=\psi(\derF_{M}-\Ad_{V_a^{-1}V_b}(\derF_{M}))=0,
    \end{equation}
    where we used the fact that the automorphisms $\beta_a^{-1},\beta_b^{-1}$ and $\Ad_{V_a^{-1} V_b}$ preserve $\psi_M$. Second, from the definition of $a_a$ we have $da_a-da_b = d(h_{ab}^{-1}dh_{ab})=0$. Thus $da_a$ is a restriction of a globally defined closed 2-form $\omega$. In addition, the second argument shows that the cohomology class of $\omega/2\pi{i}$ is integral (it is the de Rham representative of the 1st Chern class of the line bundle defined by the \v{C}ech 2-cocycle $h_{ab}$). 

    Finally, let us show that $-\omega$ is cohomologous to $2\pi\psi_M(\angles{\dert^{(3)},[*]})$. The strategy will be to define a form $b\in \Omega^2(M\times[0,1],\C)$ with $j_0^*b - j_{2\pi}^*b = -\omega$ and $db = \psi_M(\angles{\dert^{(3)},[*]})\wedge d\theta$. Then the result will follow from the following formula:
    \begin{align}
        -\omega &= j_{2\pi}^*b - j_0^*b \nonumber \\
        &= d\int_0^{2\pi} b - \int_0^{2\pi} db \nonumber \\
        &= d\int_0^{2\pi} b + 2\pi\psi_M(\angles{\dert^{(3)},[*]}). \label{eqn:ThoulessFinal}
    \end{align}
    Begin by defining
    \begin{align}
        \hat\derC := \derG - \partial\res_{\mathbb{H}_2}(h^\psi (\derG - \derG_M^{\tilde\gamma}) )\in\Omega^1(M\times [0,2\pi],\mfkDal),\\
        \derE :=  \partial \res_{\mathbb{H}_2^c}\derg^{(1)} + \partial \res_{\mathbb{H}_2} \tilde\gamma^{-1}(\derg_M^{(1)})\in\Omega^2(M\times [0,2\pi],\mfkDal),
    \end{align}
    where $\derg^{(1)}, \derg^{(1)}_M$ are as in Section \ref{sec:thouless-as-berry}. These definitions have been chosen so that $\hat\derC$ and $\derE$ satisfy the following properties:
    \begin{enumerate}[i)]
        \item $\psi$ is parallel with respect to $\hat\derC$
        \item $\hat\derC$ smoothly interpolates between $\derG$ on $\mathbb{H}^c_2$ and $\derG_M^{\tilde\gamma}$ on $\mathbb{H}_2$
        \item $\derF_{\hat\derC}$ smoothly interpolates between $\derF$ on $\mathbb{H}^c_2$ and $\tilde\gamma^{-1}(\derF_M)$ on $\mathbb{H}_2$
        \item $j_0^*\hat\derC = \derG_M$ and $j_{2\pi}^*\hat\derC=\derC$
        \item $\psi_M(j_{2\pi}^*\derE-\beta_a^{-1}(j_0^*\derE)) = 0$ for any $a\in J$
        \item $\derF_{\hat\derC}-\derE$ is smoothly confined at the origin in $\Z^2$.
    \end{enumerate}
    The first three follow from Lemma \ref{lem:connection-interpolation}, and the fourth is easy to verify. Property v) follows from the identity
    \begin{align}
        \psi_M(j_{2\pi}^*\derE-\beta_a^{-1}(j_0^*\derE)) = \psi_M(\partial\res_{\mathbb{H}_2}(\beta_a^{-1}(\derg^{(1)}_M) - \tilde\alpha^{-1}(\derg^{(1)}_M))).
    \end{align}
    The right-hand side of the above expression is well-defined since $\beta_a^{-1}(\derg^{(1)}_M) - \tilde\alpha^{-1}(\derg^{(1)}_M)$ is smoothly confined on $\mathbb{H}_1^c\cap \mathbb{H}_2$, and it is zero because both $\beta_a$ and $\tilde\alpha_a$ preserve $\psi_M$. Finally, property vi) is proved in the following:
    \begin{lemma}
         $\derF_{\hat\derC}-\derE$ is smoothly confined at the origin in $\Z^2$.
    \end{lemma}
    \begin{proof}
         Since both $\derF_{\hat\derC}$ and $\derE$ interpolate between $\derF$ on $\mathbb{H}^c_2$ and $\tilde\gamma^{-1}(\derF_M)$ on $\mathbb{H}^c$, it follows that $\derF_{\hat\derC}-\derE$ is smoothly confined on $\partial\mathbb{H}_2$. Next, using the fact that both $\derG - \derG_M$ and $\derG_M^{\tilde\gamma}- \derG_M$ are is smoothly confined on the vertical line $\partial \mathbb{H}_1$, one can show that $\derF_{\hat\derC}-\derF_M$ is too. Similarly, since both $\derg^{(1)}-\derg^{(1)}_M$ and $\tilde\gamma^{-1}(\derg^{(1)}_M)-\derg^{(1)}_M$ and are smoothly confined on $\partial\mathbb{H}_1$, one can show that $\derE-\derF_M$ is too. Thus $\derF_{\hat\derC}-\derE = (\derF_{\hat\derC}-\derF_M) -(\derE-\derF_M)$ smoothly confined on $\partial\mathbb{H}_1$. Thus it is smoothly confined on the origin.
    \end{proof}
    Property vi) allows us to define
    \begin{align}
        b := \psi(\derF_{\hat\derC}-\derE)\in\Omega^2(M\times [0,2\pi]).
    \end{align}
    \begin{lemma}
    With $b$ defined as above, we have
    \begin{align}
        j_0^*b - j_{2\pi}^*b = -\omega \label{eqn:sec5-jb}
    \end{align}
    and
    \begin{align}
        db = \psi(\angles{\derg^{(2)} -\rho(\derg_M^{(2)}), [\partial\mathbb{H}_2]}). \label{eqn:sec5-db}
    \end{align}
    \end{lemma}
    \begin{proof}
    First, from property v) above we have
    \begin{align}
        -\omega|_{U_a} &= \psi_M(\derF_{\derC} - j_{2\pi}^*\derE - \beta_a^{-1}(\derF_{M} - j_0^*\derE))\nonumber\\
        &= j_{2\pi}^*b|_{U_a} - j_0^*b|_{U_a},
    \end{align}
    and so (\ref{eqn:sec5-jb}) is established. Next, we have
    \begin{align}
        db &= \psi\left(D_{\hat{\derC}}\derF_{\hat\derC} + (D - D_{\hat\derC})\derF_{\hat\derC}  -D\derE\right)\nonumber\\
        &= \psi\left(\{\derG-\hat\derC,\derF_{\hat\derC}\}-D\derE\right)
    \end{align}
    where the second line is due to the Bianchi identity $D_{\hat{\derC}}\derF_{\hat\derC}= 0$. Inserting the definitions of $\hat\derC$ and $\derE$ into the above, and adding and subtracting the term $\tilde{\gamma}^{-1}\left(D_M\partial\res_{\mathbb{H}_2}\derg^{(1)}_M\right)$, we get
    \begin{align}
        db &= \psi\left(\{\derG-\hat\derC,\derF_{\hat\derC}\} + D\partial\res_{\mathbb{H}_2}\derg^{(1)} - D\partial\res_{\mathbb{H}_2}\tilde{\gamma}^{-1}\left(\derg_M^{(1)}\right)\right)\nonumber\\
        &= \psi\left(D\partial\res_{\mathbb{H}_2}\derg^{(1)}-\tilde{\gamma}^{-1}\left(D_M\partial\res_{\mathbb{H}_2}\derg^{(1)}_M\right)\right)\nonumber\\
        &\hspace{5mm} + \psi\left(\{\derG-\hat\derC,\derF_{\hat\derC}\} + \tilde{\gamma}^{-1}\left(D_M\partial\res_{\mathbb{H}_2}\derg^{(1)}_M\right) - D\partial\res_{\mathbb{H}_2}\tilde{\gamma}^{-1}\left(\derg_M^{(1)}\right)\right). \label{eqn:sec5-db-long}
    \end{align}
    The first term in (\ref{eqn:sec5-db-long}) is
    \begin{align}
        -\psi\left(\partial\res_{\mathbb{H}_2}\partial(\derg^{(2)} -\tilde{\gamma}^{-1}(\derg_M^{(2)}))\right) &= \psi(\angles{\derg^{(2)} -\tilde{\gamma}^{-1}(\derg_M^{(2)}), [\partial\mathbb{H}_2]}),
    \end{align}
    while the second one is
    \begin{align*}
        &\psi\left(\{\derG-\hat\derC,\derF_{\hat\derC}\} + (\tilde{\gamma}^{-1}\circ D_M\circ \tilde\gamma - D)\partial\res_{\mathbb{H}_2}\tilde{\gamma}^{-1}\left(\derg^{(1)}_M\right)\right)\\
        =&\psi\left(\{\derG-\hat\derC,\derF_{\hat\derC}\} + \left\{\derG_M^{\tilde\gamma} - \derG ,\partial\res_{\mathbb{H}_2}\tilde{\gamma}^{-1}\left(\derg^{(1)}_M\right)\right\}\right).
    \end{align*}
    Splitting $\derG_M^{\tilde\gamma} - \derG$ in the above into $\partial\res_{\mathbb{H}_2}h^\psi(\derG_M^{\tilde\gamma} - \derG) + \partial\res_{\mathbb{H}_2^c}h^\psi(\derG_M^{\tilde\gamma} - \derG)$ gives
    \begin{align*}
    &\psi\left(\left\{\partial\res_{\mathbb{H}_2}(h^\psi (\derG - \derG_M^{\tilde\gamma}) ),\derF_{\hat\derC}
    -\partial\res_{\mathbb{H}_2}\tilde{\gamma}^{-1}\left(\derg^{(1)}_M\right)\right\}\right)\\
    &\hspace{10mm} - \psi\left(\left\{ \partial\res_{\mathbb{H}_2^c}(h^\psi (\derG - \derG_M^{\tilde\gamma}) ),\partial\res_{\mathbb{H}_2}\tilde{\gamma}^{-1}\left(\derg^{(1)}_M\right)\right\}\right). 
    \end{align*}
    Both terms above are of the form $\psi(\{\derA,\derB\})$ where $\derA$ and $\derB$ preserve $\psi$ and are smoothly confined on regions with bounded stable intersection. Thus it is well-defined and zero, and so we have
    \begin{align}
        db &= \psi(\angles{\derg^{(2)} -\tilde{\gamma}^{-1}(\derg_M^{(2)}), [\partial\mathbb{H}_2]}).
    \end{align}
    Finally, (\ref{eqn:sec5-db}) follows from the following claim:
    \begin{align}
        \psi(\angles{\rho(\derg^{(2)}_M) - \tilde\gamma^{-1}(\derg^{(2)}_M),[\partial\mathbb{H}_1]}) = \psi_M(\angles{\derg^{(2)}_M - \rho^{-1}\circ\tilde\gamma^{-1}(\derg^{(2)}_M,[\partial\mathbb{H}_1]})) = 0. \label{eqn:lastclaim}
    \end{align}
    (the terms in the above equation are well-defined because $\rho\circ \tilde\gamma^{-1}(\derg^{(2)}_M) - \derg^{(2)}_M$ is smoothly confined on $\partial \mathbb{H}_1$ by $U(1)$-invariance of $\derg^{(2)}_M$). The first equality in (\ref{eqn:lastclaim}) is by definition of $\psi$, and the second follows from (\ref{eqn:gamma-rho-preserves}).
    \end{proof}
    We are now ready to conclude the proof of Theorem \ref{thm:2dThoulessquantization}. Looking at (\ref{eqn:g2def}) it is clear that $\derg^{(2)} - \rho(\derg^{(2)}) = \derg^{(2)}_{vert}$, and so by Theorem \ref{thm:thouless-as-berry} we have $db = \psi_M(\angles{\dert^{(3)}, [*]})\wedge d\theta$.
    By (\ref{eqn:ThoulessFinal}) this concludes the proof that the 2D Thouless form is quantized when $\psi$ is a smooth family of $U(1)$-invariant SRE states. To extend this to smooth families of $U(1)$-invariant invertible states is a matter of applying the same argument as the one appearing at the end of Section \ref{sec:berryquant}.
    \end{proof}

\begin{remark}
One can construct an example of a family of 2d states with a non-trivial 2d Thouless pump invariant in a way similar to the example from Section \ref{sec:berryquant}. The family is parameterized by $S^2$. At the north pole, it is a product state with vanishing ground-state $U(1)$ charges on each site. At the south pole, it is a product state with the ground-state  charge $(-1)^{x+y}$ on a site $(x,y) \in \ZZ^2$. There are four different pairings of neighbouring sites that correspond to four different directions $0, \pi/2, \pi$ and $3\pi/2$: 1) $(2k,l)$ with $(2k+1,l)$; 2) $(2k-1,l)$ with $(2k,l)$; 3) $(k,2l)$ with $(k,2l+1)$; 4) $(k,2l-1)$ with $(k,2l)$. There are also four different ways to form quadruples of neighbouring sites: $\{(2k,2l), (2k+1,2l), (2k,2l+1), (2k+1,2l+1)\}$ and its shifts. The meridians of $S^2$ at $0, \pi/2, \pi$ and $3\pi/2$ correspond to the families of states between the poles such that at each point the state is a product of pure states on pairs of sites for the corresponding pairing. Four different quarters of the sphere between the meridians correspond to the families of states such that at each point the state is a product of pure states on quadruples of sites. The family can be made smooth by choosing partitions of unity and has a unit 2d Thouless pump invariant. We omit the precise formulas.
\end{remark}
    
\pagebreak
\appendix
\appendixpage
\addappheadtotoc
\section{Asymptotically equal states}
It is a standard result in the theory of spin systems that two pure states on $\mathscr{A}$ unitarily equivalent if they are ``equal at infinity'' (see for instance Corollary 2.6.11 in \cite{BR1}). Below we prove two versions of this theorem that are adapted to our needs. Namely, first we show that if one of the states is SRE and the states rapidly approach each other at infinity, then the two states are related by the action of an almost-local unitary. Then we show that for certain smooth families of SRE states this almost-local unitary can be chosen to be a smooth function of parameters on any contractible neighbourhood in the parameter space. The first of these statements was proved in \cite{KSY}, but we include its proof here for completeness. 
\begin{lemma}\label{lem:unitary}
    Let $|\chi^1\rangle$ and $|\chi^2\rangle$ be two vectors in a Hilbert space $\mathcal{H}$ such that $\langle \chi^i|\chi^i\rangle=1$, $i=1,2$, and $\langle\chi^1|\chi^2\rangle >0$. Then there exists a unitary $U \in U(\mathcal{H})$ such that $U|\chi^1\rangle = |\chi^2\rangle$ and $\|U- \boldsymbol{1}\| = \||\chi^1\rangle - |\chi^2\rangle\|$.
\end{lemma}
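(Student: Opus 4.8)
The plan is to reduce everything to the two-dimensional subspace spanned by the two vectors and to write down the rotation between them explicitly. First I would dispose of the degenerate case: if $\langle\chi^1|\chi^2\rangle = 1$, then by Cauchy--Schwarz $|\chi^1\rangle$ and $|\chi^2\rangle$ are proportional, and since both are unit vectors with positive inner product they coincide, so $U = \boldsymbol{1}$ works and both sides of the claimed norm identity vanish. Hence assume $c := \langle\chi^1|\chi^2\rangle \in (0,1)$.

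Next I would decompose $|\chi^2\rangle = c\,|\chi^1\rangle + s\,|\eta\rangle$, where $s := \sqrt{1-c^2} > 0$ and $|\eta\rangle := s^{-1}\big(|\chi^2\rangle - c\,|\chi^1\rangle\big)$ is a unit vector orthogonal to $|\chi^1\rangle$; here it matters that $c$ is real, so that $\big\||\chi^2\rangle - c\,|\chi^1\rangle\big\|^2 = 1 - c^2$. Let $V$ be the complex span of the orthonormal pair $\{|\chi^1\rangle, |\eta\rangle\}$, and define $U$ to be the identity on $V^\perp$ and, on $V$, the complex-linear map determined on this basis by
\begin{align}
    U|\chi^1\rangle := c\,|\chi^1\rangle + s\,|\eta\rangle = |\chi^2\rangle, \qquad U|\eta\rangle := -s\,|\chi^1\rangle + c\,|\eta\rangle .
\end{align}
A one-line check of the inner products of the images (which uses $c, s \in \R$) shows that $U$ sends an orthonormal basis of $V$ to an orthonormal basis of $V$, so $U \in U(\mathcal H)$, and by construction $U|\chi^1\rangle = |\chi^2\rangle$.

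Finally I would compute the operator norm. Since $U - \boldsymbol{1}$ vanishes on $V^\perp$, its norm equals that of its restriction to $V$, which in the orthonormal basis $\{|\chi^1\rangle, |\eta\rangle\}$ is the matrix
\begin{align}
    \begin{pmatrix} c - 1 & -s \\ s & c - 1 \end{pmatrix}.
\end{align}
A direct computation gives $(U - \boldsymbol{1})^*(U - \boldsymbol{1})\big|_V = \big((c-1)^2 + s^2\big)\,\boldsymbol{1}_V = 2(1-c)\,\boldsymbol{1}_V$, whence $\|U - \boldsymbol{1}\| = \sqrt{2(1-c)}$. On the other hand $\big\||\chi^1\rangle - |\chi^2\rangle\big\|^2 = 2 - 2\,\mathrm{Re}\,c = 2(1-c)$, so the two norms agree. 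There is no genuine obstacle in this argument; the only point needing a little care is to keep $U$ complex-linear while still landing the \emph{sharp} norm equality, and this works precisely because the hypothesis forces the overlap $\langle\chi^1|\chi^2\rangle$ to be real.
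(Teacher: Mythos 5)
Your proof is correct and follows essentially the same route as the paper: decompose $|\chi^2\rangle$ into its components along and orthogonal to $|\chi^1\rangle$ (using that the overlap is real and positive), and take $U$ to be the corresponding rotation on the two-dimensional span, extended by the identity on the orthogonal complement. You additionally verify the norm identity $\|U-\boldsymbol{1}\| = \sqrt{2(1-c)} = \||\chi^1\rangle - |\chi^2\rangle\|$ explicitly, a check the paper leaves to the reader, and your computation is accurate.
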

We omit the proof of this elementary lemma.

\begin{prop}\label{prop:alunitary}
    Suppose $\psi$ is an SRE state and $\phi$ is another pure state such that there exists a superpolynomially decreasing function $f$ for which
    \begin{align}
    |\psi(A)-\phi(A)|\le f(R)\|A\| \label{eqn:prop-localunitary}
    \end{align}
    holds for any $A$ localized outside of $B_R(0)$. Then there is a unitary $V\in \mathscr{A}_{al}$ such that $\phi = \psi \circ \Ad_{V}$.
\end{prop}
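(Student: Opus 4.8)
The plan is to reduce to the case that $\psi$ is factorized and then build $V$ by an explicit layer-by-layer construction based on Lemma \ref{lem:unitary}. For the reduction, since $\psi$ is SRE choose an LGA $\alpha$ with $\psi_0:=\psi\circ\alpha$ a factorized pure state, and set $\phi':=\phi\circ\alpha$. Because an LGA is almost-local (for $B$ localized in a set $X$, $\alpha(B)$ is $h_{\|B\|}$-confined on $X$ for some superpolynomially decaying $h$), one checks from (\ref{eqn:prop-localunitary}) that $|\psi_0(B)-\phi'(B)|\le\tilde f(R)\|B\|$ for all $B$ localized outside $B_R(0)$, with $\tilde f$ again superpolynomially decaying (approximate $\alpha(B)$ by an element of $\mathscr{A}_{B_{R/2}(0)^c}$). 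Thus it suffices to produce a unitary $W\in\mathscr{A}_{al}$ with $\phi'=\psi_0\circ\Ad_W$: then $V:=\alpha(W)\in\mathscr{A}_{al}$ is unitary and $\phi=\phi'\circ\alpha^{-1}=\psi_0\circ\Ad_W\circ\alpha^{-1}=\psi\circ\Ad_{\alpha(W)}$. So assume $\psi=\bigotimes_j\psi_j$ is factorized, with GNS data $(\pi_\psi,\mathcal{H}_\psi,|\Omega\rangle)$, where $\mathcal{H}_\psi$ is the infinite tensor product of copies of $\C^D$ taken along $|\Omega\rangle=\bigotimes_j|\xi_j\rangle$ with $|\xi_j\rangle$ the GNS vector of $\psi_j$. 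Since (\ref{eqn:prop-localunitary}) makes $\psi$ and $\phi$ equal at infinity, the standard result makes them unitarily equivalent, so $\phi=\langle\chi|\pi_\psi(\cdot)|\chi\rangle$ for some unit vector $|\chi\rangle\in\mathcal{H}_\psi$.

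The heart of the argument is a quantitative statement that $|\chi\rangle$ factorizes across a large ball. For $\Lambda=B_R(0)$ write $\mathcal{H}_\psi=\mathbb{C}^{D^{|\Lambda|}}\otimes\mathcal{H}_{\Lambda^c}$, let $P_{\Lambda^c}$ be the rank-one projection onto $\mathbb{C}|\Omega_{\Lambda^c}\rangle\subset\mathcal{H}_{\Lambda^c}$, and define $|\chi_\Lambda\rangle\in\mathbb{C}^{D^{|\Lambda|}}$ by $(1\otimes P_{\Lambda^c})|\chi\rangle=|\chi_\Lambda\rangle\otimes|\Omega_{\Lambda^c}\rangle$. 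Applying (\ref{eqn:prop-localunitary}) to the local projections $Q_{\Lambda'}:=\bigotimes_{j\in\Lambda'\setminus\Lambda}|\xi_j\rangle\langle\xi_j|$ for a larger ball $\Lambda'\supsetneq\Lambda$ --- which satisfy $\psi(Q_{\Lambda'})=1$, $\|Q_{\Lambda'}-1\|=1$, and $Q_{\Lambda'}-1\in\mathscr{A}_{\Lambda^c}$ --- gives $\phi(Q_{\Lambda'})\ge 1-f(R)$; letting $\Lambda'\nearrow\Z^d$ and using normality of $\omega_{|\chi\rangle}$ on $\mathcal{B}(\mathcal{H}_\psi)$ yields $\langle\chi_\Lambda|\chi_\Lambda\rangle=\langle\chi|(1\otimes P_{\Lambda^c})|\chi\rangle\ge 1-f(R)$, and hence $\||\chi\rangle-|\chi_\Lambda\rangle\otimes|\Omega_{\Lambda^c}\rangle\|^2\le f(R)$.

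Now fix radii $R_1<R_2<\cdots$ with $R_n\sim n$, set $\Lambda_n=B_{R_n}(0)$, and let $|\hat\chi^{(n)}\rangle$ be the unit vector in the direction of $(1\otimes P_{\Lambda_n^c})|\chi\rangle$; the estimate above shows $\||\hat\chi^{(n)}\rangle-|\chi\rangle\|$, hence $\||\hat\chi^{(n-1)}\rangle-|\hat\chi^{(n)}\rangle\|$, decays superpolynomially in $R_{n-1}$. Both of these vectors lie in $\mathbb{C}^{D^{|\Lambda_n|}}\otimes\mathbb{C}|\Omega_{\Lambda_n^c}\rangle$, so after adjusting phases Lemma \ref{lem:unitary} applied in $\mathscr{A}_{\Lambda_n}=\mathcal{B}(\mathbb{C}^{D^{|\Lambda_n|}})$ gives unitaries $U_n\in\mathscr{A}_{\Lambda_n}$ with $U_n|\hat\chi^{(n-1)}\rangle\propto|\hat\chi^{(n)}\rangle$ and $\|U_n-1\|$ controlled by the same superpolynomial bound; choosing $U_1$ so that $U_1|\Omega\rangle\propto|\hat\chi^{(1)}\rangle$ and putting $W_n:=U_n\cdots U_1$, the bounds make $(W_n)$ Cauchy, so $V:=\lim_n W_n$ is a unitary in $\mathscr{A}$ with $\pi_\psi(V)|\Omega\rangle\propto|\chi\rangle$, i.e.\ $\phi=\psi\circ\Ad_{V^{-1}}$. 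Moreover $V\in\mathscr{A}_{al}$: approximating $V$ by $W_n\in\mathscr{A}_{\Lambda_n}\subset\mathscr{A}_{B_0(r)}$ for the largest $n$ with $R_n\le r$, the error $\|V-W_n\|\le\sum_{m>n}\|U_m-1\|$ decays superpolynomially in $r$ because $R_n\sim n\sim r$ and each $\|U_m-1\|$ is superpolynomially small. Replacing $V$ by $V^{-1}$ (still an almost-local unitary) completes the factorized case, and with the reduction above, the proposition.

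The main obstacle is the bookkeeping: one must verify that superpolynomial decay survives each of the three passages --- conjugating by the LGA, the annular-projection estimate, and summing the tail $\sum_{m>n}\|U_m-1\|$ --- and that the growth rate $R_n\sim n$ is fast enough for $\lim_n W_n$ to land in $\mathscr{A}_{al}$ yet slow enough that the scale-$R_{n-1}$ bound on $\|U_n-1\|$ still controls that tail. Everything else is a routine application of Lemma \ref{lem:unitary} together with the standard ``equal at infinity implies unitarily equivalent'' theorem.
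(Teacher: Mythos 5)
Your argument is correct and follows the same skeleton as the paper's proof: reduce to a factorized $\psi$ by conjugating with the LGA, realize $\phi$ as a vector state $|\chi\rangle$ in the GNS representation of the product state, show that the normalized projection of $|\chi\rangle$ onto $\mathcal{H}_{B_R}\otimes\mathbb{C}|\Omega_{B_R^c}\rangle$ is superpolynomially close to $|\chi\rangle$, and assemble $V$ as a norm-convergent product of the unitaries supplied by Lemma \ref{lem:unitary}, with almost-locality coming from the superpolynomially decaying tail sums. The one genuine difference is how the central overlap estimate is obtained: the paper identifies the maximal overlap with the fidelity $F(\phi|_{B_R^c},\psi|_{B_R^c})$ via Uhlmann's theorem and bounds it below by $1-f(R)/2$ using the Fuchs--Van de Graaf inequality, whereas you test the hypothesis (\ref{eqn:prop-localunitary}) directly against the product projections $Q_{\Lambda'}=\bigotimes_{j\in\Lambda'\setminus\Lambda}|\xi_j\rangle\langle\xi_j|$ and pass to the strong limit $\pi(Q_{\Lambda'})\searrow \boldsymbol{1}\otimes P_{\Lambda^c}$, getting $\langle\chi|(\boldsymbol{1}\otimes P_{\Lambda^c})|\chi\rangle\ge 1-f(R)$. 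Your route is more elementary (no Uhlmann, no Fuchs--Van de Graaf) but leans on the explicit rank-one on-site structure of the reference state; the paper's route uses only trace-norm closeness of the restricted states, so it is slightly more robust, though after the SRE reduction both are available and give the same superpolynomial control. Two pieces of bookkeeping you should make explicit: as with the paper's choice of $R_0$ with $f(R_0)<1$, start the layered construction at a radius where $f<1$ so that $(\boldsymbol{1}\otimes P_{\Lambda_n^c})|\chi\rangle\neq 0$ and the normalized vectors $|\hat\chi^{(n)}\rangle$ are defined (the first unitary then carries no useful norm bound, but being strictly local it does not affect almost-locality of the product); and the reduction step needs the uniform statement that for $B$ localized outside $B_R(0)$, $\alpha(B)$ is superpolynomially well approximated in $\mathscr{A}_{B_{R/2}(0)^c}$ uniformly in $\|B\|$ --- this follows, e.g., by applying $\overline{\tr}_{B_{R/2}(0)}$ and a commutator bound against the finite region $B_{R/2}(0)$, and is asserted at the same level of detail in the paper's own proof, so it is not a gap specific to your write-up.
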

\begin{proof}
    Since $\psi$ is SRE, we have $\psi = \psi_{fact}\circ \alpha^{-1}$ for a factorized pure state $\psi_{fact}$ and some LGA $\alpha$. Then $|\psi_{fact}(A)-\phi\circ\alpha|$ decays superpolynomially, as in (\ref{eqn:prop-localunitary}), and if we find a unitary $V\in \mathscr{A}_{al}$ with $\phi\circ\alpha = \psi_{fact}\circ \Ad_{V^{-1}}$ then we would have $\phi = \psi_{\alpha(V^{-1})}$. So we may assume without loss of generality that $\psi$ is a factorized state.
    
    Let $(\mathcal{H},\pi,|0\rangle)$ be the GNS triple of $\psi$. From (\ref{eqn:prop-localunitary}) and Corollary 2.6.11 in \cite{BR1} it follows that $\phi$ is given by a vector state $|\phi\rangle\langle\phi|$ in $\mathcal{H}$. For each positive integer $R$ let $B_R$ be the ball of radius $R$ around zero, $|0\rangle_{B_R}$ and $|0\rangle_{B_R^c}$ the (pure) restrictions of $|0\rangle$ to $B_R$ and $B_R^c$, respectively. Since $\psi$ is factorized we have $\mathcal{H} = \mathcal{H}_{B_R}\otimes \mathcal{H}_{B_R^c}$, where $\mathcal{H}_{B_R}$ and $\mathcal{H}_{B_R^c}$ are the GNS Hilbert spaces of $\psi|_{B_R}$ and $\psi|_{B_R^c}$.

    Pick an $R_0>0$ with $f(R_0)<1$ and let $R\ge R_0$. Notice that the purifications of $|0\rangle_{B_R^c}$ in $\mathcal{H}$ are precisely the unit-norm vectors in $\mathcal{H}_{B_R}\otimes |0\rangle_{B_R^c}$, and that $|\phi\rangle$ is a purification of $\phi|_{B_R^c}$ in $\mathcal{H}$. By Uhlmann's theorem \cite{Uhlmann} we have
    \begin{align}
        \max\left\{|\angles{\psi|\chi}| \ \bigg| \ |\chi\rangle \in \mathcal{H}_{B_R}\otimes |0\rangle_{B_R^c} \text{ and } \angles{\chi|\chi} = 1\right\} &= F(\phi|_{B_R^c}, \psi|_{B_R^c}). \label{eqn:chimax}
    \end{align}
    Let $|\chi^R\rangle$ be a maximizer with $\angles{\phi|\chi_x^R}\ge 0$. It satisfies
    \begin{align}
        \angles{\phi|\chi^R} &= F(\phi|_{B_R^c}, \psi|_{B_R^c})\nonumber\\
        &\ge 1- f(R)/2, \label{eqn:overlap}
    \end{align}
    where the second line is due to the Fuchs-Van de Graaf inequality. Since $f(R)<1$, $|\phi\rangle$ and $|\chi^n\rangle$ aren't orthogonal, and since $|\chi^R\rangle$ maximizes (\ref{eqn:chimax}) it follows that $|\chi^R\rangle$ is the normalized projection of $|\phi\rangle$ onto $\mathcal{H}_{B_R}\otimes |0\rangle_{B_R^c}$.

    By Lemma \ref{lem:unitary} there is a unitary $U^{R_0}$ localized on $B_{R_0}$ such that $U^{R_0}|0\rangle_{B_{R_0}} = |\chi^{R_0}\rangle$. Next, by (\ref{eqn:overlap}) we have
    \begin{align}
        \||\chi^{R-1}\rangle - |\chi^R\rangle\| &\le \||\chi^{R-1}\rangle - |\phi\rangle \| + \||\chi^R\rangle - |\phi\rangle \|\nonumber\\
        &\le 2\sqrt{f(R-1)}.
    \end{align}
    so for any $R\ge R_0$, Lemma \ref{lem:unitary} guarantees unitary $U^R$ localized on $B_R$ with $U^R|\chi^{R-1}\rangle = |\chi^R\rangle$, and
    \begin{align}
        \|U^R - \boldsymbol{1}\| = 2\sqrt{f(R-1)}.
    \end{align}
    Then 
    \begin{align}
        V:=\lim_{R\to \infty}U^{R}\hdots U^{R_0} = \sum_{R=R_0+1}^\infty(U^{R-1}-1)U^{R-1}\hdots U^{R_0}
    \end{align}
    is unitary, satisfies $V|0\rangle = |\phi\rangle$, and for any $S\ge R_0$ we have $\|V- \sum_{R=R_0+1}^{S}(U^{R-1}-1)U^{R-1}\hdots U^{R_0}\|\le \sum_{r\ge S} 2\sqrt{f(r)}$, so $V\in \mathscr{A}_{al}$.
\end{proof}
    For the following Lemma, let $\Gamma_1,\Gamma_2 \subset \Z^d$ be two regions such that $\Gamma_1 \subset \mathbb{H}_d$, $\Gamma_2\subset \mathbb{H}_d^c$, and $\Gamma_1\cup \Gamma_2$ has bounded stable intersection with $\partial\mathbb{H}_d$.
\begin{lemma}\label{lem:alunitary}
    Let $K$ be a contractible open subset of $\R^n$ for some $n\ge0$, and let $\gamma_1$ and $\gamma_2$ be families of automorphisms on $K$ such that each $\gamma_i$ is obtained as the path-ordered integral of a derivation-valued 1-form on $K\times [0,1]$ that is smoothly confined on $\Gamma_i$.
    Suppose that $(\psi,\derG)$ a smooth family of SRE states on $K$ such that $\psi \circ \gamma_1 = \psi \circ \gamma_2$. Then there is a smooth family of unitary observables $V \in C^{\infty} (K,\mathscr{A}_{al})$ with $\psi \circ \gamma_i = \psi \circ \Ad_{V}$ and $\overline{\tr}(V^{-1}dV) = 0$.
\end{lemma}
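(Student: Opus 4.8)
The plan is to run the proof of Proposition~\ref{prop:alunitary} in families, using the contractibility of $K$ to reduce everything to a single, fixed GNS space. First I would pick $x_0\in K$ and, integrating $\derG$ along a smooth contraction of $K$ (cf.\ Proposition~\ref{lem: path-ordered-exp}), produce a smooth family of LGAs $\tilde\alpha=\{\tilde\alpha_x\}$ with $\psi_x=\psi_{x_0}\circ\tilde\alpha_x^{-1}$. Put $\phi:=\psi\circ\gamma_1=\psi\circ\gamma_2$ and $\Phi_x:=\phi_x\circ\tilde\alpha_x$; then $\Phi_x=\psi_{x_0}\circ\mu_{i,x}$ with $\mu_{i,x}:=\tilde\alpha_x^{-1}\circ\gamma_{i,x}\circ\tilde\alpha_x$, and this is independent of $i\in\{1,2\}$ because $\psi\circ\gamma_1=\psi\circ\gamma_2$. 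Since conjugation by an LGA preserves confinement regions (Proposition~\ref{prop:localization-LGAs}), $\mu_{i,x}$ is again a path-ordered integral of a $1$-form smoothly confined on $\Gamma_i$.

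The key step is to show that $\Phi_x$ agrees with $\psi_{x_0}$ at a \emph{bounded} region, i.e.\ that there is a superpolynomially decaying $f$, locally uniform in $x$, with $|\Phi_x(A)-\psi_{x_0}(A)|\le f(R)\,\|A\|$ for all $A$ localised outside $B_R(0)$. From $\Phi_x=\psi_{x_0}\circ\mu_{1,x}$ and the confinement of $\mu_{1,x}$ on $\Gamma_1$, $\Phi_x$ and $\psi_{x_0}$ agree superpolynomially on observables localised away from $\Gamma_1$; from $\Phi_x=\psi_{x_0}\circ\mu_{2,x}$ they agree away from $\Gamma_2$. Since $\Gamma_1\subseteq R_1$, $\Gamma_2\subseteq R_2$, and $R_1=\Gamma_1\cup(X\cap\Gamma_2)$ and $R_2=\Gamma_2\cup(X^c\cap\Gamma_1)$ have bounded stable intersection, the set of sites within any fixed distance of both $\Gamma_1$ and $\Gamma_2$ is bounded; hence a finite set far from the origin is far from $\Gamma_1$ or from $\Gamma_2$, and the two agreement statements — together with the cluster-decomposition property of the SRE state $\psi_{x_0}$, which handles observables straddling the two pieces — give the required estimate. (One may also organise this bookkeeping by introducing the hybrid automorphism $\nu_x=\tau\exp\!\big(\int(\res_X\mathsf m_1+\res_{X^c}\mathsf m_2)\big)$ built from generators $\mathsf m_i$ of $\mu_{i,x}$, using Proposition~\ref{prop:localization-properties}.)

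With the estimate in hand I would carry out the construction of Proposition~\ref{prop:alunitary} for the pair $(\psi_{x_0},\Phi_x)$ inside the fixed GNS triple of $\psi_{x_0}$. Absorb one parameter-independent LGA so that $\psi_{x_0}$ is factorised; then the estimate and a smooth-family version of Corollary~2.6.11 of \cite{BR2} show that $\Phi_x$ is a vector state whose ray depends smoothly on $x$, the normalised projections $|\chi^R_x\rangle$ onto $\CH_{B_R}\otimes|0\rangle_{B_R^c}$ have non-vanishing mutual overlaps for $R\ge R_0$, the rotations furnished by Lemma~\ref{lem:unitary} give well-defined smooth local unitaries $U^R_x$ of determinant $1$, and $W_x:=\lim_{R\to\infty}U^R_x\cdots U^{R_0}_x$ converges in $C^\infty(K,\mathscr{A}_{al})$ with $\Phi_x=\psi_{x_0}\circ\Ad_{W_x}$. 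Each $U^R_x$ has determinant $1$, so $\overline{\tr}((U^R_x)^{-1}\,dU^R_x)=0$, and summing the telescoped contributions gives $\overline{\tr}(W_x^{-1}dW_x)=0$. Finally set $V_x:=\tilde\alpha_x(W_x)\in C^\infty(K,\mathscr{A}_{al})$: then $\psi_x\circ\Ad_{V_x}=\psi_{x_0}\circ\Ad_{W_x}\circ\tilde\alpha_x^{-1}=\Phi_x\circ\tilde\alpha_x^{-1}=\psi_x\circ\gamma_{i,x}$, and $\overline{\tr}(V_x^{-1}dV_x)=\overline{\tr}(W_x^{-1}dW_x)+\overline{\tr}\big(W_x^{-1}(\tilde\alpha_x^{-1}d\tilde\alpha_x)(W_x)\big)=0$, the second term vanishing term by term in the brick decomposition of $\tilde\alpha_x^{-1}d\tilde\alpha_x$ by traciality of $\overline{\tr}$.

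The genuine difficulty will be the smoothing: upgrading to families the vector-state representation, Uhlmann's theorem and the rotation of Lemma~\ref{lem:unitary}, and — most of all — proving that the infinite product of local unitaries converges in the Fr\'echet topology of $C^\infty(K,\mathscr{A}_{al})$ with constants uniform on compact subsets of $K$, so that $W_x$ is genuinely smooth. The other, more mechanical but still delicate, point is the reduction to a bounded region: it is exactly here that the unbounded confinement sets $\Gamma_i$ are traded for a bounded one, which is what the hypothesis on $R_1$ and $R_2$ together with the clustering of SRE states accomplishes.
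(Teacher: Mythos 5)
Your reduction to a fixed base state via the trivializing LGA family, and the final conjugation $V_x=\tilde\alpha_x(W_x)$, match the paper's setup, but the core of your argument --- ``run Proposition \ref{prop:alunitary} smoothly in families'' --- is exactly what you leave unproven, and it is not a routine verification: it is the entire content of the lemma (pointwise, the statement is just Proposition \ref{prop:alunitary} applied at each $x$). All the estimates in Proposition \ref{prop:alunitary} control operator norms only; nothing in your outline controls $x$-derivatives of the GNS vector representative of $\Phi_x$, of the Uhlmann maximizers $|\chi^R_x\rangle$, or of the rotations $U^R_x$, nor the convergence of the infinite product $U^R_x\cdots U^{R_0}_x$ in the Fr\'echet topology of $C^\infty(K,\mathscr{A}_{al})$ with superpolynomial bounds on all derivatives. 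The paper is structured precisely to avoid this: it never smooths the GNS/Uhlmann construction. Instead it sets $\chi=\psi\circ\gamma_i\circ\beta^{-1}$, notes that $\chi$ is parallel with respect to both $\tilde\gamma_i^{-1}d\tilde\gamma_i$, and uses the contracting homotopy $h^\chi$ to build the hybrid 1-form $\derH=\tilde\gamma_1^{-1}d\tilde\gamma_1+\partial\res_X h^\chi(\tilde\gamma_2^{-1}d\tilde\gamma_2-\tilde\gamma_1^{-1}d\tilde\gamma_1)$, which is smoothly confined on both $\Gamma_1\cup(X\cap\Gamma_2)$ and $\Gamma_2\cup(X^c\cap\Gamma_1)$, hence on a bounded set by hypothesis; Lemma \ref{lem:smooth-lift} then lifts it to a smooth $\mathscr{A}_{al}$-valued 1-form, whose path-ordered exponential is a manifestly smooth family of almost-local unitaries $W$ transporting $\chi_{x_*}$ to $\chi_x$. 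Proposition \ref{prop:alunitary} is invoked only once, at the basepoint, where no parameter dependence is needed. Note that this is also the only place the set $X$ in the hypothesis plays a role; your outline essentially never uses $X$ (the ``hybrid automorphism $\nu_x$'' appears only parenthetically and is not exploited), which is a sign that the mechanism the hypothesis was designed for is missing from your argument.

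A secondary soft spot is your ``key step'': the claim that $|\Phi_x(A)-\psi_{x_0}(A)|\le f(R)\|A\|$ for all $A$ localized outside $B_R(0)$ follows from the two one-sided agreement statements plus ``cluster decomposition of the SRE state.'' For observables supported partly near $\Gamma_1$ (far from the origin) and partly near $\Gamma_2$, neither $\mu_{1,x}(A)\approx A$ nor $\mu_{2,x}(A)\approx A$ applies, and a general element of $\mathscr{A}_{B_R(0)^c}$ is not a product (or a small sum of products) of observables confined to the two pieces --- the cross-norm/number-of-terms growth blocks the naive clustering argument, so this needs an actual proof (and, in your scheme, uniformly on compacts in $x$). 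The paper needs the analogous asymptotic-equality statement only at the single point $x_*$, and for the trace normalization it simply uses closedness of $\overline{\tr}(V^{-1}dV)$ and contractibility of $K$ to divide by a $U(1)$-valued function, rather than your determinant-one bookkeeping, which again presupposes the unestablished smoothness of the $U^R_x$.
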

\begin{remark}
    This lemma is used twice in the text: once in Section \ref{sec:berryquant} with $\Gamma_1=L\subset \Z^1$ and $\Gamma_2 = R\subset\Z^1$, and once in Section \ref{sec:2dThoulessQuant} with $\Gamma_1 = (\partial\mathbb{H}_1) \cap \mathbb{H}_2\subset \Z^2$ and $\Gamma_2 = (\partial\mathbb{H}_1) \cap \mathbb{H}_2^c\subset \Z^2$.
\end{remark}
\begin{proof}
    Choose an arbitrary basepoint $x_* \in K$ and let $F:K\times [0,1]\to K$ be a smoooth nullhomotopy to the point $x_*$. Then $\beta:= \tau\exp(\int_0^1F^*\derG)$ satisfies $\psi_x = \psi_{x_*}\circ \beta_x$ for any $x\in K$. Define $\tilde\gamma_i := \beta\circ\gamma_i\circ\beta^{-1}$. Since $\gamma_i$ is of the form $\tau\exp(\int_0^1\derG)$ for a $\derG\in \Omega^1(K\times[0,1],\mfkDal)$ that is smoothly confined on $\Gamma_i$, the LGA $\beta\circ\gamma_i\circ\beta^{-1} = \tau\exp(\int_0^1\beta(\derG))$ is of that form too. By Proposition \ref{prop:localization-LGAs}, $\tilde\gamma_i^{-1}d\tilde\gamma_i$ is smoothly confined on $\Gamma_i$.

    Let $\chi:= \psi \circ \gamma_i\circ \beta^{-1}$. Then $\chi$ is parallel with respect to $\tilde{\gamma}_i^{-1}d\tilde\gamma_i$ for $i=1,2$. By Lemma \ref{lem:connection-interpolation}, $\chi$ is parallel with respect to a $\derH \in \Omega^1(K,\mfkDal)$ that is confined on $\Gamma_1\cup \Gamma_2$ and smoothly interpolates between $\tilde\gamma_2^{-1}d\tilde\gamma_2$ on $\mathbb{H}_2$ and $\tilde\gamma_1^{-1}d\tilde\gamma_1$ on $\mathbb{H}_2^c$. It follows that $\derH$ is smoothly confined on a bounded region, so by Proposition \ref{prop:smooth-lift} it has a lift to $\Omega^{1}(K,\mathscr{A}_{al})$, which by abuse of notation we will also call $\derH$. Then the path-ordered integral of $\derH$ along $F$ is an almost-local unitary $W:= \tau\exp(\int_0^1F^*\derH)$ and we have $\chi_x= \chi_{x_*}\circ\Ad_{W_x}$ for any $x\in K$. 
    
    By the definition of $\chi$ it is apparent that it asymptotically equals $\psi_{x_*}$ away from a stable intersection of $\Gamma_1$ and $\Gamma_2$, as in (\ref{eqn:prop-localunitary}). Thus by Proposition \ref{prop:alunitary} there is a unitary $V_*\in \mathscr{A}_{al}$ with $\chi_{x_*} = \psi_{x_*} \circ \Ad_{{V_*}}$. Letting $V_x := \beta_x^{-1}(V_*W_x)$ we have
    \begin{align}
        \psi_x\circ (\gamma_i)_x &= \chi_x \circ \beta_x\nonumber\\
        &= \chi_{x_*}\circ\Ad_{W_x}\circ \beta_x\nonumber\\
        &= \psi_{x_*} \circ \Ad_{V_*W_x}\circ \beta_x\nonumber\\
        &= \psi_x \circ \Ad_{V_x}.
    \end{align}
    Finally, since $\overline{\tr}(V^{-1}dV)$ is a closed 1-form and $K$ is contractible, there is a smooth function $g: K\to U(1)$ with $g^{-1}dg = \overline{\tr}(V^{-1}dV)$, and multiplying $V$ by $g^{-1}$ ensures that $\overline{\tr}(V^{-1}dV)=0$.
\end{proof}

\section{Confined chains}\label{sec:appendix-confined} 
The goals of this appendix is to prove Propositions \ref{prop:localization-properties}, \ref{prop:localization-LGAs}, and \ref{prop:smooth-lift}, as well as to introduce Lemma 
\begin{lemma}\label{lem:confined-1-chain}
    A 1-chain $\derf$ is confined on $X\subset \Z^d$ iff there is another 1-chain $\derg$ whose entries $\derg_j$ vanish outside $X$ such that $\partial \derg = \partial \derf$.
\end{lemma}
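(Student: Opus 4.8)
The plan is to prove the two implications separately, with essentially all the content in the forward direction. For the reverse implication, note first that a $1$-chain $\derg$ whose entries vanish off $X$ is automatically confined on $X$: take $h(0)=\sup_j\|\derg_j\|$ (finite by the uniform-localization axiom (iv) for chains) and $h(r)=0$ for $r>0$, so that $\|\derg_j\|\le h(d(X,j))$ for all $j$. Given $\partial\derg=\partial\derf$, this already exhibits the derivation $\partial\derf$ as confined on $X$: for $B$ localized on a finite $Y$ one has $\|\partial\derf(B)\|=\|\sum_{k\in X}[\derg_k,B]\|\le\sum_{k\in X}\|[\derg_k,B]\|$, and each summand is bounded by $h'(d(k,Y))\|B\|$ for a fixed superpolynomially decaying $h'$ (split $\derg_k$ into a part approximately supported within distance $d(k,Y)/2$ of $k$, which nearly commutes with $B$, and a superpolynomially small remainder). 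The relation between confined derivations and confined $1$-chains discussed alongside Proposition \ref{prop:localization-properties} then yields that $\derf$ is confined on $X$.

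The forward direction I would prove by an explicit ``retract to the nearest point of $X$'' construction. Suppose $\derf$ is $h$-confined on $X$ with $h$ superpolynomially decaying, and fix any map $r:\Z^d\to X$ with $d(j,r(j))=d(j,X)$ for all $j$ (a nearest point of $X$, ties broken arbitrarily). Define
\begin{align}
    \derg_k := \sum_{j\in r^{-1}(k)}\derf_j \quad (k\in X), \qquad \derg_k := 0 \quad (k\notin X).
\end{align}
Since $\|\derf_j\|\le h(d(j,X))=h(d(j,k))$ for $j\in r^{-1}(k)$, the bound $\|\derg_k\|\le\sum_{m\ge 0}(2m+1)^d h(m)<\infty$ holds uniformly in $k$, so each $\derg_k$ is a well-defined traceless skew-adjoint element of $\mathscr{A}_{al}$ (skew-symmetry is vacuous for $n=1$). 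By construction $\derg_k=0$ off $X$, and $\partial\derg=\partial\derf$ because for every $A\in\mathscr{A}_{al}$ both $\sum_j[\derf_j,A]$ and $\sum_k\sum_{j\in r^{-1}(k)}[\derf_j,A]$ are absolutely convergent, hence equal by rearrangement.

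The one genuinely technical point — and the step I expect to be the main obstacle — is checking that $\derg=\{\derg_k\}$ is an honest $1$-chain, i.e. that $\sup_k\|\derg_k\|_{k,\alpha}<\infty$ for every $\alpha$. For this I would split, at a scale $s$, $\derg_k=\sum_{m\le s/2}\sum_{j\in r^{-1}(k):\,d(k,j)=m}\derf_j+R_s$, where $\|R_s\|\le\sum_{m>s/2}(2m+1)^d h(m)$ decays superpolynomially in $s$; and each $\derf_j$ with $d(k,j)=m\le s/2$ is approximable within $\mathscr{A}_{B_j(s/2)}\subseteq\mathscr{A}_{B_k(s)}$ with error $\le\|\derf_j\|_{j,\alpha}(1+s/2)^{-\alpha}\le C_\alpha(1+s)^{-\alpha}$ (the seminorms $\|\derf_j\|_{j,\alpha}$ being uniformly bounded by the chain seminorms of $\derf$), and there are $O(s^{d+1})$ such $j$. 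Thus $\inf_{B\in\mathscr{A}_{B_k(s)}}\|\derg_k-B\|=O(s^{d+1-\alpha})+\|R_s\|$, which is superpolynomially small once $\alpha$ is taken large relative to the target rate; combined with the uniform norm bound above this gives uniform control of all the seminorms. The only real care needed is the bookkeeping — one fixes the splitting scale as a fixed fraction of $s$ and trades the polynomial loss against a power of the decay exponent — exactly in the style of the other estimates in Appendix \ref{sec:appendix-confined}.
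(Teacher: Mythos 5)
Your forward-direction argument is essentially the paper's own proof. The paper also retracts each site of $X^c$ to a nearest point of $X$ (via a partition of $X^c$ into sets $\tilde S_j$ indexed by $j\in X$), sets $\derg_j=\derf_j+\sum_{k\in\tilde S_j}\derf_k$ and $\derg_j=0$ off $X$, bounds $\|\derg_j\|$ uniformly by summing $h$ over distance shells, and checks the chain seminorms by the same two-scale estimate you describe: approximate the entries within distance $s/2$ of $j$ inside $\mathscr{A}_{B_j(s)}$ using the uniform seminorms of $\derf$, and control the remaining entries by the superpolynomial decay of $h$, trading the polynomial count against a large localization exponent. So that half is correct and matches the paper step for step (your $O(s^{d+1})$ count versus $O(s^d)$ is harmless).

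The reverse direction is where you have a genuine gap, although it is a gap you share with the literal statement. From $\derg$ supported on $X$ you do correctly conclude that the derivation $\partial\derf=\partial\derg$ is confined on $X$, but your last step --- invoking the relation between confined derivations and confined $1$-chains to conclude that $\derf$ itself is confined on $X$ --- does not follow: that relation (Lemma \ref{lem:locder} combined with Lemma \ref{lem:confined-1-chain}) only produces \emph{some} confined $1$-chain generating the derivation, not confinement of the particular chain $\derf$ you started with. In fact the literal converse is false: in $d=1$ with $X=\{0\}$, let $A_{2m}$ be a copy of a fixed traceless skew-adjoint one-site observable placed at site $2m$ and set $\derf_{2m}=A_{2m}$, $\derf_{2m+1}=-A_{2m}$; this is a legitimate $1$-chain with $\partial\derf=0=\partial\derg$ for $\derg=0$ (which vanishes outside $X$), yet $\|\derf_j\|$ does not decay away from $X$. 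Consistently with this, the paper's proof establishes only the forward implication, which is also the only direction ever used (in Lemma \ref{lem:locder} and Proposition \ref{prop:localization-properties}); the ``iff'' is best read as the derivation-level statement quoted before Proposition \ref{prop:localization-properties}, not as a claim about the given chain $\derf$.
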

\begin{proof}
    For each $j\in X$ define $S_j:=\{k\in X^c : d(k,X) = d(k,j)\}$. Choose any total order on $\Z^d$ and define $\tilde{S}_j := S_j \backslash \bigcup_{k<j}S_k$. Then the $\tilde{S}_j$ are disjoint and $\bigsqcup_{j\in X}\tilde{S}_j = X^c$.
    Define the 1-chain $\derg$ by
    \begin{align}
        \derg_j := \left\{\begin{array}{cc}
            \derf_j + \sum_{k\in \tilde{S}_j}\derf_k & \text{ if }j\in X \\
            0 & \text{ otherwise.} 
        \end{array}\right. \label{eqn:locder1}
    \end{align}
    Then $\partial \derg = \partial \derf= \derF$ and it remains only to show that $\derg$ is a 1-chain. Since $\derf$ is confined on $X$ we have a superpolynomially decaying $h$ such that $\|\derf_k\| \le h(d(k,X))$. Thus for $k\in \tilde{S}_j$ we have $\|\derf_k\| \le h(d(k,j))$ and it follows that $\sum_{k\in \tilde{S}_j}\derf_k$ is absolutely convergent and the $\derg_j$'s have uniformly bounded norm. Finally let us show that $\sum_{k\in \tilde{S}_j}\derf_k$ is $f$-confined on $j$ for a function $f$ which is the same for all $j$. Since $\derf$ is a 1-chain there is a superpolynomially decaying $g_1$ such that every $\derf_k$ is $g_1$-confined at $k$. Let $r>0$ be even. For each $k\in \tilde{S}_j\cap B_j(r/2)$ pick $A_k\in \mathscr{A}_{B_k(r/2)}$ with $\|A_k-\derf_k\|\le g_1(r/2)$, and let $B:= \sum_{k\in \tilde{S}_j\cap B_j(r/2)}A_k\in \mathscr{A}_{B_j(r)}$. Then we have
    \begin{align}
       \|\sum_{k\in \tilde{S}_j}\derf_k - B\| &\le \sum_{k\in \tilde{S}_j\cap B_j(r/2)}\|\derf_k-A_k\| + \sum_{k\in \tilde{S}_j\cap B_j(r/2)^c}\|\derf_k\|\nonumber\\
       &\le r^dg_1(r) +\sum_{R\ge r/2} (2R)^d h(R) := g_2(r).
   \end{align}
   Thus $\derg_j$ is $(g_1+g_2)$-confined on $j$.
\end{proof}
Below we will often use the following family of seminorms on traceless almost-local observables:
\begin{lemma}[\cite{LocalNoether} Proposition D.1]
    Let $V\subset \mathscr{A}_{al}$ be the subspace of traceless observables. Then for any $j\in \Z^d$ the family of seminorms $\{\|\cdot\|_{j,\alpha}\}_{\alpha\in \N}$ on $V$ is equivalent to the family $\{\|\cdot\|^{br}_{j,\alpha}\}_{\alpha\in \N}$ given by
    \begin{align}
        \|A\|^{br}_{j,\alpha} := \sup_X\|A^X\|(1+\diam(\{j\}\cup X))^\alpha
    \end{align}
    where the supremum is taken over all bricks in $\Z^d$ and $\sum_ZA^Z$ is the brick decomposition of the inner derivation corresponding to $A\in \mathscr{A}_{al}$.
\end{lemma}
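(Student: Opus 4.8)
The plan is to prove the two families dominate each other up to a bounded shift in the index, which is exactly equivalence of the generated locally convex topologies. For brevity I would write $g_A(r):=\inf_{B\in\mathscr{A}_{B_j(r)}}\|A-B\|$ and $A_X:=\overline{\tr}_{X^c}(A)$ for the conditional expectation onto $\mathscr{A}_X$; for a brick $X$ this is the orthogonal projection onto $\mathscr{A}_X$ with respect to $(A,B)=\overline{\tr}(A^*B)$, and $A^X$ is the further orthogonal projection of $A_X$ onto the orthocomplement of $\sum_{Y\subsetneq X}\mathscr{A}_Y$ inside $\mathscr{A}_X$. Two structural facts about brick decompositions would carry the whole argument. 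First, a uniform operator-norm bound $\|A^X\|\le 4^d\|A\|$ for every traceless $A$ and every brick $X$: the maximal proper sub-bricks of $X$ are the at most $2d$ one-layer truncations $X^{k,\pm}$ (delete the outermost layer along the $k$-th axis from the $\pm$ side), the conditional expectations $E_{X^{k,-}}$ and $E_{X^{k',+}}$ trace out disjoint sets of sites, so they pairwise commute, hence the orthogonal projection onto $\bigl(\sum_{Y\subsetneq X}\mathscr{A}_Y\bigr)^{\perp}\cap\mathscr{A}_X$ equals $\prod_{k}(1-E_{X^{k,-}})(1-E_{X^{k,+}})$, and expanding this product exhibits $A^X$ as an alternating sum of at most $4^d$ compositions of conditional expectations applied to $A$, each of norm $\le 1$. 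Second, if $A\in\mathscr{A}_W$ for a brick $W$ and $X$ is a brick with $X\not\subseteq W$, then $A^X=0$: indeed $A_X=\overline{\tr}_{W\setminus X}(A)\in\mathscr{A}_{W\cap X}$, and $W\cap X$ is a brick properly contained in $X$, so $A_X$ already lies in $\sum_{Y\subsetneq X}\mathscr{A}_Y$ and its orthocomplement projection vanishes. I would also use the telescoping identity $\sum_{X\subseteq W}A^X=A_W$ and, since $\|A-A_{B_j(r)}\|\le 2\,g_A(r)\to 0$ for $A\in\mathscr{A}_{al}$, the norm convergence $A_{B_j(r)}\to A$; here the $L^\infty$ metric is used so that the ball $B_j(r)$ is itself a brick.

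For $\|A\|^{br}_{j,\alpha}\lesssim\|A\|_{j,\alpha}$ I would fix a brick $X$, put $s:=\diam(\{j\}\cup X)$, and take the radius $r:=\lceil s/2\rceil-1$ when $s\ge 1$ (the case $X=\{j\}$ is immediate from $\|A^{\{j\}}\|\le 4^d\|A\|$). Since the triangle inequality gives $s\le 2\max_{k\in X}d(k,j)$, some site of $X$ lies at distance $>r$ from $j$, so $X\not\subseteq B_j(r)$. Choosing $B\in\mathscr{A}_{B_j(r)}$ with $\|A-B\|$ arbitrarily close to $g_A(r)$, the second structural fact forces $B^X=0$, hence $A^X=(A-B)^X$ and $\|A^X\|\le 4^d\|A-B\|$; letting $B$ vary gives $\|A^X\|\le 4^d g_A(r)\le 4^d(1+r)^{-\alpha}\|A\|_{j,\alpha}$. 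Since $(1+s)/(1+r)=(1+s)/\lceil s/2\rceil\le 3$, multiplying by $(1+s)^\alpha$ and taking the supremum over bricks yields $\|A\|^{br}_{j,\alpha}\le 4^d3^\alpha\|A\|_{j,\alpha}$.

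For the reverse inequality I would set $\beta:=\alpha+2d+1$ and use that the number of bricks $X$ with $\diam(\{j\}\cup X)=m$ is at most $C_d(1+m)^{2d}$, since such an $X$ sits in $B_j(m)$ and is pinned down by its $2d$ endpoints. From $A=\lim_r A_{B_j(r)}=\lim_r\sum_{X\subseteq B_j(r)}A^X$ in norm, $\|A\|\le\sum_X\|A^X\|\le\|A\|^{br}_{j,\beta}\sum_{m\ge0}C_d(1+m)^{2d-\beta}\le C_d'\|A\|^{br}_{j,\beta}$. For the confinement term, taking $B:=\sum_{X\subseteq B_j(r)}A^X\in\mathscr{A}_{B_j(r)}$ and noting every brick $X\not\subseteq B_j(r)$ has $\diam(\{j\}\cup X)>r$, I get $g_A(r)\le\|A-B\|\le\sum_{m>r}C_d(1+m)^{2d-\beta}\|A\|^{br}_{j,\beta}\le C_d'(1+r)^{2d+1-\beta}\|A\|^{br}_{j,\beta}$, so $(1+r)^{\alpha}g_A(r)\le C_d'\|A\|^{br}_{j,\alpha+2d+1}$, and combining the two estimates gives $\|A\|_{j,\alpha}\le C_d''\|A\|^{br}_{j,\alpha+2d+1}$.

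The genuine content, I expect, is concentrated in the two structural facts of the first paragraph: the uniform bound $\|A^X\|\le 4^d\|A\|$ — where the product structure of bricks and the commutativity of the partial-trace projections onto the boundary layers is essential — and the vanishing $B^X=0$, which is exactly where the $L^\infty$ choice (balls are bricks) enters. Everything after that is bookkeeping: the polynomial count of bricks near a site, the contractivity of conditional expectations, and the norm convergence $A_{B_j(r)}\to A$. If one prefers not to quote the $4^d$ bound, it can be proved inline by the inclusion--exclusion over the $2d$ boundary layers sketched above.
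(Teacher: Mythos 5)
Your proof is correct in substance, but note that the paper itself does not prove this lemma: it is quoted verbatim from \cite{LocalNoether} (Proposition D.1), so there is no in-paper argument to compare against. What you have written is a self-contained proof, and its two structural pillars are exactly the right ones: the uniform bound $\|A^X\|\le 4^d\|A\|$ (which is what the paper elsewhere invokes as Proposition C.1 of \cite{LocalNoether} in the proof of Lemma \ref{lem:locder}) obtained from the inclusion--exclusion over the at most $2d$ maximal proper sub-bricks, and the vanishing $B^X=0$ for $B$ supported on a brick $W$ with $X\not\subseteq W$, which is where the $L^\infty$ choice (balls are bricks) enters; the index shifts ($\alpha\mapsto\alpha$ with constant $4^d3^\alpha$ one way, $\alpha\mapsto\alpha+2d+1$ the other) are the correct notion of equivalence of the two seminorm families, and the brick-counting and tail estimates are fine.

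Two small points should be tightened. First, your justification of commutativity is imprecise: for $k\neq k'$ the deleted layers defining $X^{k,-}$ and $X^{k',+}$ are \emph{not} disjoint (they overlap near corners of $X$); the conditional expectations nevertheless commute because partial traces compose as $\overline{\tr}_S\circ\overline{\tr}_T=\overline{\tr}_{S\cup T}$ for arbitrary $S,T$, so the conclusion and the $4^d$ count stand. Second, in the bound $\|A\|^{br}_{j,\alpha}\lesssim\|A\|_{j,\alpha}$ your approximant $B\in\mathscr{A}_{B_j(r)}$ need not be traceless, and with your purely projection-based definition of $B^X$ (no trace subtraction) one has $B^X=\overline{\tr}(B)\boldsymbol{1}\neq0$ when $X$ is a single-site brick disjoint from $B_j(r)$, since such an $X$ has no proper sub-bricks. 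This is harmless: either replace $B$ by its traceless part $B-\overline{\tr}(B)\boldsymbol{1}\in\mathscr{A}_{B_j(r)}$, which changes $\|A-B\|$ by at most a factor of $2$ because $|\overline{\tr}(B)|=|\overline{\tr}(B-A)|\le\|A-B\|$, or define $B^X$ through the inner derivation $\ad_B$, as the statement of the lemma does, in which case the scalar part is discarded automatically and $B^X=0$ in all cases.
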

The following lemma shows that if an observable is $h$-confined on two faraway points $j,k\in \Z^d$ then its norm decays superpolynomially with $d(j,k)$.
\begin{lemma}\label{lem:decaying-norm}
    For any traceless $A\in \mathscr{A}_{al}$ and any positive integer $\alpha$ we have
    \begin{align}
         \|A\| \le 2^{2d+1}(\|A\|^{br}_{j,\alpha+2d+1} + \|A\|^{br}_{k,\alpha+2d+1})\left(\frac{d(j,k)}{2}\right)^{-\alpha}
    \end{align}
\end{lemma}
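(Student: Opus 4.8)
The plan is to bound $\|A\|$ through the brick decomposition $A=\sum_X A^X$ of the inner derivation associated to $A$. Since $A$ is almost-local this sum converges absolutely in norm, and (assuming, as in every application of this lemma and as is implicit in the very definition of $\|\cdot\|^{br}_{j,\alpha}$, that $A$ is traceless) it converges to $A$ itself, so $\|A\|\le\sum_X\|A^X\|$. I would then use the following geometric dichotomy: each brick $X$ is non-empty, so for any $x\in X$,
\begin{equation}
d(j,k)\le d(j,x)+d(x,k)\le \diam(\{j\}\cup X)+\diam(\{k\}\cup X),
\end{equation}
whence $\diam(\{j\}\cup X)\ge d(j,k)/2$ or $\diam(\{k\}\cup X)\ge d(j,k)/2$. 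Splitting the set of bricks into $\CB_j:=\{X:\diam(\{j\}\cup X)\ge d(j,k)/2\}$ and its complement $\CB_k$ (which is contained in $\{X:\diam(\{k\}\cup X)\ge d(j,k)/2\}$), one gets $\|A\|\le \sum_{X\in\CB_j}\|A^X\|+\sum_{X\in\CB_k}\|A^X\|$, and by symmetry it suffices to estimate the first sum.

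For $X\in\CB_j$ the definition of the seminorm gives $\|A^X\|\le\|A\|^{br}_{j,\alpha+2d+1}\bigl(1+\diam(\{j\}\cup X)\bigr)^{-(\alpha+2d+1)}$. The only nontrivial ingredient is a counting estimate: the number of bricks $X$ with $\diam(\{j\}\cup X)=m$ is at most $(2m+1)^{2d}$, because such an $X$ lies in the cube $B_j(m)$ and is determined by its two opposite lattice corners, each of which is one of the $(2m+1)^d$ points of $B_j(m)$. Since the diameter of a finite lattice set is an integer, bricks in $\CB_j$ have $\diam(\{j\}\cup X)\ge d(j,k)/2$, so summing and comparing with an integral,
\begin{align}
\sum_{X\in\CB_j}\bigl(1+\diam(\{j\}\cup X)\bigr)^{-(\alpha+2d+1)}
&\le \sum_{m\ge d(j,k)/2}(2m+1)^{2d}(1+m)^{-(\alpha+2d+1)}\nonumber\\
&\le 2^{2d}\sum_{m\ge d(j,k)/2}(1+m)^{-(\alpha+1)}
\le \frac{2^{2d}}{\alpha}\Bigl(\tfrac{d(j,k)}{2}\Bigr)^{-\alpha},
\end{align}
using $(2m+1)^{2d}\le 2^{2d}(m+1)^{2d}$. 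Since $\alpha\ge1$, this yields $\sum_{X\in\CB_j}\|A^X\|\le 2^{2d}\|A\|^{br}_{j,\alpha+2d+1}(d(j,k)/2)^{-\alpha}$; adding the analogous bound for $\CB_k$ and using $2^{2d}\le 2^{2d+1}$ gives the stated inequality, in fact with room to spare.

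The argument is essentially routine; the one place needing care is the brick count, where the point is that it grows polynomially of degree $2d$, so that the $2d+1$ extra powers built into $\|A\|^{br}_{j,\alpha+2d+1}$ simultaneously absorb this polynomial factor and leave a convergent tail $\sum_m(1+m)^{-(\alpha+1)}$ comparable to $(d(j,k)/2)^{-\alpha}$. An alternative route giving the same kind of bound is to pick the largest integer $r<d(j,k)/2$, observe that best approximants of $A$ on $B_j(r)$ and on $B_k(r)$ are supported on disjoint regions, and deduce via a partial-trace argument (using that $A$ is traceless) that each must be small, bounding $\inf_{B\in\mathscr{A}_{B_j(r)}}\|A-B\|$ by the same brick sum.
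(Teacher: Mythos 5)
Your proof is correct and follows essentially the same route as the paper's: bound $\|A\|$ by the brick sum, use the triangle-inequality dichotomy $\diam(\{j\}\cup X)\ge d(j,k)/2$ or $\diam(\{k\}\cup X)\ge d(j,k)/2$, count bricks of a given diameter as $O(r^{2d})$, and compare the tail with an integral. The differences are only bookkeeping of constants (plus your explicit remark about tracelessness, which the paper leaves implicit), and if anything your version gives a slightly sharper bound.
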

\begin{proof}
    Let $R:= d(j,k)$.
    \begin{align}
        \|A\| &\le \sum_{X\neq \emptyset}\|A^X\| \nonumber\\
        &\le \sum_{X\neq \emptyset}\min\bigg((1+\diam(X\cup\{j\}))^{-\alpha-2d-1}\|A\|^{br}_{j,\alpha+2d+1}, \ (1+\diam(X\cup\{k\}))^{-\alpha-2d-1}\|A\|^{br}_{k,\alpha+2d+1}\bigg)  \nonumber\\
        &\le (\|A\|^{br}_{j,\alpha+2d+1} + \|A\|^{br}_{k,\alpha+2d+1})\left( \sum_{\diam(X\cup\{j\}) \ge R/2}(1+\diam(X\cup\{j\}))^{-\alpha-2d-1}\right. \nonumber \\
        &\hspace{60mm}+ \left.
        \sum_{\diam(X\cup\{k\}) \ge R/2}(1+\diam(X\cup\{k\}))^{-\alpha-2d-1}\right).
    \end{align}
    Since there are at most $(2r)^{2d}$ bricks $X$ with $\diam(X\cup \{j\}) = r$ for any $r>0$, we get
    \begin{align}
        \sum_{\diam(X\cup\{j\}) \ge R/2}(1+\diam(X\cup\{j\}))^{-\alpha-2d-1} &\le \sum_{r\ge R/2}(2r)^{2d}(1+r)^{-\alpha-2d-1}\nonumber\\
        &\le 4^d \sum_{r\ge R/2}(1+r)^{-\alpha-1}\nonumber\\
        &\le 4^d \int_{r\ge R/2}^\infty r^{-\alpha-1}dr\nonumber\\
        &= 4^d \alpha^{-1}(R/2)^{-\alpha}\nonumber\\
        &\le 4^d(R/2)^{-\alpha}.
    \end{align}
\end{proof}

The following lemma shows that the definitions of confinement for derivations and chains are compatible.
\begin{lemma}\label{lem:locder}
    Let $\derF \in \mfkDal$ and $X\subset \Z^d$. The following are equivalent:
    \begin{enumerate}[i)]
        \item $\derF$ is confined on $X$.
        \item For every $\alpha \in \Z_{>0}$ there is a $C_\alpha$ such that
        \begin{align}
            \|\derF^Z\|(1+\diam(Z))^\alpha(1+d(Z,X))^\alpha \le C_\alpha
        \end{align} for every brick $Z$.
        \item The 1-chain $\derf_j := \sum_{X\ni j}\frac{1}{|X|}\derF^X$ is confined on $X$.
    \end{enumerate}
\end{lemma}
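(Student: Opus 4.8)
The plan is to prove the three conditions equivalent by a cycle $iii)\Rightarrow i)\Rightarrow ii)\Rightarrow iii)$. Two elementary facts will carry most of the weight. First, a geometric inequality: if a brick $Z$ contains a site $j$, then $\diam(Z)+d(Z,X)\ge d(j,X)$, so when $d(j,X)$ is large every brick through $j$ is either large or far from $X$. Second, an algebraic bound: $\|\derF^Z\|\le 2^{2d}\|\derF_Z\|$. Indeed the subspace $\bigoplus_{Y\subsetneq Z}\mathscr{A}_Y$ of $\mathscr{A}_Z$ is the span of the $2d$ maximal proper sub-bricks of $Z$ (those obtained by pushing one wall of $Z$ in by one step), so $\derF^Z$ is obtained from $\derF_Z$ by applying the $2d$ commuting orthogonal projections $\mathbf 1-\overline{\tr}_{w}$, one for each wall $w$ of $Z$; each $\overline{\tr}_{w}$ is a norm-one conditional expectation, hence each factor has operator norm at most $2$.

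For $iii)\Rightarrow i)$: note that $\derf_j:=\sum_{Z\ni j}\tfrac1{|Z|}\derF^Z$ is always a genuine $1$-chain (using $\derF\in\mfkDal$ and that there are $O(r^{2d})$ bricks of diameter $r$ through $j$) and satisfies $\partial\derf=\derF$ by reindexing. Given confinement of $\derf$ on $X$, for $B$ localized on a finite $Y$ one estimates $\|[\derf_j,B]\|$ in two ways: uniform localization of $\derf$ gives $\|[\derf_j,B]\|\lesssim g(d(j,Y))\|B\|$ (approximate $\derf_j$ by an observable near $j$ that commutes with $B$), while confinement on $X$ gives $\|[\derf_j,B]\|\le 2\|\derf_j\|\,\|B\|\lesssim g(d(j,X))\|B\|$, so $\|[\derf_j,B]\|\lesssim G(\max\{d(j,X),d(j,Y)\})\|B\|$ for a superpolynomially decaying $G$. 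Partitioning $\Z^d$ according to a choice of nearest point $z\in X$ and splitting on whether $d(j,z)$ exceeds $\tfrac12 d(z,Y)$ turns $\sum_j G(\max\{d(j,X),d(j,Y)\})$ into a bound $\sum_{z\in X}h(d(z,Y))$ with $h$ superpolynomial, which is precisely confinement of $\derF$ on $X$.

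For $i)\Rightarrow ii)$: since $\operatorname{ad}_{\derF_Z}(A)=\overline{\tr}_{Z^c}(\derF(A))$ for $A\in\mathscr{A}_Z$ and $\derF_Z$ is traceless skew-adjoint, $\|\derF_Z\|\le\|\operatorname{ad}_{\derF_Z}\|\le\sup_{\|A\|\le1,\,A\in\mathscr{A}_Z}\|\derF(A)\|\le\sum_{z\in X}h(d(z,Z))\le C_d\sum_{s\ge d(Z,X)}(\diam Z+s)^{d-1}h(s)$, and combining with the second fact bounds $\|\derF^Z\|$. Then split on cases: if $\diam Z\ge d(Z,X)$, use the UAL bound $\|\derF^Z\|\le\|\derF\|_{2\alpha}(1+\diam Z)^{-2\alpha}$ together with $1+d(Z,X)\le1+\diam Z$; if $\diam Z<d(Z,X)$ then $\diam Z+s\le 2s$ throughout the sum, so $\|\derF^Z\|\lesssim\sum_{s\ge d(Z,X)}s^{d-1}h(s)$, which is superpolynomially small in $d(Z,X)$ and dominates $(1+\diam Z)^{\alpha}$. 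Either way $\|\derF^Z\|(1+\diam Z)^\alpha(1+d(Z,X))^\alpha$ is bounded uniformly in $Z$, which is $ii)$.

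Finally $ii)\Rightarrow iii)$ is routine bookkeeping: for $\derf$ as above, $\|\derf_j\|\le\sum_{Z\ni j}\|\derF^Z\|$, and feeding in $ii)$ with a sufficiently large exponent together with $\diam Z+d(Z,X)\ge d(j,X)$ and the $O(r^{2d})$ count of diameter-$r$ bricks through $j$ (splitting the sum at $\diam Z=\tfrac12 d(j,X)$) yields a superpolynomially decaying bound on $\|\derf_j\|$ in $d(j,X)$. I expect the main obstacle to be $i)\Rightarrow ii)$: the hypothesis controls $\derF$ only through its action on observables, and one must recover \emph{simultaneous} decay of $\|\derF^Z\|$ in both $\diam Z$ and $d(Z,X)$. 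Confinement alone gives decay in $d(Z,X)$ but leaves an uncontrolled $(\diam Z)^{d-1}$-type factor, and it is the separate input of UAL-ness, combined with the case split, that closes this gap; the other implications are straightforward estimates with superpolynomially decaying tails.
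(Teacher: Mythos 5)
Your proposal is correct and follows essentially the same route as the paper: the same cyclic chain of implications, the same key mechanism for $i)\Rightarrow ii)$ (bounding $\|\derF_Z\|$ through the conditional expectation $\overline{\tr}_{Z^c}$ and confinement, the factor $2^{2d}=4^d$ relating $\derF^Z$ to $\derF_Z$ -- which you prove inline where the paper cites Proposition C.1 of \cite{LocalNoether} -- and the case split on $\diam Z$ versus $d(Z,X)$ that brings in the UAL seminorms), and the same tail-summation with brick counts for $ii)\Rightarrow iii)$. The only cosmetic deviation is in $iii)\Rightarrow i)$, where the paper first replaces the confined $1$-chain by one literally supported on $X$ via Lemma \ref{lem:confined-1-chain} and then sums, while you sum directly over all of $\Z^d$ using the minimum of the two bounds and a partition by nearest point of $X$; both arguments are valid and of comparable length.
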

\begin{proof}
$i)\implies ii)$. Let $\derF = \sum_Z \derF^Z$ be the brick decomposition of $\derF$. Fix a brick $Z$. There is an operator $A$ supported in $Z$ with $\|A\|=1$ and $\|\derF_Z(A)\| = \|\derF_Z\|$. From this it follows that 
\begin{align}
    \|\derF_Z\| &= \|\derF_Z(A)\|\nonumber\\
    &= \|\overline{\tr}_{Z^c}\derF(A)\|\nonumber\\
    &\le \|\derF(A)\|.
\end{align}
Since $\derF$ is confined on $X$ we have
\begin{align}
    \|\derF_Z\| \le \|\derF(A)\| \le \sum_{j\in X}h_1(d(j,Z))
\end{align}
for some superpolynomially decaying function $h_1$. By Proposition C.1 in \cite{LocalNoether} we have 
\begin{align}
    \|\derF^Z\|\le 4^d \|\derF_Z\|\le 4^d\sum_{j\in X}h_1(d(j,Z))
\end{align}
Letting $h_2(R) := \sum_{r\ge R}2^dr^dh_1(r)$, we have
\begin{align}
    \sum_{j\in X}h_(d(j,Z)) &\le \sum_{\substack{j\in X\\ k \in Z}}h_1(d(j,k)) \nonumber \\
    &\le \sum_{k\in Z}h_2(d(k,X)) \nonumber \\
    &\le \diam(Z)^dh_2(d(Z,X)). \label{eqn:locder-1}
\end{align}
and so
\begin{align}
    \|\derF^Z\|\le 4^d \diam(Z)^dh_2(d(Z,X)).
\end{align}
If $\diam(Z)^dh_2(d(Z,X)) \le (1+ d(X,Z))^{-\alpha}(1+ \diam(Z))^{-\alpha}$ then we have 
\begin{align}
    \|\derF^Z\|(1+ d(X,Z))^\alpha(1+ \diam(Z))^{\alpha}\le 4^d.
\end{align}
Otherwise if $\diam(Z)^dh_2(d(Z,X)) > (1+ d(X,Z))^{-\alpha}(1+ \diam(Z))^{-\alpha}$ we have
\begin{align}
    \|\derF^Z\|(1+d(Z,X))^\alpha(1+ \diam(Z))^{\alpha} &\le \|\derF\|_{2\alpha + d}(1+\diam(Z))^{-\alpha - d}(1+d(Z,X))^\alpha\nonumber\\
    &\le \|\derF\|_{2\alpha + d}h_2(d(Z,X))(1+d(Z,X))^{2\alpha}\nonumber\\
    &\le \|\derF\|_{2\alpha + d}\sup_{r}(1+r)^{2\alpha}h_2(r).
\end{align}

$ii)\implies iii)$.
First, $\derf$ is a 1-chain because
\begin{align}
    \|\derf_j\|^{br}_{j,\alpha} = \sup_{Z\ni j}\frac{1}{|Z|}\|\derF^Z\|(1+\diam(Z))^\alpha \le \|\derF\|_{\alpha}.
\end{align}
To show $\derf$ is confined on $X$ we use the bound
\begin{align}
    \|\derf_j\| &\le \sum_{r>0}\sum_{\substack{Z \ni j \\ \diam(Z) = r}}r^{-1}\|\derF^Z\|.
\end{align}
Consider an arbitrary term in the above sum. When $r \le d(j,X)/2$ we have
\begin{align}
    \|\derF^Z\| \le C_{\alpha}(1+r)^{-\alpha}(1+ d(j,X)/2)^{-\alpha}
\end{align}
since $d(Z,X) + r \ge d(j,X)$. On the other hand when $r \ge d(j,X)/2$ we will use the bound
\begin{align}
    \|\derF^Z\|\le C_{\alpha}(1+r)^{-\alpha}.
\end{align}
Putting these together and using the fact that there are at most $d(r+1)^{d+1}$ bricks of diameter $r$ containing $j$, we have
\begin{align}
    \|\derf_j\| &\le C_\alpha(1+d(j,X)/2)^{-\alpha} \sum_{0<r\le  d(j,X)/2} d(1+r)^{-\alpha+d} +  C_{\alpha}\sum_{r> d(j,X)/2} d(1+r)^{-\alpha+d}.
\end{align}
When $\alpha \ge d+2$ we obtain
\begin{align}
     \|\derf_j\| &\le dC_{\alpha}(1+c)(1+d(j,X)/2)^{-\alpha+d+1}
\end{align}
where $c$ is a constant with $\sum_{r>R}(1+r)^{-\alpha+d}\le c R^{-\alpha+d+1}$. Thus $\sup_j\|\derf_j\|$ decays superpolynomially with $d(j,X)$.

$iii)\implies ii)$. Suppose $\derF = \partial \derf$ for some 1-chain $\derf$ confined on $X$. By Lemma \ref{lem:confined-1-chain} above, we may assume that the entries of $\derf$ vanish outside $X$. Let $Y\subset \Z^d$ be a finite region and let $A \in \mathscr{A}_{Y}$ be arbitrary. Since $\derf$ is a chain we have $\|[\derf_j,A]\| \le \|A\|h(d(j,A))$ for some superpolynomially decaying $h$. Summing these over all $j\in X$ we find that $\derF$ is $h$-confined on $X$.
\end{proof}
Recall that every $\derF\in \mfkDal$ has $\|\derF^X\|\le h(\diam(X))$ for some superpolynomially decreasing $h$. For such a derivation we have
\begin{align}
    \|\derF(A)\|\le C|\supp(A)|\|A\| \label{eqn:lga-der-bound}
\end{align}
for any strictly local observable $A$, where $C = d\sum_{R>0}R^dh(R)$. By integrating this bound one can see that it continues to hold when $\derF$ is replaced by an LGA $\alpha$.
\begin{lemma}\label{lem:alobserv}
     Let $F:\mathscr{A}_{al}\to\mathscr{A}_{al}$ be a linear map that satisfies the bound (\ref{eqn:lga-der-bound}). 
     Suppose $F$ is $h_2$-confined on a region $X$. If $A$ is an observable that is $h_1$-confined at a site $j$ then $\|F(A)\|$ is bounded by a superpolynomially decreasing function of $d(j,X)$ that depends only on $h_1$, $h_2$, and $\|A\|$. 
\end{lemma}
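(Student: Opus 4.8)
The plan is to bound $\|F(A)\|$ by decomposing $A$ into strictly local ``shells'' around $j$ and splitting the resulting series according to whether a shell sits well inside $B_j\big(d(j,X)\big)$ or reaches out toward $X$. Replacing $h_1$ and $h_2$ by $r\mapsto\sup_{s\ge r}h_i(s)$ I may assume both are non-increasing, which costs nothing. Since $A$ is $h_1$-confined at $j$, for each $r\ge 0$ I pick a strictly local $A_r\in\mathscr{A}_{B_j(r)}$ with $\|A-A_r\|\le h_1(r)$ and set $\delta_0:=A_0$ and $\delta_r:=A_r-A_{r-1}$ for $r\ge 1$. Then $\delta_r\in\mathscr{A}_{B_j(r)}$ with $\|\delta_0\|\le\|A\|+h_1(0)$ and $\|\delta_r\|\le 2h_1(r-1)$ for $r\ge1$, and the partial sums $A_N=\sum_{r=0}^N\delta_r$ converge to $A$ both in norm and in the Fr\'echet topology of $\mathscr{A}_{al}$ (the latter because $\|\delta_r\|_{j,\alpha}$ is bounded by $\|\delta_r\|$ times a polynomial in $r$, so $\sum_r\|\delta_r\|_{j,\alpha}<\infty$ for every $\alpha$ by superpolynomial decay of $h_1$). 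In the two situations where this lemma is applied $F$ is a UAL derivation or an LGA, hence continuous for the relevant topology, so $F(A)=\lim_N F(A_N)$ and it suffices to bound $\sum_r\|F(\delta_r)\|$ by a superpolynomially decreasing function of $R:=d(j,X)$.

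I split the sum at $r=R/2$. For $r\le R/2$ the support $B_j(r)$ is disjoint from $X$: indeed $d(z,B_j(r))\ge d(z,j)-r\ge t-r$ for every $z\in X$, where $t:=d(z,j)\ge R$, and there are at most $C_d(1+t)^{d-1}$ lattice points at distance $t$ from $j$. Hence by $h_2$-confinement
\begin{align}
\|F(\delta_r)\|\le\Big(\sum_{z\in X}h_2\big(d(z,B_j(r))\big)\Big)\|\delta_r\|\le\Big(\sum_{t\ge R}C_d(1+t)^{d-1}h_2(t-r)\Big)\|\delta_r\|\le\tilde h_2(R/2)\,\|\delta_r\|,
\end{align}
where $\tilde h_2(\rho):=C_d2^{d-1}\sum_{s\ge\rho}(1+s)^{d-1}h_2(s)$ is superpolynomially decreasing (here I used $t-r\ge t/2\ge R/2$, valid since $r\le R/2\le t/2$). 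Summing over $r$ and using $\sum_r\|\delta_r\|\le M:=\|A\|+h_1(0)+2\sum_{r\ge0}h_1(r)<\infty$ bounds the first part of the sum by $\tilde h_2(R/2)\,M$. For $r>R/2$ I can no longer extract smallness from $h_2$, but $\|\delta_r\|\le 2h_1(r-1)$ is itself superpolynomially small, so I apply the crude bound (\ref{eqn:lga-der-bound}) --- whose constant may be taken equal to $\sum_{z\in\Z^d}h_2(d(z,0))$, since $h_2$-confinement of $F$ already forces (\ref{eqn:lga-der-bound}) with that constant --- to get $\|F(\delta_r)\|\le C(2r+1)^d\|\delta_r\|$, so the tail $\sum_{r>R/2}C(2r+1)^d\cdot 2h_1(r-1)=:\hat h_1(R/2)$ is again superpolynomially decreasing. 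Altogether
\begin{align}
\|F(A)\|\le\tilde h_2(R/2)\,M+\hat h_1(R/2),
\end{align}
a superpolynomially decreasing function of $d(j,X)$ depending only on $h_1$, $h_2$ and $\|A\|$.

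The step needing the most care is the legitimacy of pushing $F$ through the infinite decomposition $A=\sum_r\delta_r$: since $A$ is only almost-local while (\ref{eqn:lga-der-bound}) and the definition of $h_2$-confinement are available only for strictly local inputs, one must first cut $A$ into strictly local pieces and then control $F(A)=\lim_N F(A_N)$ through continuity of $F$ and a bound on $\|F(A_N)\|$ uniform in $N$. Everything after that reduction is routine bookkeeping with counts of lattice points on spheres and tails of superpolynomially convergent series.
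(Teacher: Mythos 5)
Your proof is correct and takes essentially the same route as the paper's: decompose $A$ into strictly local shells around $j$ (the paper uses conditional expectations $\overline{\tr}_{B_j(r)^c}$, you use near-best approximants), apply the $h_2$-confinement bound to the shells well inside $B_j(d(j,X))$ and the volume bound (\ref{eqn:lga-der-bound}) to the outer shells, and sum the superpolynomially small tails. Your explicit handling of the interchange $F(A)=\sum_r F(\delta_r)$ (valid via continuity of $F$ in the cases where the lemma is used, a step the paper performs tacitly) and the observation that $h_2$-confinement already forces (\ref{eqn:lga-der-bound}) with $C=\sum_{z\in\Z^d}h_2(d(z,0))$ are minor refinements, not a different argument.
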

\begin{remark}
    If $F$ is trace-preserving then this implies that for any $p$-chain $\derf$, $F(\derf)$ is a $p$-chain that is confined on $X$ if $F$ is.
\end{remark}
\begin{proof}
    For any site $\ell$ and any $R>0$ we have
    \begin{align}
        \sum_{k\in B_\ell(r)^c}h_2(d(\ell,k)) \le \sum_{R > r}(2R)^dh_2(R) := h_3(r).
    \end{align}
    where evidently $h_3$ decays superpolynomially if $h_2$ does. It follows that for any bounded set $Y \subset \Z^d$ we have
    \begin{align}
        \sum_{k\in X}h_2(d(k,Y))\le |Y|h_3(d(X,Y)). \label{eqn:alobserv1}
    \end{align}
    Since $A$ is $h_1$-confined at $j$ we have $\|A - \overline{\tr}_{B_j(r)^c}A\|\le 2h_1(r)$. Defining $A_r:= \overline{\tr}_{B_j(r)^c}A- \overline{\tr}_{B_{j}(r-1)^c}A$ for $r\ge 1$ and $A_0:= \overline{\tr}_{\{j\}^c}A-\overline{\tr}(A)\boldsymbol{1}$
    we have  
    \begin{align}
        \|A_r\| &= \|\overline{\tr}_{B_j(r)^c}(A - \overline{\tr}_{B_{j}(r-1)^c}A)\|\nonumber\\
        &\le 2h_1(r-1)
    \end{align}
    for $r\ge 1$ while $\|A_0\|\le 2\|A\|$. Writing $R:= d(j,X)$, we have
    \begin{align}
        \|\derF(A)\| &\le \sum_{r\ge 0}\|\derF(A_r)\| \nonumber\\
        &\le \|\derF(A_0)\| + \sum_{1\le r < R}\|\derF(A_r)\| + \sum_{r\ge R}\|\derF(A_r)\| \nonumber\\
        &\le h_2(R)\|A\| + \sum_{1\le r < R}\sum_{k\in A}h_3(d(k,B_j(r)))\|A_r\| + \sum_{r\ge R}C (2r)^d h_1(r-1) \nonumber\\
        &\le h_2(R)\|A\| +\sum_{1\le r < R}(2r)^d h_1(r-1) h_3(R-r) + \sum_{r\ge R}C(2r)^d h_1(r-1), \label{eqn:alobserv2}
    \end{align}
    where we used (\ref{eqn:alobserv1}) in the last line.
\end{proof}

\begin{proof}[Proof of Proposition \ref{prop:localization-properties}]
    The proofs involving $\mathsf{a}\in C^p$ for some $p\ge 0$ will need to be split into cases according to whether $p=0$ or $p>0$, i.e. whether $\mathsf{a}$ is a derivation or a chain. 
    
    $i)$: When $\mathsf{a}\in C^1$ this follows from Lemma \ref{lem:confined-1-chain}. When $\mathsf{a}\in C^p$ for $p>1$ it follows from Lemma \ref{lem:decaying-norm}.
    
    $ii)$: Consider first the case $p>0$ and let $\derf$ be a $p$-chain that is $h_1$-confined on $X$ and $h_2$-confined on $X'$. Since $Y$ is a stable intersection of $X$ and $X'$ there is a $c>0$ such that $d(j,Y) \le c\max(d(j,X),d(j,X'))$. For any $j_1\hdots, j_k\in \Z^d$ and any $1\le \ell \le k$ we have
    \begin{align}
        \|f_{j_1,\hdots, j_k}\| &\le \min(h_1(d(j_\ell, X)), h_2(d(j_\ell,X')))\nonumber\\
        &\le g(\max(d(j_\ell, X),d(j_\ell,X')))\nonumber\\
        &\le g(c^{-1}d(j,Y))
    \end{align}
    where $g = \max(h_1,h_2)$. Next, if $p=0$ (ie. $\mathsf{a}$ is a derivation) the result follows from Lemma \ref{lem:alobserv} and the $p=1$ case.

    $iii)$: Suppose first that $\mathsf{a} \in C^p$ and $\mathsf{b}\in C^q$ for $p,q>0$. Then the bound $\|[\mathsf{a}_{j_1,\hdots,j_p}, \mathsf{b}_{j_{p+1},\hdots,j_{p+q}}]\| \le 2\|\mathsf{a}_{j_1,\hdots,j_p}\|\|\mathsf{b}_{j_{p+1},\hdots,j_{p+q}}\|$ shows that $\|\{\mathsf{a},\mathsf{b}\}_{j_1,\hdots,j_{p+q}}\|$ decays superpolynomially with both $d(j_i,X)$ and $d(j_i,X'))$ for any $1\le i \le p+q$, and so by part $i)$ it decays polynomially with $d(j_i,Y)$.
    
    Next, suppose $p=0$ and $q>0$. Then by Lemma \ref{lem:alobserv} and Proposition D.4 in \cite{LocalNoether}, $\|\{\mathsf{a},\mathsf{b}\}_{j_1\hdots, j_q}\|$ decays superpolynomially with both $d(j_i,X)$ and $d(j_i,X'))$ for any $1\le i \le q$, so again by part $i)$ we are done.

    Finally if $p=q=0$ then the result follows from the $p=0,q=1$ case together with Lemma \ref{lem:locder}.
\end{proof}
    Proposition \ref{prop:localization-LGAs} now follows easily from Proposition \ref{prop:localization-properties}.
\begin{proof}[Proof of Proposition \ref{prop:localization-LGAs}]
    The fact that $\alpha^{-1}d\alpha$ is smoothly confined on $X$ follows from the explicit expression (\ref{eqn:ada}).
    The second fact follows from the expression
    \begin{align}
        \alpha(\derF)-\derF = \int_0^1 \alpha_s(\iota_{\frac{\partial}{\partial s}}\derG(\derF))ds. \label{eqn:integral-alpha}
    \end{align}
    By Proposition \ref{prop:localization-properties} the integrand is confined on $X$, so $\alpha(\derF)-\derF$ is confined there too. Differentiating the equation (\ref{eqn:integral-alpha}) and using the fact that $\alpha^{-1}d\alpha$ and $\derG$ are confined on $X$, together with Proposition \ref{prop:localization-properties}, shows that the partial derivatives of $\alpha(\derF)-\derF$ are also confined on $X$. 
\end{proof}
We now move on to proving Proposition \ref{prop:smooth-lift}.
\begin{lemma}\label{lem:inner-bounded}
    Suppose $\derF\in \mfkDal$ is $h$-confined on a bounded set $X\subset \Z^d$. Then the sum $A:=\sum_{Z}\derF^Z$ is absolutely convergent in $\mathscr{A}_{al}$ and for any $j\in X$ we have $\|A\|^{br}_{j, \alpha} \le (1+\diam(X))^\alpha(4^d + C_{\alpha}\|\derF\|_{2\alpha+d})$ for some constants $C_{\alpha}$ that depend only on $\alpha$ and $h$.
\end{lemma}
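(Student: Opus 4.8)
The plan is to obtain both assertions from the brick-norm estimate in Lemma~\ref{lem:locder}, together with an elementary count of bricks. Fix $j\in X$ once and for all. Each summand $\derF^Z$ is, by construction, a traceless skew-adjoint element of $\mathscr{A}_Z$ lying in the $\overline{\tr}$-orthogonal complement of the observables supported on bricks strictly contained in $Z$; consequently the brick decomposition of the inner derivation generated by $\derF^Z$ is concentrated at the single brick $Z$. Therefore, for any traceless observable of the form $B=\sum_W\derF^W$ with the sum absolutely convergent in $\mathscr{A}_{al}$, one has $(\ad_B)^W=\derF^W$ for every brick $W$, so that
\begin{align*}
\|B\|^{br}_{j,\alpha}=\sup_W\|\derF^W\|\,(1+\diam(\{j\}\cup W))^\alpha .
\end{align*}
This will apply to $B=A$ once convergence is established, and it is the single per-brick quantity $\|\derF^Z\|(1+\diam(\{j\}\cup Z))^\alpha$ that we must control.

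To that end, note that since $j\in X$, for any $k\in Z$ one has $d(j,k)\le\diam(X)+d(X,Z)+\diam(Z)$, hence $\diam(\{j\}\cup Z)\le\diam(X)+d(X,Z)+\diam(Z)$, and by submultiplicativity
\begin{align*}
(1+\diam(\{j\}\cup Z))^\alpha\le(1+\diam(X))^\alpha(1+d(X,Z))^\alpha(1+\diam(Z))^\alpha .
\end{align*}
Because $\derF$ is $h$-confined on $X$, the proof of the implication $i)\Rightarrow ii)$ in Lemma~\ref{lem:locder} supplies, for each $\alpha$, the bound $\|\derF^Z\|(1+d(X,Z))^\alpha(1+\diam(Z))^\alpha\le 4^d+C_\alpha\|\derF\|_{2\alpha+d}$ valid for every brick $Z$, where $C_\alpha=\sup_{r\ge0}(1+r)^{2\alpha}\sum_{s\ge r}2^ds^dh(s)$ depends only on $\alpha$ and $h$. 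Combining the last two displays gives the per-brick estimate $\|\derF^Z\|(1+\diam(\{j\}\cup Z))^\alpha\le(1+\diam(X))^\alpha(4^d+C_\alpha\|\derF\|_{2\alpha+d})$.

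Absolute convergence of $\sum_Z\derF^Z$ in $\mathscr{A}_{al}$ amounts to $\sum_Z\|\derF^Z\|^{br}_{j,\alpha}<\infty$ for every $\alpha$, which follows by running the same estimate with an inflated exponent: Lemma~\ref{lem:locder}~$ii)$ with exponent $\alpha+\gamma$ gives $\|\derF^Z\|(1+d(X,Z))^\alpha(1+\diam(Z))^\alpha\le C_{\alpha+\gamma}(1+d(X,Z))^{-\gamma}(1+\diam(Z))^{-\gamma}$, so with the geometric bound
\begin{align*}
\sum_Z\|\derF^Z\|^{br}_{j,\alpha}\le(1+\diam(X))^\alpha C_{\alpha+\gamma}\sum_Z(1+d(X,Z))^{-\gamma}(1+\diam(Z))^{-\gamma}.
\end{align*}
A brick $Z$ with $\diam(Z)=r$ and $d(X,Z)=s$ is contained in $X(r+s)$, which has diameter at most $\diam(X)+2(r+s)$; since a set of diameter $m$ in $\Z^d$ contains $O(m^{2d})$ bricks, the number of such $Z$ is at most $O((\diam(X)+r+s)^{2d})\le O\!\big((1+\diam(X))^{2d}(1+r)^{2d}(1+s)^{2d}\big)$. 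Hence the sum is dominated by $O\!\big((1+\diam(X))^{\alpha+2d}\big)\sum_{r,s\ge0}(1+r)^{2d-\gamma}(1+s)^{2d-\gamma}$, which is finite once $\gamma>2d+1$. Completeness of the Fr\'echet space $\mathscr{A}_{al}$ then gives $A=\sum_Z\derF^Z\in\mathscr{A}_{al}$ (and $A$ is traceless, since $\overline{\tr}$ is continuous), and applying the first paragraph to $B=A$ while taking the supremum over $Z$ in the per-brick estimate yields $\|A\|^{br}_{j,\alpha}\le(1+\diam(X))^\alpha(4^d+C_\alpha\|\derF\|_{2\alpha+d})$.

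The argument is almost entirely bookkeeping; the one point that needs genuine care is to verify, by inspecting the proof of Lemma~\ref{lem:locder}, that the constant multiplying $\|\derF\|_{2\alpha+d}$ depends only on $\alpha$ and the confinement function $h$ and not on $\derF$ itself --- this is exactly why the linear term is isolated rather than absorbed into the constant --- and to choose the auxiliary exponent $\gamma$ large enough to dominate the polynomial count of bricks.
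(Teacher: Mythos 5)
Your proof is correct and follows essentially the same route as the paper's: it extracts the quantitative constant $4^d+C_\alpha\|\derF\|_{2\alpha+d}$ from the proof of Lemma~\ref{lem:locder}, trades the $(1+d(X,Z))(1+\diam(Z))$ decay for decay in $\diam(\{j\}\cup Z)$ at the cost of a factor $(1+\diam(X))^\alpha$ using $j\in X$, establishes absolute convergence by a brick count, and identifies the brick decomposition of $A$ with that of $\derF$ for the final seminorm bound. The only differences are cosmetic (a slightly more explicit count of bricks indexed by diameter and distance), so nothing further is needed.
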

\begin{proof}
    From the proof of Lemma \ref{lem:locder} we have constants $C'_{\alpha}$ depending only on $h$ and $\derF$ such that
    \begin{align}
        \|\derF^Z\| &\le C'_\alpha(1+\diam(Z))^{-\alpha}(1+d(X,Z))^{-\alpha}.
    \end{align}
    In fact, from the proof of Lemma \ref{lem:locder}, we see that these constants are of the form $C'_\alpha = 4^d + C_\alpha\|\derF\|_{2\alpha+d}$ for some constants $C_\alpha$ depending only on $h$. 
    
    For any $j\in X$ we have an inclusion $X\subset B_{j}(R)$, where we denoted $R=\diam(X)$ for brevity. Therefore  
    \begin{align}
        \|\derF^Z\| &\le C'_\alpha (1+\diam(Z))^{-\alpha} (1+d(Z,X))^{-\alpha}\nonumber\\
        &\le C'_\alpha(1+\diam(Z)+ d(Z,X))^{-\alpha\nonumber}\\
        &\le C'_\alpha (1+\diam(Z)+\max(0,d(j,Z)-R))^{-\alpha}\nonumber\\
        &\le (1+R)^\alpha C'_\alpha (1+\diam(Z)+d(j,Z))^{-\alpha}.
    \end{align}
    From this bound, and the fact that for sufficiently large $\alpha$ we have $\sum_{X}(1+\diam(X\cup\{j\})^{-\alpha}<\infty$ (the sum being over all bricks $X$) it follows that $\sum_Z\derF^Z$ is absolutely convergent in $\mathscr{A}_{al}$. Furthermore, 
    \begin{align}
        \|A^Z\|(1+\diam(\{j\}\cup Z))^\alpha=\|\derF^Z\|(1+\diam(\{j\}\cup Z))^\alpha\leq C'_\alpha (1+\diam(X))^\alpha .
    \end{align}
\end{proof}
We are now ready to prove Proposition \ref{prop:smooth-lift}.
\begin{proof}[Proof of Proposition \ref{prop:smooth-lift}]
Suppose $\derF = \ad_{A}$ for an antiselfadjoint $A\in \Omega^\bullet(M,\mathscr{A}_{al})$, and let $X = \{0\}\subset \Z^d$. Regarding $A$ as a 1-chain $\derf$ with $f_j$ equal to $A$ if $j=0$ and $0$ otherwise and applying Proposition \ref{prop:localization-properties} $i)$ we find that $\derF$ is smoothly confined on $X$.

Conversely, suppose $\derF$ is smoothly confined on $\derF$. Since this is a local statement we may assume without loss that $M=\R^n$. By Lemma \ref{lem:inner-bounded}, for each $x\in \R^n$ and each multi-index $\mu$ the sum $A_\mu(x):=\sum_Z\partial^\mu\derF^Z\in\mathscr{A}_{al}$ is well-defined and for any $\mu,\alpha,j$ the seminorm $\|A_\mu(x)\|_{j,\alpha}$ is a continuous function of $x$.

To show that $A\in C^\infty(U,\mathscr{A}_{al})$ it suffices to show that for any $\mu$ and any $0\le i \le n$ the equation $\partial_i A_\mu = A_{\mu+i}$ holds in $C^\infty(U,\mathscr{A}_{al})$.
For any brick $Z$ and any $\mu$ the expression $(\partial^\mu\derF)^Z = \partial^\mu(\derF^Z)$ is a smooth function $\R^n \to\mathscr{A}_{Z}$, and so for any $x\in \R^n$, $1\le i \le n$, and $h>0$ we have
\begin{align}
    \left\| \frac{\partial^\mu\derF^Z(x+he_i) - \partial^\mu\derF^Z(x)}{h} - \partial^{\mu+i}\derF^Z(x)\right\| &\le 
    \frac{1}{h}\int_{h_0=0}^hdh_0\|\partial^{\mu+i}\derF^Z(x+h_0e_i) - \partial^{\mu+i}\derF^Z(x)\|\nonumber\\
    &\le \sup_{0\le h_0\le h}\|\partial^{\mu+i}\derF^Z(x+h_0e_i) - \partial^{\mu+i}\derF^Z(x)\|\nonumber\\
    &\le h\sup_{0\le h_0\le h}\|\partial^{\mu+2i}\derF^Z(x+h_0e_i)\|.
\end{align}
and so for any $j\in \Z^d$ we have
\begin{align}
    \left\| \frac{A_\mu(x+he_i) - A_\mu(x)}{h} - A_{\mu+i}\right\|^{br}_{j,\alpha} \le h \sup_{0\le h_0\le h} \left\| A_{\mu+2i}(x+h_0e_i)\right\|^{br}_{j,\alpha},
\end{align}
which approaches $0$ as $h\to 0$.
\end{proof}
We conclude this appendix with a lemma which we will often use to create derivations that interpolate on the lattice between one derivation and another.
\begin{lemma}\label{lem:connection-interpolation}
    Suppose $\psi$ is a gapped family of states on $M$ that is parallel with respect to both $\derG_1 \in \Omega^1(M, \mfkDal)$ and $\derG_2 \in \Omega^1(M, \mfkDal)$. Then for any $X\subset \Z^d$ there exists $\derG_3\in \Omega^1(M,\mfkDal)$ such that $\psi$ is parallel with respect to $\derG_3$, and $\derG_3$ (resp. $\derF_{\derG_3}$) smoothly interpolates between $\derG_1$ (resp. $\derF_{\derG_1}$) on $X$ and $\derG_2$ (resp. $\derF_{\derG_2}$)on $X^c$. If in addition $\derG_1$ and $\derG_2$ are both smoothly confined on some $Y\subset \Z^d$, then $\derG_3$ and $\derF_{\derG_3}$ are smoothly confined there too.
\end{lemma}
\begin{proof}
    Since $\psi$ is parallel with respect to both $\derG_1$ and $\derG_2$, their difference lies in $\Omega^1(M,\mfkDal^\psi)$. Define
    \begin{align}
        \derG_3 &:= \derG_1 - \partial\res_{X^c}(h^\psi(\derG_1-\derG_2))\\
            &= \derG_2 - \partial\res_{X}(h^\psi(\derG_2-\derG_1)).
    \end{align}
    Since $\res_{X^c}(h^\psi(\derG_1-\derG_2))$ (resp. $\res_{X}(h^\psi(\derG_2-\derG_1))$) is smoothly confined on $X$ (resp. $X^c$), by Proposition \ref{prop:localization-properties} $i)$ and $ii)$, $\derG_3$ interpolates between $\derG_1$ on $X$ and $\derG_2$ on $X^c$.

    Next, we have
    \begin{align}
        \derF_{\derG_3} &= \derF_{\derG_1} - \partial\res_{X^c} D_\derG(h^\psi(\derG_1-\derG_2))\\
        &= \derF_{\derG_2} - \partial\res_{X} D_\derG (h^\psi(\derG_2-\derG_1)).
    \end{align}
    by the same reasoning as above $\derF_{\derG_3} -\derF_{\derG_1}$ (resp. $\derF_{\derG_3} -\derF_{\derG_2}$) is smoothly confined on $X^c$ (resp. $X$). 
\end{proof}

\bibliographystyle{plain} 
\bibliography{Bibliography.bib} 

\end{document}